\newcommand{\lb}{\left (}
\newcommand{\rb}{\right )}
\newcommand{\pb}{\bar{p}}
\newcommand{\pt}{\tilde{p}}
\newcommand{\ext}{\mathrm{ext}}
\newcommand{\Fc}{\mathcal{F}}
\newcommand{\cV}{\mathcal{V}}
\newcommand{\Z}{\mathbb{Z}}
\newcommand{\mrq}{\mathrm{q}}
\newcommand{\Dop}{\mathcal{D}}
\DeclareMathOperator{\Res}{Res}
\newtheorem{theorem}{Theorem}[section]
\newtheorem{proposition}[theorem]{Proposition}
\newtheorem{corollary}[theorem]{Corollary}
\newtheorem{lemma}[theorem]{Lemma}
\theoremstyle{definition}
\newtheorem{notation}[theorem]{Notation}
\newtheorem{remark}[theorem]{Remark}
\newtheorem{definition}[theorem]{Definition}
\newtheorem{convention}[theorem]{Convention}
\numberwithin{equation}{section}
\newcommand{\p}[0]{\partial}
\def\bigcovac{\big\langle 0\, \big |}
\def\bigvac{\big | \, 0 \big \rangle}
\def\vac{ | 0 \rangle}
\def\cF{\mathcal{F}}
\def\cQ{\mathcal{Q}}
\def\cS{\mathcal{S}}
\def\cW{\mathcal{W}}
\def\halfZ{\mathbb{Z}+{\textstyle \frac 12 }}
\title[Topological recursion for the extended Ooguri--Vafa partition function]{Topological recursion for the extended Ooguri--Vafa partition function of colored HOMFLY-PT polynomials of torus knots}
\author[P.~Dunin-Barkowski]{Petr~Dunin-Barkowski}
\address[P.~Dunin-Barkowski]{Faculty of Mathematics, National Research University Higher School of Economics, Usacheva 6, 119048 Moscow, Russia; HSE--Skoltech International Laboratory of Representation Theory and Mathematical Physics, Skoltech, Nobelya 1, 143026, Moscow, Russia; and ITEP, 117218 Moscow, Russia}
\email{ptdunin@hse.ru}
\author[M.~Kazarian]{Maxim~Kazarian}
\address[M.~Kazarian]{Faculty of Mathematics, National Research University Higher School of Economics, Usacheva 6, 119048 Moscow, Russia; and Center for Advanced Studies, Skoltech, Nobelya 1, 143026, Moscow, Russia}
\email{kazarian@mccme.ru}
\author[A.~Popolitov]{Aleksandr~Popolitov}
\address[A.~Popolitov]{Institute for Information Transmission Problems, Moscow 127994, Russia; and ITEP, Moscow 117218, Russia; and Moscow Institute of Physics and Technology, Dolgoprudny 141701, Russia}
\email{popolit@gmail.com}
\author[S.~Shadrin]{Sergey~Shadrin}
\address[S.~Shadrin]{Korteweg-de Vries Institute for Mathematics, University of Amsterdam, Postbus 94248, 1090 GE Amsterdam, The Netherlands}
\email{s.shadrin@uva.nl}	
\author[A.~Sleptsov]{Alexey~Sleptsov}
\address[A.~Sleptsov]{ITEP, Moscow 117218, Russia; Institute for Information Transmission Problems,	Moscow 127994, Russia; and Moscow Institute of Physics and Technology, Dolgoprudny 141701, Russia}
\email{sleptsov@itep.ru}
\begin{document}

\begin{abstract} We prove that 
	topological recursion applied to the spectral curve of colored HOMFLY-PT polynomials of torus knots reproduces the $n$-point functions of a particular partition function called the \emph{extended Ooguri--Vafa partition function}. This generalizes and refines the results of Brini--Eynard--Mari\~no and Borot--Eynard--Orantin. 
	
We also discuss how the statement of spectral curve topological recursion in this case fits into the program of Alexandrov--Chapuy--Eynard--Harnad of establishing the topological recursion for general weighted double Hurwitz numbers 
partition functions (a.k.a. KP tau-functions of hypergeometric type). 
\end{abstract}

\maketitle

\tableofcontents

\section{Introduction}

The spectral curve 
topological recursion for colored $SU(N)$ torus knot invariants was first proposed by Brini--Eynard--Mari\~{n}o in~\cite{BEM11}. In that paper they investigated a conjectural equivalence (mirror symmetry) between A and B models of open topological string theory. They proposed a B-model geometry, which generalized the previously obtained results for a colored framed unknot and made it possible to do explicit calculations in this geometry. The proposed procedure assigns the data of a spectral curve to each torus knot $T[Q,P]$ and allows one to calculate its $SU(N)$ invariants using topological recursion. The topological recursion was derived from the loop equations of the matrix model that precisely describes these knot invariants, originally proposed in \cite{lawrence1999witten, dolivet2007chern, marino2005chern}. Although this derivation of topological recursion in~\cite{BEM11} should rather be considered a conjecture, soon the development of the matrix model technique allowed Borot--Eynard--Orantin to prove the statement of Brini--Eynard--Mari\~no, see \cite{BEO} (see also~\cite{BorotEynardWeisse,BorotGuionnetKozlowski} for clarification of some further technical details).

To be more precise, we recall that the spectral curve topological recursion (originally studied in the matrix model context for partular matrix models in e.g. \cite{AMM-1,AMM-2,CE1,CE2,CEO} and then formally defined for arbitrary abstract spectral curves in~\cite{EynardOrantin,OrantinThesis}, see also \cite{EynardSurvey}) is a recursive procedure that in its simplest version associates to a Riemann surface $\Sigma$ with some extra data (the so-called spectral curve initial data) symmetric meromorphic $n$-differentials $\omega_{g,n}$ defined on $\Sigma^n$. The conjecture of Brini--Eynard--Mari\~{n}o / theorem of Borot--Eynard--Orantin states that these differentials contain the information about the Ooguri--Vafa partition function $Z$ for the torus knot $T[Q,P]$: 
\begin{equation}
Z=Z(\mrq,A\, |\, \bar p) \coloneqq \sum_R  H_R(T[Q,P];\mrq,A) \, s_R(\pb)
\end{equation}
(we explain the ingredients of this formula and recall all necessary definitions below). This partition function is the generating function of the colored $SU(N)$ invariants (related to the colored HOMFLY polynomials $H_R(T[Q,P];\mrq,A)$ by specializing the variable $A = \mrq^N$) of the knot $T[Q,P]$ and describes the expansion of the Chern-Simons theory around the trivial flat connection.

This statement, however, leaves a feeling of dissatisfaction, since one has to cut off plenty of information contained in the differentials $\omega_{g,n}$ via some symmetrization procedure to get the coefficients of the Ooguri--Vafa partition function, and it is by no means a natural thing to do from the point of view of the topological recursion theory. On the other hand, it was observed in~\cite{DPSS} that the Ooguri--Vafa function has the same drawback from the point of view of the theory of KP integrability. Namely, following the ideas of~\cite{mironov2013character,MMS13} it was shown in~\cite{DPSS} that there exists a very natural extension of the Ooguri--Vafa parition function to a tau-function of KP hierarchy of hypergeometric type~\cite{KMMM95,OrlovScherbin}, more precisely, to a reparametrized generating function of double Hurwitz numbers. Note that in order to get the Ooguri--Vafa parition function from the extended Ooguri--Vafa parition function one has to cut off plenty of coefficients via exactly the same symmetrization procedure as the one applied to $\omega_{g,n}$ in the topological recursion setup. 

These considerations led to a natural conjecture, posed in~\cite{DPSS}, that related full expansions of the correlation differentials $\omega_{g,n}$ constructed via topological recursion from the Brini--Eynard--Mari\~no spectral curve data with the extended Ooguri--Vafa partition functions. A thorough combinatorial analysis of the latter function performed in~\cite{DPSS} confirmed many expected properties that would follow from this conjecture. In fact, the results of~\cite{DPSS} reduced this conjecture to the so-called quadratic loop equation~\cite{BEO,BS}, and its proof (and, as an immediate corollary, a proof of the topological recursion for the extended Ooguri--Vafa partition function) is the main result of this paper.

There are also additional broader contexts where the results of our paper are potentially interesting. One is the general approach to the topological recursion for the hypergeometric KP tau-functions (generating the so-called weighted Hurwitz numbers), due to Alexandrov--Chapuy--Eynard--Harnad. The hypergeometric tau-functions are forced to depend on an additional parameter $\hbar$ that controls the genus expansion and according to~\cite{APSZ} adapts the hypergeometric tau-functions to the theory of $\hbar$-quantization of the dispersionless KP hierarchy~\cite{TT,NZ}.
We prove that the Brini--Eynard--Mari\~no spectral curve and the extended Ooguri--Vafa partition function fit into the Alexandrov--Chapuy--Eynard--Harnad framework once we allow to consider the $\hbar^2$-deformations of the initial data of hypergeometric tau-functions. This makes the case of the Brini--Eynard--Mari\~no spectral curve close to the case of the so-called $r$-spin Hurwitz numbers, where the effect of $\hbar^2$-deformation of the initial data of hypergeometric tau-functions is also present~\cite{BKLPS,DKPS}, and it raises immediately a lot of open questions on the general role of the $\hbar^2$-deformations in the Alexandrov--Chapuy--Eynard--Harnad framework for the topological recursion for weighted Hurwitz numbers.
These questions are beyond the scope of this paper, but they are considered and partly answered in~\cite{BDKS,BDKS-2}.


Another context is as follows: 
there is a mirror symmetry statement of remodelling type proved recently by Fang--Zong in~\cite{FangZong}. They show that the open Gromov--Witten invariants of the resolved conifold with the special choice of Langrangian corresponding to a torus knot~\cite{DiaconescuShendeVafa} are contained in the aforementioned differentials $\omega_{g,n}$, and are obtained by exactly the same symmetrization procedure as the colored HOMFLY-PT polynomials (and thus coincide with the colored HOMFLY-PT polynomials of the torus knot used in the construction of the Lagrangian).
To this end it is very interesting whether the extended Ooguri--Vafa partition function has any geometric counterpart. At the moment neither of the two geometric approaches considered in~\cite{FangZong} (localization formula for the space of maps of bordered Riemann surfaces and relative Gromov--Witten theory) has any obvious refinement that would match the extended Ooguri--Vafa partition function. However, what we get for free is the explicit form of the ELSV-type formula for the $n$-point functions of the Ooguri--Vafa partition function, cf.~\cite[p.~35]{FangZong} (we recall that there is always an ELSV-type formula for the expansions of the correlations differentials obtained via topological recursion in terms of the tautological classes and cohomological field theories on the moduli space of curves~\cite{Eynard,DOSS}).

\subsection{Organization of the paper} Section~\ref{sec:definitionsandtheorems} contains all preliminaries on the formalism of free fermions that we use throughout the paper, the definition of the (extended) Ooguri--Vafa partition functions, a detailed explanation of the statement on topological recursion, its proof modulo the technical results from the rest of the paper, and a discussion of the embedding of the results that we obtain into the Alexandrov--Chapuy--Eynard--Harnad framework for topological recursion for general weighted Hurwitz numbers. 

Section~\ref{sec:EnergyOperator} discusses an evolutionary equation for the extended Ooguri--Vafa partition function obtained from the energy operator, and the analytic extension of the resulting equation for the $n$-point functions of the extended Ooguri--Vafa partition function to the whole spectral curve of Brini--Eynard--Mari\~no. 

In Section~\ref{sec:QLE} we use the equations obtained in the previous section to prove the quadratic loop equations, which is the main technical result of this paper used in Section~\ref{sec:definitionsandtheorems} in the proof of the topological recursion.

\subsection{Acknowledgments} The authors would like to thank B.~Bychkov and A.~Mironov for useful discussions and B.~Fang for useful correspondence. 

S.~S. was supported by the Netherland Organization for Scientific Research.
P.D.-B. and A. S. were supported by the Russian Science Foundation
(Grant No. 20-71-10073).

This project has started when S.~S. was visiting the Faculty of Mathematics at the National Research University Higher School of Economics and the Laboratory of Mathematical and Theoretical Physics at the Moscow Institute of Science and Technology, and S.~S. would like to thank both institutions for warm hospitality and stimulating research atmosphere.

\section{Extended Ooguri--Vafa partition function and topological recursion} \label{sec:definitionsandtheorems}

In this Section, we first briefly recall some basic definitions and facts from the semi-infinite wedge formalism, since we use it 
below in various definitions and computations. Then we recall the definition of the extended Ooguri--Vafa partition function from~\cite{DPSS} that contains, in particular, the full information about the colored HOMFLY-PT polynomials of the torus $(P,Q)$ knot. We also recall the conjecture on the spectral curve 
topological recursion for the extended Ooguri--Vafa partition function posed in~\cite{DPSS}. We explain how this conjecture fits into the general proposal of Alexandrov--Chapuy--Eynard--Harnad~\cite{ACEH-1,ACEH-2} on topological recursion for the partition functions of weighted double Hurwitz numbers (also known as KP tau functions of hypergeometric type). This allows us to formulate (in Subsection \ref{sec:reform}) the main theorem of this paper (the preceding subsections can be seen as the motivation for it), which is a reformulation in ACEH terms of the above-mentioned conjecture. 
Finally, in Subsection \ref{sec:ProofOfTR}, we present a proof of the main theorem 
modulo the technical results on the loop equations obtained in the rest of the paper. 
 
\subsection{Semi-infinite wedge formalism} \label{sec:sinfw}
Let us briefly recall some basic facts from the semi-infinite wedge formalism which is going to be used in what follows (for more details see~\cite{Kac,MiwaJimboDate}).

Let $V$ be an infinite dimensional vector space with a basis labeled by the half integers; denote the basis vector labeled by $m/2$ by $\underline{m/2}$, so $V = \bigoplus_{i \in \Z + \frac{1}{2}} \underline{i}$.
The semi-infinite wedge space $\cV$ is spanned by all wedge products of the form
\begin{equation}\label{wedgeProduct}
	\underline{i_1} \wedge \underline{i_2} \wedge \cdots
\end{equation}
for any decreasing sequence of half integers $(i_k)$ such that there is an integer $c$ (called the charge) with $i_k + k - \frac{1}{2} = c$ for $k$ sufficiently large. We denote the inner product for which this basis is orthonormal by~$(\cdot,\cdot)$. The zero charge subspace $\cV_0\subset\cV$ of the semi-infinite wedge space is spanned by the vectors $v_\lambda$ indexed by integer partitions $\lambda=(\lambda_1\geq\cdots\geq \lambda_\ell)$, $\ell\geq 0$, defined as 
\begin{equation}
	v_\lambda \coloneqq \underline{\lambda_1 - \frac{1}{2}} \wedge \underline{\lambda_2 - \frac{3}{2}} \wedge \underline{\lambda_3 - \frac{5}{2}} \wedge\cdots\,.
\end{equation}	
In particular, the vector $v_{\emptyset} = \underline{-\frac{1}{2}} \wedge \underline{-\frac{3}{2}} \wedge \cdots \in \cV_0$ that corresponds to the empty partition is called the vacuum vector and usually denoted by $|0\rangle$

For an operator $\mathcal{P}$ on $\cV_0$ we define the \emph{vacuum expectation value} of~$\mathcal{P}$ by
$\left\langle \mathcal{P} \right\rangle := \langle 0 |\mathcal{P}|0\rangle$,
where $\langle 0 |$ is the dual of the vacuum vector with respect to the inner product~$(\cdot,\cdot)$ and called the covacuum vector. We will also refer to these vacuum expectation values as (disconnected) \emph{correlators}.
 
For $k\in\mathbb{Z}+\frac{1}{2}$ define the operator $\psi_k\colon \cV\to\cV$ by
$\psi_k \colon (\underline{i_1} \wedge \underline{i_2} \wedge \cdots) \ \mapsto \ (\underline{k} \wedge \underline{i_1} \wedge \underline{i_2} \wedge \cdots)$. It increases the charge by $1$. The operator $\psi_k^*$ is defined to be the adjoint of the operator $\psi_k$ with respect to the inner product~$(\cdot,\cdot)$. Then for $i,j\in\mathbb{Z}+\frac{1}{2}$ define the operator $E_{ij}$ as $E_{ij}\coloneqq{:}\psi_i \psi_j^*{:}$, where the normal order product ${:}\psi_i \psi_j^*{:}$ is defined as follows:
\begin{equation}\label{eq:Eoperdef}
	{:}\psi_i \psi_j^*{:}\ \coloneqq \begin{cases}\psi_i \psi_j^* & \text{ if } j > 0\ ; \\
		-\psi_j^* \psi_i & \text{ if } j < 0\ .\end{cases} 
\end{equation}
The operator $E_{ij}$ does not change the charge and can be restricted to $\cV_0$. 

We will also need the following operators:
\begin{align}		
	&{\mathcal F}_1 := \sum_{k\in\Z+\frac12} k E_{k,k};
	&&{\mathcal F}_2 := \sum_{k\in\Z+\frac12} \frac{k^2}{2} E_{k,k};
	&&\alpha_m:=\sum_{k \in \Z + \frac{1}{2}} E_{k-m,k},\quad m\in \mathbb{Z}.
\end{align}


\subsection{Extended Ooguri--Vafa partition function}

Let us briefly recall the definition of the extended Ooguri--Vafa partition function, following \cite{DPSS} (we refer the reader to that paper for more details and motivations).

\subsubsection{Rosso--Jones formula and Ooguri--Vafa parition function}
	Let $T[Q,P]$ be a torus knot. Then, for a given Young diagram $R$, its $R$-colored HOMFLY-PT polynomial in the spectral framing is defined by the following variation of the Rosso--Jones formula \cite{RossoJones}:
	\begin{equation}
		H_R(T[Q,P];  \mrq,A)	:= A^{P|R|} \cdot 
		\sum_{R_1 \vdash Q|R|} c^{R_1}_R\, \mrq^{2\kappa_R \frac{P}{Q}}\, s^*_{R_1}(\mrq,A) \ ,
	\end{equation}
	where $|R|$ is the size of Young diagram $R$ (the number of boxes in it); the functions $s_R=s_R(p_1,\dots,p_{|R|})$ are the Schur polynomials written in $p$-variables, and $s^*_R:=s_R|_{\forall i\, p_i=p^*_i}$, with
\begin{align}
\label{toplocus}
p^*_i := \frac{A^i - A^{-i}}{\mrq^{i} - \mrq^{-i}} \,;
\end{align} 
	the coefficients $c^{R_1}_R$ are integer numbers determined by the following equation:
	\begin{align}\label{adams}
		s_R(p_{Q},p_{2Q},\dots,p_{(|R|-1)Q},p_{|R|Q})
		= \sum_{R_1 \vdash Q|R|}
		c^{R_1}_{R} \cdot s_{R_1}(p_{1},p_2,\dots,p_{|R_1|});
	\end{align}
and $\kappa_R\coloneqq \sum_{(i,j)\in R} (i-j)$ is the eigenvalue of $\cF_2$ on $v_R$ (the summation runs over the boxes in the Young diagram of the partition $R$).

Now for a moment let us omit the restriction 
\eqref{toplocus}  and define the extended colored HOMFLY-PT polynomial: 
	\begin{equation}
		H_R(T[Q,P];p)	:= A^{P|R|} \cdot \sum_{R_1 \vdash Q|R|} c^{R_1}_R\, \mrq^{2\kappa_R \frac{P}{Q}}\, s_{R_1}(p) \ .
	\end{equation}
Here when writing $p$ in the arguments we assume the whole sets of variables $p_1,\dots,p_{|R|}$ and $p_1,\dots,p_{|R_1|}$ on the LHS and on the RHS, respectively.

	The Ooguri--Vafa partition function~\cite{OV} for the torus knot $T[Q,P]$ depends on two sets of variables, $p=\{p_1,p_2,\dots \}$ and $\pb = \{\pb_1,\pb_2,\dots\}$, and is defined as
	\begin{align} 
		\label{eq:ov-partition-function}
		Z(p, \pb) := \sum_R  H_R(T[Q,P];p) \, s_R(\pb) \ .
	\end{align}
The sum goes over all Young diagrams of all sizes, including the empty one (in that case the corresponding term in the sum is just equal to $1$); and by writing $\pb$ in the argument of $s_R$ on the RHS we mean the restriction of the whole infinite set of variables $\{\pb_1,\pb_2,\dots\}$ to the number of arguments in $s_R$, i.e. $\{\pb_1,\dots,\pb_{|R|}\}$.

Theorem~3.7 of \cite{DPSS} (derived from the representation of the Ooguri--Vafa partition function obtained in~\cite{MMS13}) states that the Ooguri--Vafa partition function can be expressed as follows in the semi-infinite wedge formalism: 
\begin{align} \label{eq:ov-rewritten}
	& Z(p, \pb)  = 
	 \left\langle
	\exp \lb \sum_{j=1}^\infty \frac{\alpha_{-j} p_{j}}{j} \rb
	\exp \lb \hbar \frac{P}{Q} \Fc_2 \rb
	\exp \lb \sum_{i=1}^\infty \frac{\alpha_{iQ} (\pb_i \cdot QA^{iP})}{iQ} \rb
	\right\rangle \ ,
\end{align}
where 
\begin{equation}
\hbar = 2 \log q.
\end{equation}

\subsubsection{Extended Ooguri--Vafa partition function}
Now we define the \emph{extended} Ooguri-Vafa partition function, following~\cite{DPSS}, which extends $Z$ to $\pb$'s with fractional indices:
\begin{definition}\label{def:extov}
The extended Ooguri-Vafa partition function is
\begin{align} \label{eq:extended-partition-function}
	& Z^{\ext}(p, \pb) \coloneqq
	\left\langle
	\exp \lb \sum_{j=1}^\infty \frac{\alpha_{-j} p_{j}}{j} \rb
	\exp \lb \hbar \frac{P}{Q} \Fc_2 \rb
	\exp \lb \sum_{i=1}^\infty \frac{\alpha_{i} (\pb_{i/Q} QA^{iP/Q})}{i} \rb
	\right\rangle \ .
\end{align}
Here $p=\{p_1,p_2,\dots\}$, $\pb=\{p_{i/Q}|i\in \mathbb{Z}_{>0}\}$.
\end{definition}

\begin{remark}
Note that $Z$ is recovered from $Z^{\ext}$ by keeping only $\pb_{i/Q}$ with integer indices, i.e.
\begin{equation}
	Z(p_1,p_2,\dots ; \pb_1,\pb_2,\dots) = Z^{\ext}(p_1,p_2,\dots ; \pb_{1/Q},\pb_{2/Q},\dots) |_{\forall i, Q\nmid i:\, p_{i/Q}=0}.
\end{equation}
\end{remark}

\begin{remark}The reason to define such an extension is that this extension takes the form of the generating function of double Hurwitz numbers, after the following change of variables:
\begin{align}\label{eq:ptpbrel}
	\pt_i &:= \pb_{i/Q} QA^{iP/Q}, \quad i\geq 1 \ ;\\
	u &:= \dfrac{P}{Q}\hbar \ .
\end{align}
In terms of these variables we get:
\begin{align}\label{eq:Zextpt}
Z^{\ext}(p,\pt) = 
\left\langle
\exp \lb \sum_{i=1}^\infty \frac{\alpha_{i} p_i}{i} \rb
\exp \lb u \Fc_2 \rb
\exp \lb \sum_{j=1}^\infty \frac{\alpha_{-j} \pt_{j}}{j} \rb
\right\rangle \ ,
\end{align}
which is precisely the standard form of the generating function of double Hurwitz numbers.
\end{remark}

We are actually interested in the restriction of $Z^{\ext}(p,\pb)$ to the topological locus, which is a physical jargon for the substitution $p=p^*$, where
\begin{align} \label{eq:pstarhbar}
	p^*_i = \frac{A^i - A^{-i}}{e^{\hbar i/2} - e^{-\hbar i/2}} \,.
\end{align}
Denote
\begin{equation}\label{eq:toploc}
	Z^{\ext} = Z^{\ext}(\pb):=Z^{\ext}(p, \pb)\big|_{p=p^*} \ .
\end{equation}

\subsubsection{The $n$-point functions}
Now let us define the $n$-point functions of the extended Ooguri-Vafa partition function. 
Let 
\begin{align}  \label{eq:definitionCgmu}
	C^{(g)}_{\mu_1 \dots \mu_n} & := [\hbar^{2g-2+n}] \dfrac{\p}{\p\pb_{\mu_1/Q}}\dots \dfrac{\p}{\p\pb_{\mu_n/Q}}\log Z^{\ext}\bigg|_{\pb=0}
	\ .
\end{align}
\begin{remark}
	Note that after the restriction~\eqref{eq:toploc}, the $\hbar$ parameter enters not only the middle exponential in~\eqref{eq:extended-partition-function}, but also the one on the left.
\end{remark}

Now, following~\cite{DPSS}, we have
\begin{definition}
	The $n$-point functions $H_{g,n}$ of the extended Ooguri--Vafa partition function are
	\begin{align} \label{eq:HgnFirstTime}
		H_{g,n}(\Lambda_1,\dots,\Lambda_n) & := Q^{-n} \sum_{\mu_1 \dots \mu_n = 1}^\infty 
		C^{(g)}_{\mu_1 \dots \mu_n} \prod_{i = 1}^n \Lambda_i^{-\mu_i}\, .
	\end{align}
\end{definition}

\subsection{Spectral curve 
	topological recursion}

\subsubsection{Formulation of topological recursion}
The spectral curve topological recursion 
\cite{CE1,EynardOrantin,OrantinThesis,EynardSurvey} 
associates to a Riemann surface $\Sigma$  (the so-called spectral curve)  equipped with two functions $X,y\colon \Sigma\to \mathbb{C}$ and a symmetric bidifferential $B$ on $\Sigma^2 = \Sigma\times\Sigma$ satisfying some extra conditions a family of meromorphic symmetric $n$-differentials (CEO-differentials) $\omega_{g,n}$ defined on $\Sigma^n$, $g\geq 0$, $n\geq 1$. We assume that $dX$ is meromorphic and all critical points $p_1,\dots,p_r$ of $X$ are simple, $y$ is holomorphic near $p_i$ and $dy\not=0$ at $p_i$, $i=1,\dots,r$, and $B$ has no singularities except for a double pole on the diagonal with biresidue $1$. We set by definition $\omega_{0,1} =y\, dX/X$, $\omega_{0,2} = B$, and for $2g-2+n>0$ we define:
\begin{align}\label{eq:CEO-toprec}
	\omega_{g,n}(z_{\{1,\dots,n\}}) = 
	\frac 12 \sum_{i=1}^r \Res_{z\to p_i} \frac{\int_z^{\sigma_i(z)}\omega_{0,2}(\cdot,z_1)}{\omega_{0,1}(\sigma_i(z))-\omega_{0,1}(z)} \Big[
	\omega_{g-1,n+1}(z,\sigma_i(z),z_{\{2,\dots,n\}}) + 
	\\ \notag 
	\sum_{\substack{g_1+g_2 = g \\ I_1\sqcup I_2 = \{2,\dots,n\} \\ (g_i,|I_i|)\not= (0,0)}}
	\omega_{g_1,1+|I_1|}(z,z_{I_1})\omega_{g_2,1+|I_2|}(\sigma_i(z),z_{I_2})
	\Big].
\end{align}
Here $z_i$ is some coordinate on the $i$-th copy of $\Sigma$ in $\Sigma^n$, $\sigma_i$ is the deck transformation for $X$ near the point $p_i$, $i=1,\dots,r$, and all $\omega_{-1,n}$, $n\geq 1$, are set to be equal to $0$. Furthermore, for a set $I$, we write $ z_I = \{ z_i \}_{i \in I}$.

\begin{remark}\label{rem:adjustment01} Note that we can add to $\omega_{0,1}$ any differential 
	invariant under the deck transformations $\sigma_i$, $i=1,\dots,r$, and this does not change any $\omega_{g,n}$ for $2g-2+n\geq 0$. 
\end{remark}

\subsubsection{Main theorem}



Let us state the main theorem of the present paper (which was formulated in \cite{DPSS} as a conjecture). 
\begin{theorem} \label{thm:Main} For $2g-2+n\geq 0$, the
	correlation differentials of the extended Ooguri-Vafa partition function 
	\begin{align}
		\omega_{g,n} & := d_1\cdots d_n H_{g,n} = Q^{-n} \sum_{\mu_1 \dots \mu_n = 1}^\infty 
		C^{(g)}_{\mu_1 \dots \mu_n} \prod_{i = 1}^n (-\mu_i) \frac{d \Lambda_i}{\Lambda_i^{\mu_i + 1}} \ ,
	\end{align}
satisfy the CEO spectral curve topological recursion on the spectral curve $\Sigma = \mathbb{C}P^1\setminus C$ with the global coordinate $U$, where $C$ is a contour connecting the points $U=A^{-P/Q-1}$ and $U=A^{-P/Q+1}$, for the data $X=X^\mathrm{BEM}$, $y=y^\mathrm{BEM}$, and $B=dU_1dU_2/(U_1-U_2)^2$, where
\begin{align}\label{eq:LBEM}
	\Lambda(U) & := U \left (\dfrac{ 1 - A^{P/Q+1} U}{ 1 - A^{P/Q-1} U}\right )^{P/Q} \ ,\\
	X^\mathrm{BEM} & \coloneqq \Lambda^{-Q}(U) \ ,\\
	 \label{eq:yBEM}
	y^\mathrm{BEM} & \coloneqq 
	\gamma \log(U)
	+\delta\log \left ( 1 - A^{P/Q+1} U\right ) - \delta \log \left ( 1 - A^{P/Q-1} U\right ) \, ,
\end{align}
and $\gamma$ and $\delta$ are chosen such that
\begin{align}
	\left(
	\begin{array}{cc}
		Q & P \\
		\gamma & \delta
	\end{array}
	\right)
	\in \mathrm{SL}(2,\mathbb{Z}) \ .
\end{align}
\end{theorem}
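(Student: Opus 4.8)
The plan is to verify that the correlation differentials $\omega_{g,n}$ of the extended Ooguri--Vafa partition function satisfy the three standard inputs of the CEO topological recursion on the Brini--Eynard--Mari\~no spectral curve: (i) the linear loop equations, (ii) the quadratic loop equations, and (iii) a pole-structure / projection property (the $\omega_{g,n}$ lie in the right space of differentials with poles only at the critical points of $X$). Once these three are in place, the equivalence between the loop equations plus projection property and the recursion~\eqref{eq:CEO-toprec} is the abstract result of Borot--Eynard--Orantin~\cite{BEO,BS}, so no further recursion-specific work is needed. Since~\cite{DPSS} already carried out the combinatorial analysis that establishes (i) and (iii) and reduces the whole statement to the quadratic loop equation, the real content is (ii).

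First I would recall from~\cite{DPSS} the explicit closed form of the $n$-point functions $H_{g,n}$: via the semi-infinite wedge presentation~\eqref{eq:Zextpt} of $Z^{\ext}$ as a double-Hurwitz generating function, the disconnected $n$-point functions are written as vacuum expectations of products of $\psi,\psi^*$ operators conjugated by $\exp(u\Fc_2)$ and by the principal-specialization data $p = p^*$; the connected correlators $C^{(g)}_{\mu_1\dots\mu_n}$ are then extracted, and passing to the variable $U$ (through $\Lambda(U)$ and $X^{\mathrm{BEM}} = \Lambda^{-Q}$) turns the generating series~\eqref{eq:HgnFirstTime} into meromorphic differentials on $\Sigma = \mathbb{C}P^1 \setminus C$. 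This is exactly the content that identifies $\omega_{0,1} = y^{\mathrm{BEM}}\, dX^{\mathrm{BEM}}/X^{\mathrm{BEM}}$ (up to a deck-invariant differential, cf.\ Remark~\ref{rem:adjustment01}) and $\omega_{0,2} = B$, and that gives the linear loop equation and the analyticity properties. The key new tool, developed in Section~\ref{sec:EnergyOperator}, is an evolutionary (cut-and-join type) equation for $Z^{\ext}$ coming from the energy operator $\Fc_2$: differentiating $Z^{\ext}$ in $u$ produces a second-order differential operator in the $\pb$-variables acting on $Z^{\ext}$, and this equation can be analytically continued from the formal $\pb$-expansion to a genuine relation among the differentials $\omega_{g,n}$ on the whole curve $\Sigma$.

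The heart of the proof, and the step I expect to be the main obstacle, is Section~\ref{sec:QLE}: deriving the quadratic loop equation
\[
\omega_{g-1,n+1}(z,\sigma_i(z),z_{\{2,\dots,n\}}) + \sum_{\substack{g_1+g_2=g\\ I_1\sqcup I_2 = \{2,\dots,n\}}} \omega_{g_1,1+|I_1|}(z,z_{I_1})\,\omega_{g_2,1+|I_2|}(\sigma_i(z),z_{I_2})
\]
being holomorphic (indeed with a zero of appropriate order) at each ramification point $p_i$ of $X^{\mathrm{BEM}}$. The strategy is to feed the analytically continued energy-operator equation into the generating series of all $\omega_{g,n}$, collect the terms according to their behaviour under the local deck transformation $\sigma_i$, and show that the combination above is precisely the "anomalous" piece that the evolutionary equation forces to be regular; the bilinear (quadratic in $\omega$) terms arise exactly because $\Fc_2$ is a quadratic Hamiltonian in the bosonic modes, so the cut-and-join equation is second order. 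Technically this requires controlling the interplay between the $\hbar$-grading (which mixes, after $p = p^*$, the left exponential and the middle $\exp(u\Fc_2)$, as noted in the Remark after~\eqref{eq:definitionCgmu}), and the rational change of variables $U \mapsto \Lambda(U)$, so that the local expansions near $p_i$ organize into the required symmetric bilinear form; handling these combinatorics uniformly in $(g,n)$, rather than case by case, is where the argument is most delicate.

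Assuming (i), (ii), (iii), the final step is a direct citation of the Borot--Eynard--Orantin reconstruction theorem, together with the identification of the spectral curve data: one checks that $X^{\mathrm{BEM}}$ has simple critical points (the zeros of $dX/X$ away from $C$), that $y^{\mathrm{BEM}}$ is holomorphic and has nonvanishing differential there, and that $B = dU_1\,dU_2/(U_1-U_2)^2$ is the standard Bergman kernel on $\mathbb{C}P^1 \setminus C$; the $\mathrm{SL}(2,\mathbb{Z})$ condition on $(Q,P;\gamma,\delta)$ is what makes $\Lambda(U)$ a well-defined coordinate realizing the $(Q,P)$-torus-knot framing, matching $\omega_{0,1}$ computed from $H_{0,1}$ on the nose. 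This closes the proof of Theorem~\ref{thm:Main}.
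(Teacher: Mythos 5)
Your overall architecture is the paper's: reduce Theorem~\ref{thm:Main} via~\cite[Theorem 2.2]{BS} to the projection property, the linear loop equations and the quadratic loop equations; observe that the first two are superseded by the quasi-polynomiality results of~\cite{DPSS}; and concentrate on the quadratic loop equation. Up to that point your proposal is accurate and matches the paper's Section~\ref{sec:ProofOfTR}.

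The gap is in the one place where you had to supply a mechanism, namely the proof of the quadratic loop equation. You propose to differentiate $Z^{\ext}$ in $u$, i.e.\ to use $\Fc_2$, and you assert that the bilinear terms ``arise exactly because $\Fc_2$ is a quadratic Hamiltonian, so the cut-and-join equation is second order.'' This is not how the argument goes and, as stated, it would not work. The paper's evolutionary equation (Lemma~\ref{lem:Cut-And-Join-P-Op}) comes instead from commuting the energy operator ${\mathcal F}_1=\sum_\ell \ell E_{\ell\ell}$ through the three factors of the correlator, and its right-hand side involves the operators $\hat{\cQ}_{r,m}$, which contain monomials of arbitrarily high order in the $p$'s and $\partial/\partial p$'s; correspondingly the resulting relation~\eqref{eq:cutjoinOV} for the $n$-point functions contains products of arbitrarily many $\tilde H_{g_j,\cdot}$'s, not just quadratic combinations. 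The quadratic piece $QLE_{g,n+1}$ is isolated by a quite different route: (a) the equation must first be analytically continued to the branch point, which forces the case analysis over the three annuli $R_1,R_2,R_3$ of~\eqref{eq:R1def}--\eqref{eq:R3def} according to where the branch point sits relative to $|A|$ and $|A|^{-1}$ --- a step absent from your sketch; (b) one symmetrizes with respect to the local deck transformation and runs an induction on $2g-2+(n+1)$, invoking the formal corollaries of the quadratic loop equations from~\cite{DKPS,KPS} (Lemma~\ref{lem:secondRSpinLemma}) to show that everything except an explicit prefactor times $QLE_{g,n+1}$ is already holomorphic (Corollary~\ref{cor:holomorphic}); and (c) one must compute that prefactor in closed form and check it does not vanish at the branch point (Lemmata~\ref{lem:coeffsimplify} and~\ref{lem:coeffinvert}), which is where the restrictions on $A$ from Remark~\ref{rem:Agen} enter. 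Without (a)--(c) the claim that the evolutionary equation ``forces'' the quadratic combination to be regular is an assertion rather than a proof, and the specific mechanism you offer for it (second-order operator $\Rightarrow$ bilinear terms) is incorrect.
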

We will see below that this theorem (or, more precisely, its equivalent reformulation introduced in the next subsection) follows from Theorem~\ref{thm:QLE} (which is the main technical result of the present paper) taken together with the results of \cite{DPSS}.

\begin{remark} In the case $(g,n)=(0,1)$,  the correlation differential $\omega_{0,1}=d_1H_{g,n}$ is not equal to $y^\mathrm{BEM}\, dX^\mathrm{BEM}/X^\mathrm{BEM}$, but the difference 
\begin{equation}\label{eq:Omega01FirstTime}
\omega_{0,1}-y^\mathrm{BEM} \frac{dX^\mathrm{BEM}}{X^\mathrm{BEM}} = \left(\frac{\gamma}{Q} \log X^\mathrm{BEM} - \frac 1Q\log A^2 \right) \frac{dX^\mathrm{BEM}}{X^\mathrm{BEM}} 
\end{equation}	
 is invariant under all deck transformations, as is allowed by Remark~\ref{rem:adjustment01}. See~\cite[Section 9.1]{DPSS}.
\end{remark}

\begin{remark}
	The choice $X=\Lambda^{-Q}$ differs from what was stated in \cite{DPSS} (where $X=\Lambda^Q$ was used). However, in the formula for topological recursion~\eqref{eq:CEO-toprec} this just amounts to taking the opposite sign in for $\omega_{0,1}$, which corresponds to the fact that in the present paper we just use the definition of topological recursion with a different sign convention as opposed to~\cite[Equation (3.55)]{BEM11} and~\cite{DPSS}. The reason for the change of the sign convention is the link to the Alexandrov--Chapuy--Eynard--Harnad theory that we establish below. 
\end{remark}

\begin{remark}\label{rem:Agen}
	Note that for the purpose of definition of the spectral curve topological recursion, we have to treat $A$ as some fixed complex number, rather than a formal variable as in the definition of the HOMFLY-PT polynomials. For the purposes of the present paper, $A$ can be an arbitrary complex number, apart from the mild restrictions requiring that the zeroes of $d\Lambda$ should be finite and simple and should not have the same absolute values as either $A$ or $A^{-1}$. Note that $d\Lambda(U) =0$ is equivalent to 
	\begin{equation}
	(A^{-P/Q} U^{-1})^2 - \Big(A\big(1+\frac PQ\big)+ A^{-1}\big(1-\frac PQ\big)\Big) (A^{-P/Q} U^{-1}) +1 =0,		
	\end{equation}
	so the first condition just means that at the zeros of $d\Lambda$ the coordinate $z\coloneqq A^{-P/Q} U^{-1}$ is not equal to $0$, $\infty$, $1$, or $-1$. We can also assume without loss of generality that $|A| < |A^{-1}|$.
\end{remark}

\subsection{Alexandrov--Chapuy--Eynard--Harnad framework}\label{sec:ACEHintro}  In \cite{ACEH-1,ACEH-2} a rather general Hurwitz-type problem was considered (specifically, polynomially weighted double Hurwitz numbers), and it was proved that the corresponding $n$-point functions satisfy topological recursion on a specific spectral curve determined by the way the numbers are weighted. It turns out that while the results of these papers are not directly applicable to our case, it is still possible to use their setup, in particular, their formula for the spectral curve. 

\subsubsection{Functions $X$ and $y$}
Specifically, for a Hurwitz-type problem coming from the partition functions \eqref{eq:toploc}, the function $X$ and $y$ on the ACEH spectral curve (cf. also \cite{BDKS}) take the form 
\begin{align}\label{eq:XACEH}
X^\mathrm{ACEH} &= z\,\phi(\check S(z))^{-1}\\ \label{eq:yACEH}
y^\mathrm{ACEH} &= \check S(z),
\end{align}
where
\begin{align}	
	\phi(y) &= \exp\Big(\frac PQ y\Big),\\
	\check S(z) &= \sum_{j=1}^\infty p^*_j \vert _{\hbar = 0 } z^j = \sum_{j=1}^\infty \frac{A^j-A^{-j}}{j } z^j = -\log (1-Az) + \log (1-A^{-1}z).
\end{align}

Let us determine how the ACEH-functions $X^\mathrm{ACEH}$ and $y^\mathrm{ACEH}$ are related to $X^\mathrm{BEM}$ and $y^\mathrm{BEM}$.

\begin{proposition} For $U = A^{-P/Q} z^{-1}$, we have:
	\begin{align}
		X^\mathrm{ACEH}(z) & = A^{P/Q}\left(X^\mathrm{BEM}(U)\right)^{1/Q}, \\
		\label{eq:yACEHBEM}
		y^\mathrm{ACEH}(z) & = 		Qy^\mathrm{BEM}(U) +\gamma \log X^\mathrm{BEM}(U) - \log A^2
	\end{align}
\end{proposition}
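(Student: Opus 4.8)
The plan is to verify the two stated identities by direct substitution, treating them as elementary manipulations with logarithms once the relation $U = A^{-P/Q} z^{-1}$ is plugged in. First I would compute $\Lambda(U)$ in terms of $z$: since $A^{P/Q+1} U = A z^{-1}$ and $A^{P/Q-1} U = A^{-1} z^{-1}$, we get
\begin{equation*}
\Lambda(U) = A^{-P/Q} z^{-1}\left(\frac{1 - A z^{-1}}{1 - A^{-1} z^{-1}}\right)^{P/Q} = A^{-P/Q} z^{-1}\left(\frac{z - A}{z - A^{-1}}\right)^{P/Q}.
\end{equation*}
Then $X^\mathrm{BEM} = \Lambda^{-Q}$ gives $(X^\mathrm{BEM})^{1/Q} = \Lambda^{-1} = A^{P/Q} z \left(\frac{z-A^{-1}}{z-A}\right)^{P/Q}$, so $A^{P/Q}(X^\mathrm{BEM})^{1/Q} = A^{2P/Q} z \left(\frac{z-A^{-1}}{z-A}\right)^{P/Q}$. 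On the other side, $X^\mathrm{ACEH} = z\,\phi(\check S(z))^{-1} = z \exp(-\tfrac PQ \check S(z))$, and since $\check S(z) = -\log(1-Az) + \log(1-A^{-1}z)$ we have $\exp(-\tfrac PQ\check S(z)) = \left(\frac{1-Az}{1-A^{-1}z}\right)^{P/Q}$. The first identity thus reduces to checking
\begin{equation*}
z\left(\frac{1-Az}{1-A^{-1}z}\right)^{P/Q} = A^{2P/Q} z \left(\frac{z-A^{-1}}{z-A}\right)^{P/Q},
\end{equation*}
which is immediate from $\frac{1-Az}{1-A^{-1}z} = \frac{-A(z - A^{-1})}{-A^{-1}(z-A)} = A^2\frac{z-A^{-1}}{z-A}$; taking the $(P/Q)$-th power produces exactly the factor $A^{2P/Q}$, so the two sides agree (with a consistent branch choice).

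Next I would handle the second identity. Here I compute $\log X^\mathrm{BEM} = -Q\log\Lambda(U) = -Q\big(\log(A^{-P/Q} z^{-1}) + \tfrac PQ\log\tfrac{z-A}{z-A^{-1}}\big) = Q\log z + P\log A^{?}\dots$ — more carefully, $\log X^\mathrm{BEM} = -Q\log\Lambda = Q\log z + P\log\tfrac{z-A^{-1}}{z-A} + P\log A$ after collecting the $A^{-P/Q}$ term (so that $-Q\log A^{-P/Q} = P\log A$; I should double-check sign conventions against Remark~\ref{rem:Agen}, but the structure is clear). Meanwhile $y^\mathrm{ACEH}(z) = \check S(z) = -\log(1-Az)+\log(1-A^{-1}z)$, and I must express $y^\mathrm{BEM}(U)$ via $z$: from \eqref{eq:yBEM}, $y^\mathrm{BEM}(U) = \gamma\log U + \delta\log(1-A^{P/Q+1}U) - \delta\log(1-A^{P/Q-1}U) = \gamma\log(A^{-P/Q}z^{-1}) + \delta\log(1-Az^{-1}) - \delta\log(1-A^{-1}z^{-1})$. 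The assertion $y^\mathrm{ACEH} = Q y^\mathrm{BEM} + \gamma\log X^\mathrm{BEM} - \log A^2$ then becomes a linear identity among the logarithms $\log z$, $\log(z-A)$, $\log(z-A^{-1})$ (equivalently $\log(1-Az)$, $\log(1-A^{-1}z)$), $\log A$, and constants; I would collect coefficients of each independent logarithm on both sides. The $\log z$-type and $\log A$-type terms should cancel using $\gamma Q \cdot(\text{from }y^\mathrm{BEM}) + \gamma\cdot Q\log z\ (\text{from }\gamma\log X^\mathrm{BEM})$ against nothing on the left, forcing the relation $Q\gamma - \gamma Q = 0$ trivially for the $\log z$ part, while the matching of the $\log(1-A^{\pm 1}z)$ coefficients uses $Q\delta + \gamma P = 1$, which is exactly one of the entries of the $\mathrm{SL}(2,\mathbb{Z})$ condition $Q\delta - P\gamma = 1$ — here a sign must be tracked carefully, and I expect the correct bookkeeping to show $Q\delta - P\gamma = 1$ is precisely what makes the coefficient of $\log(1-A^{-1}z)$ (say) equal to $+1$ and that of $\log(1-Az)$ equal to $-1$, as required by $\check S$.

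The main obstacle, as is typical for such statements, is not any conceptual difficulty but getting every sign and every branch of the logarithm right: the paper has just changed the sign convention for $X$ (Remark after \eqref{eq:yBEM}), uses $X^\mathrm{BEM} = \Lambda^{-Q}$ rather than $\Lambda^Q$, and the $\mathrm{SL}(2,\mathbb Z)$ determinant condition $Q\delta - P\gamma = 1$ must be invoked with the correct orientation. I would therefore organize the proof around two computations — one establishing $\log X^\mathrm{BEM}(U)$ as an explicit combination of $\log z$, $\log(1-Az)$, $\log(1-A^{-1}z)$ and $\log A$, and one doing the same for $y^\mathrm{BEM}(U)$ — and then simply read off that $Q y^\mathrm{BEM} + \gamma\log X^\mathrm{BEM} - \log A^2$ collapses to $\check S(z)$, with the coefficient of $\log(1-Az)$ reducing to $Q\delta - P\gamma + $ (contribution) and the $\log A$ and $\log z$ terms cancelling by the remaining entry of the matrix. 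The only genuine input beyond algebra is the $\mathrm{SL}(2,\mathbb Z)$ relation, used exactly once.
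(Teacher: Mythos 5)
Your proposal is correct and follows essentially the same route as the paper: direct substitution of $U = A^{-P/Q}z^{-1}$, elementary manipulation of the logarithms/powers using $\frac{1-Az}{1-A^{-1}z} = A^{2}\,\frac{z-A^{-1}}{z-A}$, and a single invocation of the determinant relation $Q\delta - P\gamma = 1$ to collapse the coefficients of $\log(1-A^{\pm 1}z)$ to $\mp 1$. The first identity is fully verified and the outlined bookkeeping for the second one does close up exactly as you predict (the $\log z$ and $\log A$ terms cancel, and the residual $-\log A^2$ is absorbed when converting $\log\frac{z-A}{z-A^{-1}}$ back to $\log\frac{1-Az}{1-A^{-1}z}$), so there is no gap.
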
 

\begin{proof} Indeed, 
	\begin{align}
	& A^{P/Q}\Lambda^{-1}  =  A^{P/Q} U^{-1} \left (\dfrac{ 1 - A^{P/Q-1} U}{ 1 - A^{P/Q+1} U}\right )^{P/Q} = A^{2P/Q} z \left (\dfrac{ 1 - A^{-1} z^{-1}}{ 1 - A z^{-1}}\right )^{P/Q} \\ \notag
	& = z \left (\dfrac{ 1- A z  }{ 1- A^{-1} z  }\right )^{P/Q}  = z \exp \lb - \frac PQ  \lb  -\log (1-Az) + \log (1-A^{-1}z)\rb\rb
	\end{align}
and
	\begin{align}
		&Q (y^{BEM}(U) - \gamma \log \Lambda(U)) - \log A^2 \\ \notag
		& =Q\left(\gamma \log U 
		+\delta\log \left ( 1 - A^{P/Q+1} U\right ) - \delta \log \left ( 1 - A^{P/Q-1} U\right )\right) \\ \nonumber
		&\hspace{1cm}- \gamma \log U -\dfrac{\gamma P}{Q} \log(1 - A^{P/Q+1} U) +\dfrac{\gamma P}{Q} \log( 1 - A^{P/Q-1} U)) - \log A^2 \\ \nonumber
		&=\log \left ( 1 - A^{P/Q+1} U\right ) - \log \left ( 1 - A^{P/Q-1} U\right )- \log A^2 \\ \nonumber
		&= \log \left ( 1 - A z^{-1}\right) - \log \left ( 1 - A^{-1} z^{-1}\right )- \log A^2 \\ \nonumber
		&=	\log\left(\dfrac{1-A^{-1}z}{1-Az}\right).   
	\end{align}
\end{proof}

\begin{corollary} We have
	\begin{equation}
	\omega_{0,1} = y^\mathrm{ACEH} \dfrac{dX^\mathrm{ACEH}}{X^\mathrm{ACEH}}.
	\end{equation}
\end{corollary}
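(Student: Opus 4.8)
The plan is to combine the formula for $\omega_{0,1}$ recorded in the Remark containing~\eqref{eq:Omega01FirstTime} with the change-of-variables identities of the preceding Proposition, so that the whole argument reduces to a one-line logarithmic-derivative computation with no genuine obstacle.

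First I would rewrite~\eqref{eq:Omega01FirstTime} as
\begin{equation*}
\omega_{0,1} = \left(y^\mathrm{BEM} + \frac{\gamma}{Q}\log X^\mathrm{BEM} - \frac{1}{Q}\log A^2\right)\frac{dX^\mathrm{BEM}}{X^\mathrm{BEM}},
\end{equation*}
and observe that the parenthesized factor is exactly $\frac{1}{Q}\,y^\mathrm{ACEH}$ by the formula~\eqref{eq:yACEHBEM} of the Proposition, i.e. $y^\mathrm{ACEH} = Q y^\mathrm{BEM} + \gamma\log X^\mathrm{BEM} - \log A^2$.

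Next I would use the first relation of the Proposition, $X^\mathrm{ACEH} = A^{P/Q}(X^\mathrm{BEM})^{1/Q}$, which upon taking logarithmic differentials gives $\frac{dX^\mathrm{ACEH}}{X^\mathrm{ACEH}} = \frac{1}{Q}\frac{dX^\mathrm{BEM}}{X^\mathrm{BEM}}$ (the constant $A^{P/Q}$ drops out). Substituting both identities into the displayed expression for $\omega_{0,1}$ yields
\begin{equation*}
\omega_{0,1} = \frac{1}{Q}\,y^\mathrm{ACEH}\cdot Q\,\frac{dX^\mathrm{ACEH}}{X^\mathrm{ACEH}} = y^\mathrm{ACEH}\,\frac{dX^\mathrm{ACEH}}{X^\mathrm{ACEH}},
\end{equation*}
which is the claim. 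The only point requiring any care is the matching of the deck-transformation-invariant adjustment of $\omega_{0,1}$ allowed by Remark~\ref{rem:adjustment01} with the additive terms appearing in~\eqref{eq:yACEHBEM}; since both are governed by the same combination $\frac{\gamma}{Q}\log X^\mathrm{BEM} - \frac1Q\log A^2$, this is automatic. In short, there is no hard part: the corollary is a direct consequence of the two identities already proven in the Proposition together with the explicit form of $\omega_{0,1}$.
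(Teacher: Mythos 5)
Your proof is correct and follows essentially the same route as the paper: both combine the two identities of the preceding Proposition with Equation~\eqref{eq:Omega01FirstTime}, the only difference being that you start from $\omega_{0,1}$ and the paper starts from $y^\mathrm{ACEH}\,dX^\mathrm{ACEH}/X^\mathrm{ACEH}$ and works in the other direction. Your version also spells out the logarithmic-derivative step $dX^\mathrm{ACEH}/X^\mathrm{ACEH}=\tfrac1Q\,dX^\mathrm{BEM}/X^\mathrm{BEM}$, which the paper leaves implicit.
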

\begin{proof}
Indeed,
\begin{align}
y^\mathrm{ACEH} \dfrac{dX^\mathrm{ACEH}}{X^\mathrm{ACEH}} = 
\left( Qy^\mathrm{BEM}(U) +\gamma \log X^\mathrm{BEM}(U) - \log A^2 \right) \frac 1Q \dfrac{dX^\mathrm{BEM}}{X^\mathrm{BEM}},
\end{align}
and the latter expression is equal to $\omega_{0,1}$ by Equation~\eqref{eq:Omega01FirstTime}.
\end{proof}

Now let us determine which form does the definition \eqref{eq:HgnFirstTime} of the $n$-point functions $H_{g,n}$ take in the ACEH coordinates.
Using the fact that $X^\mathrm{ACEH} = A^{P/Q} \Lambda^{-1}$ and Equation~\eqref{eq:ptpbrel}, we can rewrite Equation~\eqref{eq:HgnFirstTime} as 
\begin{align}\label{eq:HXdef}
	H_{g,n}
	 & = Q^{-n}A^{-nP/Q} \sum_{\mu_1 \dots \mu_n = 1}^\infty 
	C^{(g)}_{\mu_1 \dots \mu_n} \prod_{i = 1}^n \left(X^\mathrm{ACEH}_i\right)^{\mu_i} \\ \notag
	& = \sum_{\mu_1 \dots \mu_n = 1}^\infty [\hbar^{2g-2+n}] \dfrac{\p}{\p\tilde p_{\mu_1}}\dots \dfrac{\p}{\p\tilde p_{\mu_n}}\log Z^{ext}\bigg|_{\tilde p=0} \prod_{i = 1}^n \left(X^\mathrm{ACEH}_i\right)^{\mu_i}
	.
\end{align}
The latter form is the standard one in the ACEH theory. Note that in the last line the derivatives are with respect to $\pt$-variables defined in \eqref{eq:ptpbrel} as opposed to the $\pb$-variables used in \eqref{eq:definitionCgmu}.

\subsection{Reformulation of topological recursion for the correlation differentials of the extended Ooguri--Vafa partition function} \label{sec:reform}

In this subsection we can finally reformulate Theorem~\ref{thm:Main} in the ACEH form, in which 
we are going to prove it further in the paper.

\begin{convention}
From now on and for the rest of the paper, we adopt the convention that $X$ and $y$ 
stand for
$X^{\mathrm{ACEH}}$ and $y^{\mathrm{ACEH}}$, respectively.	
\end{convention}

Recall formulas \eqref{eq:Zextpt}-\eqref{eq:pstarhbar} and \eqref{eq:HXdef} for the extended Ooguri-Vafa partition function and the corresponding $n$-point functions respectively, in $\pt$-variables:
\begin{align} \label{eq:Zextff}
	Z^{\ext}(\pt) &= 
	\left\langle
	\exp \lb \sum_{i=1}^\infty \frac{\alpha_{i} p^*_i}{i} \rb
	\exp \lb \frac{P}{Q}\hbar \Fc_2 \rb
	\exp \lb \sum_{j=1}^\infty \frac{\alpha_{-j} \pt_{j}}{j} \rb
	\right\rangle \ ,\\	
	p^*_i &= \frac{A^i - A^{-i}}{e^{\hbar i/2} - e^{-\hbar i/2}} \,, \\ \label{eq:Hgnff}
	H_{g,n}(X_1,\dots,X_n) & = \sum_{\mu_1 \dots \mu_n = 1}^\infty [\hbar^{2g-2+n}] \dfrac{\p}{\p\tilde p_{\mu_1}}\dots \dfrac{\p}{\p\tilde p_{\mu_n}}\log Z^{ext}\bigg|_{\tilde p=0} \prod_{i = 1}^n X_i^{\mu_i}\, .
\end{align}

Now, the results of Section \ref{sec:ACEHintro} imply that 
Theorem~\ref{thm:Main} can be equivalently reformulated as follows: 
\begin{theorem} The correlation differentials of the extended Ooguri--Vafa partition function 
	\begin{align}\label{eq:omegadef}
		\omega_{g,n} & := d_1\cdots d_n H_{g,n}
	\end{align}
	satisfy 
	topological recursion for the spectral curve data $\Sigma = \mathbb{C}P^1\setminus C$ with the global coordinate $z$, where $C$ is a contour connecting the points $z=A$ and $z=A^{-1}$, equipped with the functions 
	\begin{align}	\label{eq:SpectralCurve-Definition-X-y}
		X(z) & = z\exp\left(-\frac{P}{Q}\,y(z)\right), \\
		y(z) &= 
		\log\left(\dfrac{1-A^{-1}z}{1-Az}\right),
	\end{align}
and the symmetric bidifferential $B=dz_1dz_2/(z_1-z_2)^2$. 
\end{theorem}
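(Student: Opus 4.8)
The plan is to verify that the correlation differentials $\omega_{g,n}=d_1\cdots d_n H_{g,n}$ satisfy the hypotheses of the reconstruction theorem for topological recursion of \cite{BEO,BS}: a family of symmetric meromorphic differentials on $\Sigma$ obeys the CEO recursion \eqref{eq:CEO-toprec} for the data $(X,y,B)$ as soon as (i) the unstable terms are the prescribed ones, $\omega_{0,1}=y\,dX/X$ and $\omega_{0,2}=B$; (ii) for $2g-2+n>0$ the differential $\omega_{g,n}$ is rational in the global coordinate $z$ on $\mathbb{C}P^1$, with poles only at the zeros $p_1,\dots,p_r$ of $dX$ and vanishing residues there; (iii) the linear loop equations hold, i.e.\ for every $i$ the differential $z\mapsto\omega_{g,n}(z,z_{\{2,\dots,n\}})+\omega_{g,n}(\sigma_i(z),z_{\{2,\dots,n\}})$ is holomorphic at $z=p_i$; and (iv) the quadratic loop equations hold, i.e.\ for every $i$ the quadratic differential built from $\omega_{g-1,n+1}(z,\sigma_i(z),z_{\{2,\dots,n\}})$ together with the products $\omega_{g_1,1+|I_1|}(z,z_{I_1})\,\omega_{g_2,1+|I_2|}(\sigma_i(z),z_{I_2})$ over stable splittings is holomorphic at $z=p_i$.

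First I would dispose of (i): the equality $\omega_{0,1}=y\,dX/X$ in the ACEH coordinates of the present statement is exactly the Corollary proved above, and $\omega_{0,2}=B=dz_1dz_2/(z_1-z_2)^2$ is the $[\hbar^0]$-part computation from \eqref{eq:Zextff}--\eqref{eq:Hgnff} carried out in \cite{DPSS}. Next I would invoke \cite{DPSS} for (ii) and (iii): the combinatorial analysis of the free-fermion formula for the $n$-point functions performed there shows that, for $2g-2+n>0$, each $\omega_{g,n}$ is rational in $z$ with poles confined to the zero locus of $dX$ — which is finite and consists of simple zeros by Remark~\ref{rem:Agen} — and free of residues, and it exhibits the linear loop equations as a symmetry of the same explicit formulas. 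Thus all of the structural and combinatorial input is already on record, and the only missing ingredient is the quadratic loop equation.

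That ingredient, (iv), is Theorem~\ref{thm:QLE}, and providing it is the crux of the matter. The route, developed in Sections~\ref{sec:EnergyOperator}--\ref{sec:QLE}, is to write down the evolution (``string-type'') equation for $Z^{\ext}$ produced by the energy operator $\Fc_2$, turn it into a system of relations among the $n$-point functions, analytically continue these relations from a formal neighbourhood of $z=0$ to the whole of $\Sigma$, and then read off holomorphicity at each ramification point of the quadratic differential appearing in (iv). I expect this step — the analytic continuation of the energy-operator equation to the entire Brini--Eynard--Mari\~no spectral curve, and the extraction of the quadratic loop equation from it — to be the main obstacle; the rest is bookkeeping once \cite{BEO,BS,DPSS} are in hand.

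To finish, I would run the standard induction on $2g-2+n$: (iii) and (iv) determine the principal part of $\omega_{g,n}$ at each $p_i$ and identify it with the principal part of the right-hand side of \eqref{eq:CEO-toprec}, while (ii) together with the absence of nontrivial cycles on the rational curve $\Sigma$ forces the two differentials to coincide. Combined with the coordinate change from the Brini--Eynard--Mari\~no data to the ACEH data recorded in Section~\ref{sec:ACEHintro} — and the modification of $\omega_{0,1}$ permitted by Remark~\ref{rem:adjustment01} — this also yields Theorem~\ref{thm:Main} in its original form.
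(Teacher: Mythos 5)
Your proposal follows essentially the same route as the paper: it reduces the statement via the Borot--Shadrin criterion to the projection property (your condition (ii)), the linear loop equations, and the quadratic loop equations, quotes \cite{DPSS} for the first two (there packaged as quasi-polynomiality) and for the unstable terms, and correctly identifies the quadratic loop equation --- established through the energy-operator cut-and-join equation and its analytic continuation to the whole spectral curve --- as the one genuinely new ingredient. The only cosmetic difference is that you sketch the induction behind the reconstruction theorem, which the paper simply cites as \cite[Theorem 2.2]{BS}.
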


The remainder of the paper is devoted to proving this theorem. 

\begin{remark} Note that compared to the theory developed in~\cite{ACEH-1,ACEH-2}, in our case 
\begin{equation}	
\sum_{i=1}^\infty p_i^*z^i = \frac{\hbar z\frac{\p}{\p z}}{\zeta(\hbar z\frac{\p}{\p z})} y(z) = y(z)+ O(\hbar^2),
\end{equation}
 where $\zeta(t)\coloneqq e^{t/2}-e^{-t/2}$, as opposed to the standard ACEH convention that $\sum_{i=1}^\infty p_i^*z^i = y(z)$. This means that the extended Ooguri--Vafa partition functions falls in the extension of the ACEH theory, where the 
 data specifying the partition function can depend on $\hbar^2$. Another example where the ACEH theory computes via the topological recursion some $\hbar^2$-deformed partition function is the case of Hurwitz numbers with completed cycles, see~\cite{BKLPS,DKPS}. All these examples inspired a systematic study of the $\hbar^2$-extension of the ACEH theory, see~\cite{BDKS,BDKS-2}.
\end{remark}
 
\subsection{Proof of topological recursion} \label{sec:ProofOfTR} We follow the scheme of the proof of topological recursion suggested in~\cite{BS}, see also~\cite{BEO}. Namely according to~\cite[Theorem 2.2]{BS}, the system of correlation differentials $\omega_{g,n}$ defined on the spectral curve satisfies the topological recursion if and only if it satisfies the projection property, the linear loop equations, and the quadratic loop equations. 

In the case of the correlation differentials of the extended Ooguri--Vafa partition function, the first two properties are superseded by the so-called quasi-polynomiality property, see~\cite[Theorems 7.1 and 8.1]{DPSS} (cf.~Remark 8.2 in \emph{op.~cit.}). So, this way we reduce the proof of topological recursion to the statement that the correlation differentials of the extended Ooguri--Vafa partition function satisfy the quadratic loop equation. Let us formulate this statement explicitly.

\begin{notation} Denote
\begin{equation}\label{eq:Wsdef}
	W_{g,n}(z_{\{1,\dots,n\}}) := \omega_{g,n}(z_{\{1,\dots,n\}}) / \prod_{j=1}^n d\log X(z_j). 
\end{equation}
\end{notation}

\begin{definition}
We say that a system of meromorphic differentials $\omega_{g,n}$ 
satisfies the \emph{quadratic loop equations} if for any $g\geq 0$, $n\geq 0$ the expression
\begin{equation}\label{eq:QLE1}
	W_{g-1,n+2}(z,\sigma_i(z),z_{\{1,\dots,n\}}) + \sum_{\substack{g_1+g_2 = g \\ I_1\sqcup I_2 = \{1,\dots,n\} }}
	{W}_{g_1,1+|I_1|}(z,z_{I_1}){W}_{g_2,1+|I_2|}(\sigma_i(z),z_{I_2})
\end{equation}
is holomorphic in $z$ for $z\to p_i$, where $p_i$ are the zeroes of $dX$.
\end{definition}


\begin{notation}
	Let 
	\begin{equation}\label{eq:Ddef}
		D_X:=X\dfrac{d}{dX}.
	\end{equation}
\end{notation}

\begin{theorem} \label{thm:QLE} 
Functions $W_{g,n}$ 
coming from multidifferentials $\omega_{g,n}$ 
of \eqref{eq:omegadef}, i.e. given by
	\begin{equation}
		W_{g,n}(z_{\{1,\dots,n\}}) = D_{X_1}\cdots D_{X_n} H_{g,n},
	\end{equation}
for functions $H_{g,n}$ 
 defined by~\eqref{eq:HXdef}, satisfy the quadratic loop equation \eqref{eq:QLE1}.
\end{theorem}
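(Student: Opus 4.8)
The plan is to derive the quadratic loop equations from an explicit evolutionary equation for the $n$-point functions $H_{g,n}$, obtained by commuting the energy operator $\Fc_2$ through the fermionic expression \eqref{eq:Zextff} for $Z^{\ext}$. The starting point is that $Z^{\ext}$ is a hypergeometric KP tau-function, so the operator $\Fc_2$ acting between the two vertex operators produces, after taking the vacuum expectation value, a closed PDE in the $\pt$-variables (equivalently, after passing to $n$-point functions via \eqref{eq:Hgnff}, a PDE for $H_{g,n}$ as functions of $X_1,\dots,X_n$). Extracting the $[\hbar^{2g-2+n}]$-coefficient and differentiating, this gives a linear relation expressing $W_{g,n}$ in terms of $D_X$-derivatives of lower $W_{g',n'}$ together with a single "principal" term involving $W_{g-1,n+1}$ and the bilinear convolution $\sum W_{g_1,1+|I_1|} W_{g_2,1+|I_2|}$ — precisely the combination appearing inside \eqref{eq:QLE1}. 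The core of the argument (carried out in Sections~\ref{sec:EnergyOperator} and~\ref{sec:QLE} according to the excerpt) is to write this relation on the whole spectral curve $\Sigma = \mathbb{C}P^1\setminus C$ in the global coordinate $z$, using the quasi-polynomiality of~\cite{DPSS} to guarantee that the $n$-point functions, a priori defined as power series in the $X_i$, extend to rational (in an appropriate sense) objects on $\Sigma$ with controlled pole structure.

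Concretely, I would proceed as follows. First, record the commutation relation $[\Fc_2,\alpha_{-j}] = -j\,\alpha_{-j}\Dop_{\pt_j} + \tfrac{j^2}{2}\alpha_{-j}$-type identities (more precisely, $e^{u\Fc_2}$ conjugates $\alpha_{-j}$ into a shifted operator), so that differentiating $\log Z^{\ext}$ in $u = \tfrac PQ\hbar$ yields an expression quadratic in the $\alpha$'s; translate this into an equation for the generating series $\sum W_{g,n}$. Second, evaluate all the building blocks on the Brini–Eynard–Mariño/ACEH curve: the genus-zero data is $y = \log\frac{1-A^{-1}z}{1-Az}$ and $X = z\,e^{-\frac PQ y}$, and one checks that $\omega_{0,1} = y\,dX/X$ and $W_{0,1} = y$ (modulo the deck-invariant correction noted in the excerpt), $W_{0,2} = $ the relevant second-order piece — these are needed as the base of the induction and to identify the "disconnected" completion of \eqref{eq:QLE1}. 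Third, rewrite the evolutionary equation so that the left-hand side is manifestly $D_{X_1}\cdots D_{X_n}$ of something holomorphic, and the right-hand side is exactly (a $D_X$-derivative of) the expression \eqref{eq:QLE1} plus manifestly holomorphic correction terms; then holomorphicity of \eqref{eq:QLE1} at the ramification points $p_i$ of $X$ follows by comparing pole orders in $z$ near $p_i$, where one uses that the only potential poles come from $1/(\omega_{0,1}(\sigma_i z) - \omega_{0,1}(z))$-type denominators, which are cancelled by the structure of the equation. A small but necessary point is the behaviour near $z = A^{\pm1}$ (the endpoints of $C$, where $y$ has logarithmic singularities) and near $z=0,\infty$: one must check that $W_{g,n}$ has no spurious singularities there, which is where the $\hbar^2$-deformation of the ACEH initial data (the fact that $\sum p_i^* z^i = \tfrac{\hbar D_z}{\zeta(\hbar D_z)} y$ rather than just $y$) must be tracked carefully.

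The main obstacle I expect is controlling the analytic continuation: the functions $H_{g,n}$ are defined as formal power series in $X_i = X(z_i)$ near $X_i = 0$ (i.e. near $z_i$ at a specific point), and the evolutionary equation from $\Fc_2$ is, a priori, an identity of formal power series; promoting it to an identity of meromorphic differentials on all of $\Sigma$ — and in particular near the ramification points $p_i$, which are \emph{not} near $X=0$ — requires the quasi-polynomiality results of~\cite{DPSS} (Theorems 7.1 and 8.1 there) to pin down the global shape of $W_{g,n}$ as a polynomial in a suitable basis of functions with poles only at the $p_i$. Once that global form is in hand, the quadratic loop equation becomes a finite check on pole orders, but setting up the bookkeeping so that the energy-operator equation and the quasi-polynomial ansatz are written in compatible coordinates is the delicate part. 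A secondary technical point is handling the $(g,n)=(0,1)$ and $(0,2)$ cases, where the naive formulas must be corrected by deck-invariant terms (as flagged in the remark around \eqref{eq:Omega01FirstTime}); these do not affect the topological recursion but must be accounted for when the bilinear term in \eqref{eq:QLE1} is unfolded into its connected and disconnected pieces.
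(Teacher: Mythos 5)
Your overall strategy --- derive an operator identity from the free-fermion representation of $Z^{\ext}$, push it down to the $n$-point functions, analytically continue to the whole spectral curve using the quasi-polynomiality results of~\cite{DPSS}, and then read off holomorphicity at the ramification points by induction --- is the same skeleton as the paper's argument, and your flagging of the ring structure near $z=A^{\pm 1}$ and of the $(0,1)$, $(0,2)$ corrections is on target. One structural point first: the paper's cut-and-join equation comes from inserting the energy operator $\mathcal{F}_1=\sum_k k E_{k,k}$ (which annihilates the vacuum) and conjugating it through $\Dop(\hbar)$ and the vertex operators, not from differentiating in $u=\frac PQ\hbar$ via $\mathcal{F}_2$. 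Since $\hbar$ also enters the left vertex operator through $p^*$, a $u$-derivative does not cleanly separate from the genus expansion, whereas the $\mathcal{F}_1$ identity holds at fixed $\hbar$ and produces the operators $\hat{\cQ}_{r,m}$ whose bosonic form is what the rest of the proof actually manipulates.

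The genuine gap is in your claim that the right-hand side is ``exactly (a $D_X$-derivative of) the expression \eqref{eq:QLE1} plus manifestly holomorphic correction terms.'' It is not: the operators $\hat{\cQ}_{r,m}$ contain products of arbitrarily many $p$'s and $\partial_p$'s, so the equation involves products of arbitrarily many $W_{g',n'}$'s, not just the bilinear combination of \eqref{eq:QLE1}. Extracting $QLE_{g,n+1}$ requires (i) symmetrizing in $z_0$ and invoking the linear loop equations so that all summands except the $k=0$ one become holomorphic; (ii) invoking the induction hypothesis together with the formal corollaries of the lower quadratic loop equations (Lemma~\ref{lem:secondRSpinLemma}, imported from~\cite{DKPS,KPS}) to collapse the higher products into holomorphic terms plus a prefactor times $QLE_{g,n+1}$; and (iii) resumming that prefactor in closed form and checking that it is holomorphic and \emph{nonvanishing} at the branch point (Lemmata~\ref{lem:coeffsimplify} and~\ref{lem:coeffinvert}) --- without which holomorphicity of the product says nothing about $QLE_{g,n+1}$ itself. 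Step (iii) is an honest computation: the prefactor is an infinite sum over $m$, $l$, $N$ that must be resummed, and its limit as $\Delta y\to 0$ equals $\frac{P^2\, p\,(p^2-1)(A^{-1}-A)}{2Q^2(p-A)^2(p-A^{-1})^2}$, which is nonzero precisely under the genericity assumptions of Remark~\ref{rem:Agen}. ``Comparing pole orders'' against the recursion kernel denominator does not substitute for this chain of reductions.
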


This theorem is proved in Section~\ref{sec:QLE}. By~\cite[Theorem 2.2]{BS} and~\cite[Theorem 8.1 and Remark 8.2]{DPSS} it implies Theorem~\ref{thm:Main}.

\begin{remark} The same method of proving the topological recursion, that is, a combinatorial-algebraic proof of quasi-polinomiality and a subsequent derivation of the quadratic loop equation is used in a number of papers, see~\cite{BKLPS,DKPS,DKPS-0,KPS}. The technical details of the proofs are, however, quite different, and in our case we have to combine, quite non-trivially, a set of ideas from these references.
\end{remark}

\section{Energy operator and cut-and-join equation} \label{sec:EnergyOperator}

In this section we use the energy operator to derive a cut-and-join equation for the extended Ooguri--Vafa partition function. This type of cut-and-join equations was used earlier in~\cite{GGN} and~\cite{DKPS-0} for the monotone single Hurwitz numbers, and here we  follow very closely a very general computation of the second named author in~\cite{Kaz2020}, see also~\cite{Kaz2019}.

The second goal of this section is to analytically extend the cut-and-join equation (originally derived as an equation for formal power series $X_1,\dots,X_n$ near $X_1=\cdots =X_n =0$) to an equation for the globally define $n$-point functions.  

\subsection{Vertex operators and an equation for the partition function} \label{sec:genpart1}
Let us for the duration of Sections \ref{sec:genpart1} and \ref{sec:genpart2} forget about the specifics of the (extended) Ooguri-Vafa partition function and consider a general partition function of the form
\begin{align} \label{eq:GeneralZ}
Z= 
\bigcovac
\exp \lb \sum_{i=1}^\infty \frac{\alpha_{i} p_i}{i} \rb
\Dop(\hbar)
\exp \lb \sum_{j=1}^\infty \frac{\alpha_{-j} s_j}{j\hbar} \rb
\bigvac,
\end{align}
where $\Dop(\hbar) v_\lambda = \prod_{(i,j) \in \lambda} \phi(\hbar (i-j))$, 
where the indices $(i,j)$ of boxes of Young diagram $\lambda$ run as follows: $j=1,\dots,\ell(\lambda)$, $i=1,\dots,\lambda_j$; and $\phi(y):=1+\sum_{k=1}^\infty \varphi_k y^k$ is some formal power series.  
Define the coefficients $c_{r,j}$, $r\geq 0$, $j\geq 1$, by
\begin{align} \label{eq:Definition-Crj}
\prod_{k=\ell+\frac 12}^{\ell+j-\frac 12} \phi(\hbar k) = \sum_{r=0}^\infty \frac{c_{r,j}\hbar^r}{r!} \lb \ell+\frac j2 \rb^{r}.
\end{align}
Define the differential operators
\begin{equation}
\hat \cQ_{r,m}  \coloneqq
[u^r] \,  \sum_{\substack{n+k\geq 1 \\ a_1,\dots,a_n\geq 1 \\ b_1,\dots,b_k\geq 1 \\ \sum a_i = \sum b_i+m}} \frac{1}{\zeta(u)}
\frac{1}{n!k!} \Bigg[\prod_{i=1}^n \frac{\zeta(ua_i)}{a_i} p_{a_i} \Bigg] \Bigg[\prod_{i=1}^k \frac{\zeta(ub_i)}{b_i} \frac{b_i\partial}{\partial p_{b_i}} \Bigg]\,,
\end{equation}
$r\geq 0$, $m\in \mathbb{Z}$. The simplest cases are $\hat{\cQ}_{0,-m} =  \frac{m\partial}{\partial p_m}$ and $\hat{\cQ}_{0,m} =  p_{m}$ for $m>0$.

\begin{lemma} \label{lem:Cut-And-Join-P-Op} We have:
	\begin{align}
	\sum_{i=1}^\infty {p_i} \frac{i\partial}{\partial p_i} Z =
	\sum_{m=1}^\infty \frac{s_m}{\hbar} \sum_{r=0}^\infty c_{r,m} \hbar^r \hat \cQ_{r,m} Z.
	\end{align}
\end{lemma}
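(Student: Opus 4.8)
The statement is a cut-and-join type identity for the general partition function \eqref{eq:GeneralZ}. The left-hand side applies the operator $\sum_i p_i \frac{i\partial}{\partial p_i}$, which is the Euler-type (degree-counting) operator in the $p$-variables; its eigenvalue on $v_\lambda$ under the fermionic-bosonic correspondence is $|\lambda|$. So the plan is to realize both sides as insertions of operators between $\bigcovac\cdots\bigvac$, acting on the middle $\Dop(\hbar)$, and to match them. First I would recall that in the semi-infinite wedge formalism, $\sum_{i\ge 1} p_i\frac{i\partial}{\partial p_i}$ inserted to the left of $\Dop(\hbar)$ becomes the operator $\sum_{k\in\Z+\frac12}(k+\tfrac12)E_{k,k}$ restricted to charge zero — equivalently, up to normalization, the number operator counting boxes, i.e. it acts on $v_\lambda$ by multiplication by $|\lambda|$. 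The same eigenvalue is produced, on the other side of $\Dop(\hbar)$, by a suitable diagonal operator; the content of the lemma is that the \emph{difference} in the way $\Dop(\hbar)$ interacts with these two placements is encoded exactly by the operators $\hat\cQ_{r,m}$ with coefficients $c_{r,m}$.

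\textbf{Key steps.} (1) Insert $\cA:=\sum_{i\ge1}p_i\frac{i\partial}{\partial p_i}$ and move it through the first vertex operator $\exp(\sum_i \alpha_{-i}p_i/i)$; since $p_i\frac{i\partial}{\partial p_i}$ corresponds to the bosonic number operator, one gets $\cA Z = \bigcovac \exp(\cdots)\, \alpha_0^{\mathrm{wt}}\,\Dop(\hbar)\,\exp(\cdots)\bigvac$ where the inserted operator, acting from the left, measures $|\lambda|$. (2) Alternatively compute the same quantity by inserting the operator that measures $|\lambda|$ from the \emph{right} of $\Dop(\hbar)$, i.e. between $\Dop(\hbar)$ and $\exp(\sum_j \alpha_{-j}s_j/(j\hbar))$; there it is built from the negative modes $\alpha_{-j}$ and the derivatives $j\partial/\partial s_j$, which after the substitution $s_j \to$ (the coefficients multiplying $\alpha_{-j}$) and re-expressing $\partial/\partial s_j$ through $\partial/\partial p_{\cdot}$ is not the answer we want directly — rather, the key is that $|\lambda|$ is \emph{not} invariant under $\Dop(\hbar)$-conjugation applied to the \emph{individual} $\alpha$-modes. (3) The honest mechanism, following \cite{Kaz2020}: conjugate the half-vertex-operator modes by $\Dop(\hbar)$. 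One has the standard commutation $[\alpha_m, \Dop(\hbar)] = \Dop(\hbar)\cdot(\text{something expressible via }\phi)$; more precisely, $\Dop(\hbar)^{-1}\alpha_{-m}\Dop(\hbar)$, when sandwiched appropriately, produces the generating series $\frac{1}{\zeta(u)}\prod\frac{\zeta(ua_i)}{a_i}p_{a_i}\prod\frac{\zeta(ub_i)}{b_i}\frac{b_i\partial}{\partial p_{b_i}}$ with the prefactor governed by \eqref{eq:Definition-Crj}. This is where the product $\prod_{k=\ell+1/2}^{\ell+j-1/2}\phi(\hbar k)$ enters: it is exactly $\Dop(\hbar)$ evaluated on the ``hook'' of boxes added/removed when $\alpha_{-j}$ acts, and expanding it in $\hbar$ via \eqref{eq:Definition-Crj} produces the $c_{r,m}\hbar^r$ and the shift operator whose normal-ordered form against the vacuum is $\hat\cQ_{r,m}$. (4) Collecting: $\cA Z$ equals $\sum_m \frac{s_m}{\hbar}\sum_r c_{r,m}\hbar^r\hat\cQ_{r,m}Z$, because acting with $\cA$ on the left is the same as summing over which $\alpha_{-m}$ from the right vertex operator ``carries'' the weight $m$, each such carrying contributing its coefficient $s_m/(m\hbar)\cdot m = s_m/\hbar$ times the conjugated vertex-operator data.

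\textbf{Main obstacle.} The crux is step (3): carefully tracking the conjugation of the negative bosonic modes by the diagonal operator $\Dop(\hbar)$ and identifying the resulting normally-ordered differential operator with $\hat\cQ_{r,m}$, including getting all the $\zeta(u a_i)/a_i$ and $1/n!k!$ combinatorial factors right and the overall $1/\zeta(u)$. This is essentially a bosonization computation: one writes $\alpha_{-m}$'s action in terms of adding a horizontal strip, tracks how $\Dop(\hbar)$ rescales each such configuration by the product of $\phi$-values over the affected boxes, re-sums over all ways the strip can be decomposed, and recognizes the $u$-expansion. I expect the bookkeeping of signs, the half-integer shifts $\ell+j/2$ in \eqref{eq:Definition-Crj}, and the passage from $\partial/\partial p$ acting ``inside'' to the external form of $\hat\cQ_{r,m}$ to be the delicate points; the rest (step (1), the identification of $\cA$ with the number operator) is standard and short. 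I would model the whole argument on \cite[\S\,\text{(relevant section)}]{Kaz2020}, adapting the notation to the present $\phi$, and verify the lowest cases $\hat\cQ_{0,-m}=m\partial/\partial p_m$, $\hat\cQ_{0,m}=p_m$ against a direct computation as a sanity check.
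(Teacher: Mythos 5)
Your proposal is correct and follows essentially the same route as the paper: the paper inserts the fermionic energy operator $\cF_1=\sum_{\ell}\ell E_{\ell\ell}$ (which reads as $\sum_i p_i\,i\partial/\partial p_i$ on the left and annihilates the vacuum on the right), conjugates it through $\exp(\sum_j\alpha_{-j}s_j/(j\hbar))$ and $\Dop(\hbar)$ to produce $\sum_j\frac{s_j}{\hbar}\sum_\ell E_{\ell+j,\ell}\prod_k\phi(\hbar k)$, expands the $\phi$-product via \eqref{eq:Definition-Crj}, and invokes the standard bosonization of the operators $\cQ_{r,m}=\sum_\ell E_{\ell+m,\ell}(\ell+\frac m2)^r/r!$ to get $\hat\cQ_{r,m}$ — exactly your steps (1), (3), (4). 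The only cosmetic slip is your identification of the number operator as $\sum_k(k+\frac12)E_{k,k}$ rather than $\cF_1=\sum_k kE_{k,k}$, which you already hedge as ``up to normalization.''
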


\begin{proof}
Consider the energy operator $\cF_1 = \sum_{\ell\in\halfZ}\ell E_{\ell\ell}$. Note that 
\begin{align}
& \cF_1 \vac = 0 \,; \\
& \exp \lb \sum_{j=1}^\infty \frac{\alpha_{-j} s_j}{j\hbar} \rb \cF_1 \exp \lb -\sum_{j=1}^\infty \frac{\alpha_{-j} s_j}{j\hbar} \rb
= \cF_1 - \sum_{j=1}^\infty \frac{\alpha_{-j} s_j}{\hbar} \, ; \\
& \Dop(\hbar) \lb \cF_1 - \sum_{j=1}^\infty \frac{\alpha_{-j} s_j}{\hbar} \rb \Dop(\hbar)^{-1} = \cF_1 - \sum_{j=1}^\infty \frac{s_j}{\hbar} \sum_{\ell\in \halfZ} E_{\ell+j,\ell} \prod_{k=\ell+\frac 12}^{\ell+j-\frac 12} \phi(\hbar k) \, .
\end{align}
Using these three equations and Equation~\eqref{eq:Definition-Crj}, we have:
\begin{align}\label{eq:F1-Equation-Operators}
& \bigcovac
\exp \lb \sum_{i=1}^\infty \frac{\alpha_{i} p_i}{i} \rb
\cF_1 
\Dop(\hbar)
\exp \lb \sum_{j=1}^\infty \frac{\alpha_{-j} s_j}{j\hbar} \rb
\bigvac
\\ & \notag
= 
\bigcovac
\exp \lb \sum_{i=1}^\infty \frac{\alpha_{i} p_i}{i} \rb
\lb
\sum_{j=1}^\infty \frac{s_j}{\hbar} \sum_{\ell\in \halfZ} E_{\ell+j,\ell} \sum_{r=0}^\infty \frac{c_{r,j}\hbar^r}{r!} \lb \ell+\frac j2 \rb^{r}
\rb
\Dop(\hbar)
\exp \lb \sum_{j=1}^\infty \frac{\alpha_{-j} s_j}{j\hbar} \rb
\bigvac\,.
\end{align}
Define the operators 
\begin{equation}
\cQ_{r,m} \coloneqq  \sum_{\ell\in \halfZ} E_{\ell+m,\ell} \frac {(\ell+\frac m2)^r}{r!},\ r\geq 0,\ m\in \Z.
\end{equation}
According to~\cite[Chapter 14]{Kac}, for any vector $v_\lambda$ we have 
\begin{align} \label{eq:hat-for-Q}
& \bigcovac \lb \sum_{i=1}^\infty \frac{\alpha_{i} p_i}{i} \rb \cQ_{r,m} v_\lambda =  \hat \cQ_{r,m} \bigcovac \lb \sum_{i=1}^\infty \frac{\alpha_{i} p_i}{i} \rb  v_\lambda.
\end{align}
Applying \eqref{eq:hat-for-Q} to Equation~\eqref{eq:F1-Equation-Operators} we obtain the statement of the lemma.
\end{proof}

\begin{remark} Note that the coefficients $s_m$, $m\geq 1$, are allowed to be formal power series in $\hbar^2$ in the computation above.
\end{remark}

\subsection{Equations for $n$-point functions}\label{sec:genpart2} For a general $Z$ as in~\eqref{eq:GeneralZ} we consider the $n$-point functions 
\begin{align}
H_{g,n} = [\hbar^{2g-2+n}]\sum_{d_1,\dots,d_n} 
\frac{\partial^n \log Z}{\partial {p}_{d_1}\cdots\partial{p}_{d_n}} \bigg |_{{p}=0} \prod_{i=1}^n X_i^{d_i}. 
\end{align}

Let $D_{X_i} = X_i\frac{d}{d X_i}$, as in \eqref{eq:Ddef}. Lemma~\ref{lem:Cut-And-Join-P-Op} implies the following equation for some slight modification of the $n$-point functions.
\begin{corollary} We have: 
\begin{align}
& \lb\sum_{i=1}^n D_{X_i}\rb \tilde H_{g,n} = \\ \notag
& \sum_{m=1}^\infty s_m \sum_{r=0}^\infty c_{r,m} 
\sum_{\substack{t \geq 0, d \geq 0 \\ t + 2d =r }} 
\frac{1}{t!} 
\sum_{\substack{\ell = 1,\dots, t \\ \text{or } \ell=0 \text{ if } t=0} }
\frac{1}{\ell!}
\sum_{\substack{
		\{ k\} \sqcup \bigsqcup_{j=1}^\ell K_j = \llbracket n \rrbracket \\ 
		\bigsqcup_{j=1}^\ell T_j =  \llbracket t \rrbracket\\ 
		T_j \neq \emptyset \\
		g-d = \sum_{j=1}^\ell g_j + t - \ell \\
		g_1,\ldots,g_{\ell} \geq 0 
}}  
Q_{d,m,t}^{(k)} \bigg[\prod_{j = 1}^{\ell} \tilde{H}_{g_j,|T_j|+|K_j|}(\xi_{T_j},X_{K_j})\bigg]
\end{align}
where
\begin{align}
\label{Qrdef} 
\sum_{d \geq 0} Q_{d,m,t}^{(k)}\,z^{2d} & = \frac{z}{\zeta(z)}  \frac{\zeta(zD_{X_k})}{zD_{X_k}} \circ X_k^m \circ \prod_{j = 1}^t \left(\bigg\rfloor_{\xi_j = x_k} \circ \frac{\zeta(zD_{\xi_j})}{z}\right).
\end{align}
The modification of $n$-point functions that we have to use is the following. For $(g,n)\not=(0,2)$ we set $\tilde H_{g,n} = H_{g,n}$. For $(g,n) = (0,2)$ we set
\begin{align}
\tilde H_{0,2} (\xi_i,\xi_j) & \coloneqq H_{0,2} (\xi_j,\xi_j) ; \\ \notag
\tilde H_{0,2} (\xi_j,X_j) & \coloneqq H_{0,2} (\xi_i,X_j) +\log \left( \frac{\xi_i - X_j}{\xi_iX_j}\right)\,
\end{align}
(here we have slightly abused the notation, treating $\tilde H_{0,2}$ with two $\xi$-arguments differently from $\tilde H_{0,2}$ with one $\xi$-argument and one $X$-argument; note that there is at least one $\xi$-argument in every $\tilde H_{g,n}$ entering \eqref{Qrdef}).
\end{corollary}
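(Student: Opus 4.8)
The plan is to prove the Corollary by carefully unwinding Lemma~\ref{lem:Cut-And-Join-P-Op} and translating the statement about the operators $\hat\cQ_{r,m}$ acting on $\log Z$ into a statement about the $n$-point functions $H_{g,n}$. First I would differentiate the identity of Lemma~\ref{lem:Cut-And-Join-P-Op} appropriately and pass to $\log Z$: the operator $\sum_i p_i \frac{i\p}{\p p_i}$ on the left is the Euler (degree) operator, and after extracting coefficients of $\prod_i X_i^{d_i}$ it becomes exactly $\sum_{i=1}^n D_{X_i}$ acting on $H_{g,n}$ (this is the origin of the $\tilde H_{0,2}$ correction: the degree operator does not quite commute with $\log$ when a quadratic term $\log((\xi-X)/\xi X)$ is present, so the unstable term must be bookkept separately). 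Then I would expand each $\hat\cQ_{r,m}$ acting on $\log Z$ using its explicit form: the product $\prod_{i=1}^k \frac{\zeta(ub_i)}{b_i}\frac{b_i\p}{\p p_{b_i}}$ of $\p/\p p$'s acting on $\exp(\log Z)$ produces, by the standard combinatorics of connected correlators, a sum over set partitions of the derivative slots distributed among a product of $\log Z$-derivatives; the $\prod_{i=1}^n \frac{\zeta(ua_i)}{a_i}p_{a_i}$ factor reintroduces $p$-variables which, after taking coefficients in $X$, become the extra $\xi$-arguments evaluated on the diagonal $\xi_j = x_k$.

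The key step is to organize the bookkeeping so that the generating series in $u$ matches the definition~\eqref{Qrdef}. Concretely: the single distinguished index $k$ comes from the unique $a_i$-slot that survives after specializing $p\to 0$ in $\log Z$ (or rather, the combinatorial structure forces exactly one "source" vertex carrying the $X_k$-label through the substitution), so $X_k^m$ appears; the remaining $p_{a_i}$'s become $\xi_j$-arguments, each contributing a factor $\zeta(uD_{\xi_j})/u$ and then being restricted via $\rfloor_{\xi_j=x_k}$; the overall $\frac{1}{\zeta(u)}$ from $\hat\cQ_{r,m}$ supplies the prefactor $z/\zeta(z)$ after setting $z = u$ (up to the renaming $u\leftrightarrow z$). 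The genus counting $g-d = \sum_j g_j + t - \ell$ comes from matching powers of $\hbar$: the Euler operator shifts degree by one but genus bookkeeping gives the $2g-2+n$ weighting, the $c_{r,m}\hbar^r$ with $r = t+2d$ splits into the "even" part $2d$ (contributing to the genus defect) and the "odd/$\xi$-insertion" part $t$ (the number of $\xi$-arguments produced), and each connected factor $\tilde H_{g_j,\cdot}$ carries its own $\hbar^{2g_j-2+n_j}$. Assembling these gives precisely the claimed multiple sum, with the constraints $T_j\neq\emptyset$ reflecting that every factor must receive at least one $\xi$-slot, and $\ell=0$ only when $t=0$.

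The main obstacle I expect is the $(0,2)$-unstable term and its modification $\tilde H_{0,2}$. The operator $\frac{b\p}{\p p_b}$ acting twice on a single $\log Z$-factor can hit the $(0,2)$ correlator, whose generating function is $\log((\xi-X)/(\xi X))$ rather than something with a clean $p$-expansion starting at positive degree; reconciling this with the Euler-operator identity and showing that the correct prescription is exactly the two-case definition of $\tilde H_{0,2}$ given in the statement (with one $\xi$ and one $X$ argument shifted, two $\xi$-arguments unshifted) requires a delicate direct check. A secondary difficulty is ensuring that the interchange of the (a priori formal, infinite) sums — over $m$, over $r$, over set partitions, over genera — is legitimate, which follows because everything is graded by $\hbar$ and by total $X$-degree with finite-dimensional graded pieces, but this needs to be stated. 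Once the $(0,2)$-case is handled and the grading argument is in place, the rest is a (lengthy but routine) matching of generating-series coefficients against~\eqref{Qrdef}.
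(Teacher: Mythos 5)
First, a point of comparison: the paper does not actually spell out a proof of this corollary --- its ``proof'' is the single sentence that the argument literally repeats that of \cite[Proposition 13]{BKLPS} (with an alternative exposition in \cite{Kaz2020}). Your overall strategy --- divide the identity of Lemma~\ref{lem:Cut-And-Join-P-Op} by $Z$, expand $\hat\cQ_{r,m}e^{\log Z}/e^{\log Z}$ into products of connected correlators, apply the $n$ external $p$-derivatives, set $p=0$, and resum into generating functions in the $X_i$ --- is indeed that strategy, and you correctly single out the $(0,2)$-term and the legitimacy of rearranging the infinite sums as the points needing care.

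There is, however, a genuine gap at the combinatorial core of your sketch. You assert that ``the combinatorial structure forces exactly one source vertex carrying the $X_k$-label,'' i.e.\ that a unique $p_{a_i}$-slot survives the specialization $p\to 0$. This is false as stated: every multiplication operator $p_{a_i}$ in $\hat\cQ_{r,m}$ must be matched with one of the external derivatives $\partial/\partial p_{d_j}$ before setting $p=0$, so terms in which two or more external indices are absorbed by the creation part of the operator do occur and are nonzero. The reduction to a single distinguished index $k$ is precisely what the replacement $H_{0,2}\rightsquigarrow\tilde H_{0,2}$ achieves: an extra external derivative $\partial/\partial p_{d_j}$ hitting a $p_{a_i}$-factor produces, after resummation and use of the constraint $\sum a_i=\sum b_i+m$, a factor of the form $\sum_{d\ge 1}(X_j/\xi)^d$ with $\xi$ restricted to $X_k$, and this is exactly (the relevant derivative of) the correction $\log\left(\frac{\xi-X_j}{\xi X_j}\right)=\tilde H_{0,2}(\xi,X_j)-H_{0,2}(\xi,X_j)$. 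Your explanation of the $\tilde H_{0,2}$-shift as a failure of the degree operator to commute with $\log$ misses this mechanism, and relatedly your dictionary is partly reversed: the $t$ variables $\xi_j$ carrying the operators $\zeta(zD_{\xi_j})/z$ and the restriction $\rfloor_{\xi_j=X_k}$ arise from the annihilation slots $\partial/\partial p_{b_i}$ (they are the arguments of the connected correlators created by those derivatives), not from ``the remaining $p_{a_i}$'s.'' Without a correct treatment of the multi-source terms, the argument as you describe it would not establish that the single-$k$ sum with $\tilde H_{0,2}$ equals the full right-hand side of Lemma~\ref{lem:Cut-And-Join-P-Op}.
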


\begin{proof} The proof literally repeats step by step the proof of~\cite[Proposition 13]{BKLPS}. See also a different exposition in~\cite{Kaz2020}.
\end{proof}

\begin{remark}
Assume now that $s_m=\sum_{a=0}^\infty s_{m,a}\hbar^{2a}$. Then the formula above gets a correction, namely 
\begin{align} \label{eq:cutjoin}
& \lb\sum_{i=1}^n D_{X_i}\rb \tilde H_{g,n} = \\ \notag
& \sum_{\substack{m\geq 1\\a\geq 0}}^\infty s_{m,a} \sum_{r=0}^\infty c_{r,m} 
\sum_{\substack{t \geq 0, d \geq 0 \\ t + 2d =r }} 
\frac{1}{t!} 
\sum_{\substack{\ell = 1,\dots, t \\ \text{or } \ell=0 \text{ if } t=0} }
\frac{1}{\ell!}
\sum_{\substack{
		\{ k\} \sqcup \bigsqcup_{j=1}^\ell K_j = \llbracket n \rrbracket \\ 
		\bigsqcup_{j=1}^\ell T_j =  \llbracket t \rrbracket\\ 
		T_j \neq \emptyset \\
		g-d-a = \sum_{j=1}^\ell g_j + t - \ell \\
		g_1,\ldots,g_{\ell} \geq 0 
}}  
Q_{d,m,t}^{(k)} \bigg[\prod_{j = 1}^{\ell} \tilde{H}_{g_j,|T_j|+|K_j|}(\xi_{T_j},X_{K_j})\bigg]\,
\end{align}
(note the appearance of $-a$ in the LHS of the second line from the bottom in the list of conditions of the large sum).
\end{remark}

\subsection{Equations for the extended Ooguri-Vafa $n$-point functions}  \label{sec:ChangeFromQW-to-CheckH} 
Let's adapt Equation~\eqref{eq:cutjoin} to the case of $Z^{\ext}$ of \eqref{eq:Zextff}. 
Consider the $n$-point functions given in Equation~\eqref{eq:Hgnff}:
\begin{align}
H_{g,n} = [\hbar^{2g-2+n}]\sum_{d_1,\dots,d_n} 
\frac{\partial^n \log Z^{\ext}}{\partial \tilde{p}_{d_1}\cdots\partial\tilde{p}_{d_n}} \bigg |_{\tilde{p}=0} \prod_{i=1}^n X_i^{d_i}\,. 
\end{align}
In our case $\Dop(\hbar) = \exp(\hbar \frac PQ \cF_2)$, which implies that $\phi(\hbar z ) = \exp(\hbar \frac PQ z)$. Therefore, $\prod_{j=\ell+\frac 12}^{\ell+m-\frac 12} \phi(\hbar \frac PQ j) = \exp(m\frac PQ \hbar(\ell+\frac m2))$, which implies that $c_{r,m} = \left(m\frac{P}{Q}\right)^r$ for any $r\geq 0$ and $m\geq 1$. 
The coefficients $s_m$ are equal to $p^*_i= (A^m-A^{-m}) \hbar m /  m \zeta(\hbar m )  $. Denote the coefficients of expansion of $z/\zeta(z)$ by $B_i$, that is, $z/\zeta(z) = \sum_{i=0}^\infty B_i z^{2i}$. Then $s_{m,a} = B_a m^{2a-1} (A^m-A^{-m}) $. Therefore, Equation~\eqref{eq:cutjoin} in our case takes the following form:

\begin{lemma} We have:
\begin{align}\label{eq:cutjoinOV}
&\left(\sum_{i=1}^n D_{X_i}\right) \tilde{H}_{g,n}\\ \nonumber
&= \sum_{\substack{m \geq 1 \\ a\geq 0 \\ d \geq 0}} \sum_{\substack{t \geq 0 \\ 1 \leq k \leq n}} B_a m^{2a} \dfrac{A^m-A^{-m}}{m} \left(\dfrac{P}{Q}m\right)^{2d+t}[u^{2d}] \dfrac{u}{\zeta(u)} \dfrac{\zeta(u D_{X_k})}{uD_{X_k}} X_k^m 
\;\widetilde{\mathcal{D}\mathcal{H}}^{k,t}_{g-d-a,n},
\end{align}
where
\begin{equation}\label{eq:Hcurlytilde}
\widetilde{\mathcal{D}\mathcal{H}}^{k,t}_{g-d-a,n} :=\sum_{\substack{\ell = 1,\dots, t \\ \text{or } \ell=0 \text{ if } t=0} }\dfrac{1}{\ell!}\sum_{\substack{
		\bigsqcup_{j=1}^\ell K_j = \llbracket n \rrbracket \setminus \{k\} \\ 
		\bigsqcup_{j=1}^\ell T_j =  \llbracket t \rrbracket\\ 
		T_j \neq \emptyset \\
		g-d-a = \sum_{j=1}^\ell g_j + t - \ell \\
		g_1,\ldots,g_{\ell} \geq 0 
}}
\prod_{i=1}^t \dfrac{1}{t!}\left(\bigg\rfloor_{\xi_i=X_k} \dfrac{\zeta(uD_{\xi_i})}{u} \right) 
\bigg[\prod_{j = 1}^{\ell} \tilde{H}_{g_j,|T_j|+|K_j|}(\xi_{T_j},X_{K_j})\bigg]
\end{equation}
\end{lemma}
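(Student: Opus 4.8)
The plan is to obtain the statement as the specialization of the general cut-and-join equation \eqref{eq:cutjoin} to the extended Ooguri--Vafa partition function; everything is substitution and regrouping. The first step is to recognize $Z^{\ext}$ of \eqref{eq:Zextff} as an instance of the general partition function \eqref{eq:GeneralZ}. Since the $n$-point functions \eqref{eq:Hgnff} are obtained by differentiating in the $\pt$-variables, which in \eqref{eq:Zextff} appear next to the annihilation operators on the right, I would first transpose the vacuum expectation value --- using $\alpha_m^\dagger=\alpha_{-m}$ and $\cF_2^\dagger=\cF_2$ --- to get
\[
Z^{\ext}(\pt)=\big\langle \exp\big(\textstyle\sum_{i\geq1}\tfrac{\alpha_i\pt_i}{i}\big)\,\exp\big(\tfrac{P}{Q}\hbar\,\cF_2\big)\,\exp\big(\textstyle\sum_{j\geq1}\tfrac{\alpha_{-j}p^*_j}{j}\big)\big\rangle,
\]
which is \eqref{eq:GeneralZ} with $\Dop(\hbar)=\exp(\tfrac{P}{Q}\hbar\,\cF_2)$, with the $\pt$'s as the differentiation variables, and with source coefficients $s_j=\hbar\,p^*_j$.

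The second step, which is elementary and is in fact already carried out in the paragraph preceding the lemma, is to evaluate the two families of structure constants entering \eqref{eq:cutjoin} for this data: from $\cF_2 v_\lambda=\kappa_\lambda v_\lambda$ one reads off $\phi(y)=\exp(\tfrac{P}{Q}y)$, hence $\prod_{k=\ell+1/2}^{\ell+m-1/2}\phi(\hbar k)=\exp\big(\hbar\tfrac{P}{Q}m(\ell+\tfrac{m}{2})\big)$, and by \eqref{eq:Definition-Crj} this gives $c_{r,m}=(\tfrac{P}{Q}m)^r$; and from $s_m=\hbar p^*_m=(A^m-A^{-m})\,\hbar/\zeta(\hbar m)$ together with the expansion $z/\zeta(z)=\sum_{a\geq0}B_a z^{2a}$ one gets $s_{m,a}=B_a m^{2a-1}(A^m-A^{-m})$.

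The final step is to substitute $s_{m,a}$ and $c_{r,m}$ into \eqref{eq:cutjoin} and reorganize. Writing $r=t+2d$ turns the $r$-sum into a double sum over $t,d\geq0$ with $c_{r,m}=(\tfrac{P}{Q}m)^{2d+t}$; then, recalling the generating-series definition \eqref{Qrdef} of $Q^{(k)}_{d,m,t}$, I would peel off the $X_k$-dependent operator $\tfrac{u}{\zeta(u)}\,\tfrac{\zeta(uD_{X_k})}{uD_{X_k}}\circ X_k^m$ as the prefactor of a $[u^{2d}]$-coefficient extraction, and gather everything that does not refer to $X_k$ --- the restriction-and-differentiation slots at $\xi_i=X_k$, the product of the $\tilde H_{g_j,\bullet}$, the combinatorial weights $\tfrac{1}{\ell!}$ and $\tfrac{1}{t!}$, and the constraint $g-d-a=\sum_j g_j+t-\ell$ --- into $\widetilde{\mathcal{D}\mathcal{H}}^{k,t}_{g-d-a,n}$ as in \eqref{eq:Hcurlytilde}. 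This reproduces \eqref{eq:cutjoinOV}. The only point requiring care is bookkeeping: tracking which powers of $\tfrac{P}{Q}m$ stay outside the $[u^{2d}]$-extraction and which enter the $u$-expansion, correctly separating the $X_k$-slot from the $\xi$-slots between the prefactor and $\widetilde{\mathcal{D}\mathcal{H}}^{k,t}$, and verifying that the combinatorial factors and the Euler-characteristic constraint are reproduced unchanged. There is no conceptual obstacle.
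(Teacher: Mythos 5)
Your proposal is correct and follows essentially the same route as the paper, which proves this lemma implicitly in the paragraph preceding it: read off $\phi(y)=\exp(\tfrac PQ y)$, hence $c_{r,m}=(\tfrac PQ m)^r$, and $s_{m,a}=B_a m^{2a-1}(A^m-A^{-m})$, then substitute into \eqref{eq:cutjoin} with $r=t+2d$ and regroup according to \eqref{Qrdef}. The only addition is that you make explicit the transposition of the vacuum expectation value (via $\alpha_m^\dagger=\alpha_{-m}$, $\cF_2^\dagger=\cF_2$) needed to match \eqref{eq:Zextff} to the template \eqref{eq:GeneralZ}, a step the paper leaves implicit; this is a legitimate and correctly justified point of care, not a deviation.
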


\begin{remark}
	Note that the sum over $a$ and $d$ in \eqref{eq:cutjoinOV} is finite due to $\widetilde{\mathcal{D}\mathcal{H}}^k_{g-d-a,n}$ vanishing for $d+a>g$.
\end{remark}

According to~\cite[Theorem 8.1]{DPSS}, the formal power series $\tilde H_{g,n}$ can also be considered as the expansions of the globally defined meromorphic functions on the rational spectral curve with the global coordinate $z$, where $X=X(z)$ is given in Equation~\eqref{eq:SpectralCurve-Definition-X-y}. Abusing the notation a little bit, we denote these meromorphic functions also by $\tilde H_{g,n}$. Our goal is to interpret Equation~\eqref{eq:cutjoinOV} as an equation for meromorphic functions on the curve.

Let $X_i = X(z_i)$. Assume that $|A| \leq |A^{-1}|$. We have the following lemma:

\begin{lemma} For $(g,n)\neq (0,1)$, for each $k=1,\dots,n$ the corresponding summand on the right hand side of Equation~\eqref{eq:cutjoinOV} is a finite linear combination of rational functions in $z_1,\dots,z_n$ with the coefficients given by power series in $z_k$ that converge to  holomorphic functions on the disk $\{|z_k|<|A|\}$. 
\end{lemma}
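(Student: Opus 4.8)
The plan is to analyze the right-hand side of Equation~\eqref{eq:cutjoinOV} term by term, for a fixed $k$, and establish analyticity in $z_k$ on a disk around the origin while showing rationality in the remaining variables. First I would recall (from~\cite[Theorem 8.1]{DPSS} and the quasi-polynomiality property) the structure of the meromorphic functions $\tilde H_{g',n'}$ on $\Sigma$: for $(g',n')\neq (0,1)$ they are rational in each $z_i$, with poles only at the zeroes $p_i$ of $dX$ (away from $z=0$), and in particular each $\tilde H_{g',n'}$ is holomorphic in a neighbourhood of $z_i = 0$ for every argument; moreover near $z_i=0$ one has $X(z_i)\to 0$, so $\tilde H_{g',n'}$ expands as a power series in the $X_i$ with no constant term in each variable. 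The exceptional piece $\tilde H_{0,2}$ contributes the extra $\log\big((\xi_i-X_j)/(\xi_i X_j)\big)$ term, but after acting by the operators $D_{X_i}$ (and the $\zeta(uD_{\xi_i})/u$ operators, which are built from $D_{\xi_i}$) this becomes a rational function, so it is harmless; I would treat it as a separate easy case.

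Next I would unwind the operator content of a single summand. The quantity $\widetilde{\mathcal{D}\mathcal{H}}^{k,t}_{g-d-a,n}$ is, after the substitutions $\xi_i = X_k$, a finite sum of products $\prod_j \tilde H_{g_j,\cdot}$ in which some arguments are among the $X_{K_j}$ (hence rational in the corresponding $z$'s) and the remaining arguments have all been specialized to $X_k$; each specialization is preceded by a differential operator $\zeta(uD_{\xi_i})/u$ in $\xi_i$, which is a formal power series in $u$ with coefficients that are polynomials (in $D_{\xi_i}$) sending holomorphic functions near $\xi_i=0$ to holomorphic functions near $\xi_i=0$. Since the sum over $d,a$ is finite (by the remark following the lemma, $\widetilde{\mathcal{D}\mathcal{H}}$ vanishes for $d+a>g$) and the sums over $t$ and $\ell$ are finite for fixed $g$, the only genuinely infinite sum is over $m\geq 1$. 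After extracting $[u^{2d}]$, the $m$-dependence of the $k$-th summand is: a factor $(A^m - A^{-m})/m$ times $m^{2a}\,(Pm/Q)^{2d+t}$, times the operator $\frac{u}{\zeta(u)}\frac{\zeta(uD_{X_k})}{uD_{X_k}} X_k^m$ applied to (a finite linear combination of functions holomorphic near $z_k=0$), the whole thing evaluated at a fixed order in $u$. Thus the $k$-th summand is, up to a fixed polynomial differential operator in $D_{X_k}$ acting in the $z_k$ variable, a finite linear combination of series of the form $\sum_{m\geq 1} m^{N}(A^m-A^{-m})\, X_k^m\, f_m(z_k)$, where $f_m(z_k)$ collects the Taylor coefficients (at $X_k=0$) of the specialized $\tilde H$'s — but in fact the inner functions do not depend on $m$ at all except through the monomial $X_k^m$ produced by the $X_k^m\circ$ insertion, so after reorganizing it is a finite combination of $D_{X_k}$-derivatives of $\sum_{m\geq 1}m^N(A^m-A^{-m}) X(z_k)^m g(z_k)$ for fixed holomorphic $g$.

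The remaining point is then a convergence estimate: I need $\sum_{m\geq 1} m^{N}(A^m - A^{-m}) X(z_k)^m$ to converge for $|z_k|<|A|$. Near $z_k=0$ we have $X(z_k) = z_k\exp(-\tfrac PQ y(z_k))$ with $y(z_k)=\log\big((1-A^{-1}z_k)/(1-Az_k)\big)$, so $y$ is holomorphic at $0$ with $y(0)=0$ and hence $X(z_k)/z_k \to 1$; more precisely the radius of convergence of the power series of $X(z_k)$ is governed by the singularities of $y$, which sit at $z_k = A$ and $z_k = A^{-1}$, the nearest being $z_k=A$ since $|A|\le|A^{-1}|$. One checks that $|X(z_k)| \sim |z_k|$ to leading order and, using the explicit form, that $|X(z_k)|\cdot|A^{-1}| < 1$ on the disk $\{|z_k|<|A|\}$ — this is where the geometry of the spectral curve enters and is the step I expect to be the main obstacle: one must verify that $|A^{-1} X(z_k)|<1$ (equivalently $|X(z_k)|<|A|$) holds throughout $\{|z_k|<|A|\}$, controlling the factor $(1-Az_k)^{P/Q}$ in the denominator of $X$ and the possibly-negative exponent $-P/Q$. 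Granting this, $\sum m^N (A^m - A^{-m}) X(z_k)^m = \sum m^N A^m X(z_k)^m - \sum m^N A^{-m} X(z_k)^m$ converges absolutely and locally uniformly on $\{|z_k|<|A|\}$ (each summand is, up to the polynomial factor $m^N$, a geometric series in $AX(z_k)$ resp. $A^{-1}X(z_k)$ with ratio of modulus $<1$), defining a holomorphic function there; applying the fixed polynomial differential operator $D_{X_k}$ preserves holomorphy on this disk. Combining with the rationality in $z_1,\dots,\widehat{z_k},\dots,z_n$ coming from the $\tilde H_{g_j,\cdot}$ factors and the polynomial-in-$D$ operators, the $k$-th summand is a finite linear combination of rational functions in all the $z_i$ with coefficients holomorphic in $z_k$ on $\{|z_k|<|A|\}$, which is exactly the claim; the case $(g,n)\neq(0,1)$ hypothesis is needed precisely so that $\tilde H_{g,n}$ on the left (and hence the output) is not the exceptional $\omega_{0,1}$-type object, and so that the $\tilde H_{0,2}$ substitutions only ever appear after differentiation.
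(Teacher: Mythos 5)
There is a genuine gap, and it sits exactly where you flagged your ``main obstacle'' --- plus one structural error earlier that is in fact the key to resolving that obstacle. First, your assertion that ``the sums over $t$ and $\ell$ are finite for fixed $g$'' is false. In $\widetilde{\mathcal{D}\mathcal{H}}^{k,t}_{g-d-a,n}$ a part of the partition may be $(g_j,K_j,T_j)=(0,\emptyset,\{*\})$, contributing a factor $\tilde H_{0,1}(\xi_*)$; each such part raises $t$ and $\ell$ by one and leaves the constraint $g-d-a=\sum g_j+t-\ell$ untouched, and the operator $u^{-1}\zeta(uD_{\xi_*})$ acting on it has a nonzero $u^0$ term $D_{\xi_*}\tilde H_{0,1}$, so extracting $[u^{2d}]$ gives no suppression either. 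Hence every $t$ contributes and the $t$-sum is genuinely infinite; your reorganization into ``a finite combination of $D_{X_k}$-derivatives of $\sum_m m^N(A^m-A^{-m})X(z_k)^m g(z_k)$'' does not go through as stated. The paper's proof isolates precisely these $D_{\xi}\tilde H_{0,1}$ insertions (the passage from $\widetilde{\mathcal{D}\mathcal{H}}$ to $\widecheck{\mathcal{D}\mathcal{H}}$), shows the \emph{residual} index $t_2$ is bounded, and resums the unbounded number $t_1$ of $\tilde H_{0,1}$-insertions into $\exp\big(\tfrac PQ m\,D_{X_k}\tilde H_{0,1}(X_k)\big)$.

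Second, the convergence step you defer does in fact fail: on the open disk $\{|z_k|<|A|\}$ one has $X(z_k)=z_k\big(\tfrac{1-Az_k}{1-A^{-1}z_k}\big)^{P/Q}$, and the boundary point $z_k=A$ is a singularity of the exponentiated factor ($1-A^{-1}z_k\to 0$ there), so $|X(z_k)|$ is unbounded on the disk for $P/Q>0$ and the bound $|A^{-1}X(z_k)|<1$ cannot hold throughout; the series $\sum_m m^N A^{-m}X(z_k)^m$ diverges near that boundary point. The two gaps cancel against each other in the paper's argument: the resummed exponential combines with $X_k^m$ via the spectral-curve relation $X=z\exp(-\tfrac PQ y)$, $y=D_X\tilde H_{0,1}$, to give exactly $z_k^m$, after which the $m$-sum is $\big(z_k\tfrac{d}{dz_k}\big)^{2a+2d+t_2}$ applied to $\log\tfrac{1-A^{-1}z_k}{1-Az_k}$, manifestly convergent on $\{|z_k|<|A|\}$ (and rational in $z_k$ once at least one derivative acts, which is where $(g,n)\neq(0,1)$ is used). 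Without the resummation of the $\tilde H_{0,1}$ factors there is no way to repair the convergence in $X_k^m$, so the missing idea is essential rather than a technicality.
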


\begin{proof} Note that the terms in the sum in \eqref{eq:Hcurlytilde} sometimes contain factors $\tilde{H}_{0,1}(\xi_{s})$ (this happens when one of the $g_j$ is zero, and the corresponding $K_j$ is empty, while $|T_j|=1$). Now let us define by 
\begin{equation}
\widecheck{\mathcal{D}\mathcal{H}}^{k,t}_{g-d-a,n}
\end{equation}
the sum \eqref{eq:Hcurlytilde} where in all the terms which contain factors of $\tilde{H}_{0,1}(\xi_{s})$ the corresponding operators $u^{-1}\zeta(uD_{\xi_s})$ are replaced by $u^{-1}\zeta(uD_{\xi_s})- D_{\xi_s}$. In other words, $\widecheck{\mathcal{D}\mathcal{H}}^{k,t}_{g-d-a,n}$ is $\widetilde{\mathcal{D}\mathcal{H}}^{k,t}_{g-d-a,n}$ where we represent $u^{-1}\zeta(uD_{\xi_i})$ as $D_{\xi_i}+\mathcal{O}_i$, expand all these brackets, and then drop all terms containing $D_{\xi_i}\tilde{H}_{0,1}(\xi_{i})$. Note that this definition implies that any sum of the form
\begin{equation}\label{eq:DHorigkappa}
\sum_{t=0}^\infty \varkappa_t \widetilde{\mathcal{D}\mathcal{H}}^{k,t}_{g-d-a,n}
\end{equation}
can be rewritten as
\begin{equation}\label{eq:checkDHsum}
\sum_{t_1,t_2=0}^\infty \varkappa_{t_1+t_2}\dfrac{1}{t_1!} \left(D_{X_k}\tilde{H}_{0,1}(X_k)\right)^{t_1} \widecheck{\mathcal{D}\mathcal{H}}^{k,t_2}_{g-d-a,n}\,.
\end{equation}
by setting $t=t_1+t_2$, where $t_1$ counts the $D_{\xi_i}\tilde{H}_{0,1}(\xi_{i})$ factors. Here we used the fact that 
\begin{equation}
\prod_{i=1}^{t_1}\bigg\rfloor_{\xi_{s_i}=X_k} D_{\xi_{s_i}} \tilde{H}_{0,1}(\xi_{s_i}) = \left(D_{X_k}\tilde{H}_{0,1}(X_k)\right)^{t_1}.
\end{equation}

Apply this \eqref{eq:DHorigkappa} $\rightsquigarrow$ 
\eqref{eq:checkDHsum} procedure to the $k$-th summand on the right hand side of Equation~\eqref{eq:cutjoinOV}. We obtain:
\begin{align}
\label{eq:cutjoinOVrearr11}
\sum_{\substack{m \geq 1 \\ a\geq 0 \\ d \geq 0}}  B_a m^{2a} \dfrac{A^m-A^{-m}}{m} \sum_{t_1,t_2 = 0}^\infty\left(\dfrac{P}{Q}m\right)^{2d+t_1+t_2}[u^{2d}] \dfrac{u}{\zeta(u)} \dfrac{\zeta(u D_{X_k})}{uD_{X_k}} X_k^m\; \dfrac{1}{t_1!} \left(D_{X_k}\tilde{H}_{0,1}(X_k)\right)^{t_1} \widecheck{\mathcal{D}\mathcal{H}}^{k,t_2}_{g-d-a,n}.
\end{align}
Note that only finitely many terms in the sum over $t_2$ give a nonvanishing contribution to the RHS of \eqref{eq:cutjoinOVrearr11}.
This can be seen as follows. $\widecheck{\mathcal{D}\mathcal{H}}^{k,t_2}_{g-d-a,n}$ is still a sum of terms of the form
\begin{equation}
\prod_{j = 1}^{\ell} \tilde{H}_{g_j,|T_j|+|K_j|}(\xi_{T_j},X_{K_j}),
\end{equation}
acted upon with some operators, and subject to the conditions $\bigsqcup_{j=1}^\ell K_j = \llbracket n \rrbracket \setminus \{k\},\, 
\bigsqcup_{j=1}^\ell T_j =  \llbracket t_2 \rrbracket,\, 
T_j \neq \emptyset,\,
g-d-a = \sum_{j=1}^\ell g_j + t_2 - \ell,\,
g_1,\ldots,g_{\ell} \geq 0 $.
Now for each $\tilde{H}_{g_j,|T_j|+|K_j|}(\xi_{T_j},X_{K_j})$ entering the expression for $\widecheck{\mathcal{D}\mathcal{H}}^{k,t_2}_{g-d-a,n}$ either $K_j\neq \emptyset$, or $g_i\neq 0$, or $|T_j|\geq 2$, or it is actually a $\tilde H_{0,1}$. All cases but the latter correspond to increases in $\sum_{j=1}^\ell g_j + t_2 - \ell$, and it is bounded from above by $g$. In the latter case the operator acting on $\tilde H_{0,1}$ is $u^{-1}\zeta(uD_{\xi_s}) - D_{\xi_s}$, whose series expansion in $u$ starts with a term proportional to $u^1$, and the total power of $u$ is bounded from above by $2 d$ 
(which is itself bounded from above by $2g$), due to $[u^{2d}]$ being present in the formula. Thus the total number of $\xi$-arguments of $\widecheck{\mathcal{D}\mathcal{H}}^{k,t_2}_{g-d-a,n}$, denoted by $t_2$, is bounded from above for fixed $g$ and $n$.

We can take the sum over $t_1$ in~\eqref{eq:cutjoinOVrearr11} and obtain
\begin{align}
\label{eq:cutjoinOVrearr1}
&\sum_{\substack{m \geq 1 \\ a\geq 0 \\ d \geq 0}} B_a m^{2a} \dfrac{A^m-A^{-m}}{m} \sum_{t_2=0}^\infty\left(\dfrac{P}{Q}m\right)^{2d+t_2}[u^{2d}] \dfrac{u}{\zeta(u)} \dfrac{\zeta(u D_{X_k})}{uD_{X_k}} X_k^m\; \exp\left(\dfrac{P}{Q}\,m\,D_{X_k}\tilde{H}_{0,1}(X_k)\right) \widecheck{\mathcal{D}\mathcal{H}}^{k,t_2}_{g-d-a,n}\,.
\end{align}
Recall that our spectral curve has the form
\begin{equation}
X(z) = z \exp\left(-\dfrac{P}{Q} y(z)\right)\qquad \text{and} \qquad y=D_{X}\tilde{H}_{0,1}(X),
\end{equation}
which means that 
\begin{equation}
X_k^m\; \exp\left(\dfrac{P}{Q}\,m\,D_{X_k}\tilde{H}_{0,1}(X_k)\right) = z_k^m,
\end{equation}
and thus 
we can rewrite \eqref{eq:cutjoinOVrearr1} in terms of variables $z_k$ instead of $X_k$ as
\begin{align}\label{eq:cutjfexpr}
&\sum_{\substack{m \geq 1 \\ a\geq 0 \\ d \geq 0}} B_a m^{2a} \dfrac{A^m-A^{-m}}{m} \sum_{t_2=0}^\infty\left(\dfrac{P}{Q}m\right)^{2d+t_2}[u^{2d}] \dfrac{u}{\zeta(u)} \dfrac{\zeta(u 
	D_{X_k}
	)}{u
	D_{X_k}
} z_k^m \widecheck{\mathcal{D}\mathcal{H}}^{k,t_2}_{g-d-a,n}\\ \nonumber
&= \sum_{\substack{a\geq 0 \\ d \geq 0}}   \sum_{t_2=0}^\infty B_a\left(\dfrac{P}{Q}\right)^{2d+t_2}[u^{2d}] \dfrac{u}{\zeta(u)} \dfrac{\zeta(u
	D_{X_k}
	)}{u
	D_{X_k}
} \left(\widecheck{\mathcal{D}\mathcal{H}}^{k,t_2}_{g-d-a,n} \sum_{m=1}^\infty m^{2a+2d+t_2-1} \left(A^m-A^{-m}\right)z_k^m\right)
\end{align}
Consider the sum over $m$ in the above expression. It can be rewritten as follows:
\begin{equation}
\label{eq:msum}
\sum_{m=1}^\infty m^{2a+2d+t_2-1} \left(A^m-A^{-m}\right)z_k^m = \left(
z\dfrac{d}{dz}
\right)^{2a+2d+t_2}\sum_{m=1}^\infty \dfrac{A^m-A^{-m}}{m}z_k^m.
\end{equation}
Under the assumption that $|z_k|<|A| < |A^{-1}|$ the sum 
\begin{equation}
\sum_{m=1}^\infty \dfrac{A^m-A^{-m}}{m}z_k^m
\end{equation}
converges to 
\begin{equation}\label{eq:logs}
\log\left(1-A^{-1}z_k\right)-\log\left(1-A_kz\right).
\end{equation}
Note that there is a nonzero contribution of $2a+2d+t_2 = 0$ only for $(g,n)=(0,1)$. Indeed, for $a=d=t_2=0$ the factor $\widecheck{\mathcal{D}\mathcal{H}}^{k,t_2}_{g-d-a,n}$ is nonzero only for $g=0$, $n=1$, since otherwise in the corresponding \eqref{eq:Hcurlytilde}-like expression either the condition $g-d-a = \sum_{j=1}^\ell g_j + t - \ell$ or the condition $\bigsqcup_{j=1}^\ell K_j = \llbracket n \rrbracket \setminus \{k\}$, respectively, cannot be satisfied, which makes the sum empty.

Thus for $(g,n)\neq (0,1)$ we can assume that $2a+2d+t_2 > 0$. This means that function \eqref{eq:logs} once substituted into \eqref{eq:cutjfexpr} via \eqref{eq:msum} gets acted upon by at least one instance of the operator $z d/dz$, which makes it a rational function in $z_k$. 
All the sums are now finite, and all the other factors and operators preserve the property of a function of $z_k$ being a rational function, which implies the statement of the lemma (taking also into account the fact that the dependence of our $k$-th summand on all other $z_i$ for $i\neq k$ is rational, as follows from all $H_{g,n}$, $(g,n)\neq(0,1)$, being rational functions on the spectral curve, as implied by~\cite[Theorem 8.1]{DPSS}).
\end{proof}

An immediate corollary of this lemma is the following:

\begin{corollary} For $|z_i|<|A|$, $i=1,\dots,n$, the cut-and-join equation~\eqref{eq:cutjoinOV} implies the following identity of meromorphic functions:
\begin{align}
\label{eq:cutjoinOVrearr2}
&\left(\sum_{i=1}^n D_{X_i}\right) \tilde{H}_{g,n}\\ \nonumber
&= \sum_{\substack{a\geq 0 \\ d \geq 0}} \sum_{1 \leq k \leq n}   \sum_{t_2=0}^\infty B_a\left(\dfrac{P}{Q}\right)^{2d+t_2}[u^{2d}] \dfrac{u}{\zeta(u)} \dfrac{\zeta(u
	D_{X_k}
	)}{u
	D_{X_k}
} \left(\widecheck{\mathcal{D}\mathcal{H}}^{k,t_2}_{g-d-a,n} \sum_{m=1}^\infty m^{2a+2d+t_2-1} \left(A^m-A^{-m}\right)z_k^m\right)
\\ \nonumber
&= \sum_{\substack{a\geq 0 \\ d \geq 0}} \sum_{1 \leq k \leq n}   \sum_{t_2=0}^\infty B_a\left(\dfrac{P}{Q}\right)^{2d+t_2}[u^{2d}] \dfrac{u}{\zeta(u)} \dfrac{\zeta(u
	D_{X_k}
	)}{u
	D_{X_k}
} \left(\widecheck{\mathcal{D}\mathcal{H}}^{k,t_2}_{g-d-a,n} \left(
z_k\dfrac{d}{dz_k}
\right)^{2a+2d+t_2} \log \dfrac{1-A^{-1}z_k}{1-Az_k}  \right)\,.
\end{align}
\end{corollary}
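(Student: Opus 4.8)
The plan is to carry out, for each fixed $k\in\{1,\dots,n\}$, exactly the chain of rewritings already performed inside the proof of the preceding lemma, and then to upgrade the resulting equality of formal power series in $z_1,\dots,z_n$ to an equality of meromorphic functions on $\Sigma^n$ using the output of that lemma together with \cite[Theorem 8.1]{DPSS}.

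Concretely, I would start from Equation~\eqref{eq:cutjoinOV} and, in the $k$-th summand of its right-hand side, pass from $\widetilde{\mathcal{D}\mathcal{H}}^{k,t}_{g-d-a,n}$ to $\widecheck{\mathcal{D}\mathcal{H}}^{k,t_2}_{g-d-a,n}$ by expanding each operator $u^{-1}\zeta(uD_{\xi_i})$ as $D_{\xi_i}+\mathcal{O}_i$ and isolating the $t_1$ factors $D_{\xi_i}\tilde{H}_{0,1}(\xi_i)$; this produces \eqref{eq:cutjoinOVrearr11}, and summing over $t_1$ resums those factors into $\exp\!\big(\tfrac{P}{Q}m\,D_{X_k}\tilde{H}_{0,1}(X_k)\big)$, giving \eqref{eq:cutjoinOVrearr1}. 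Then I would invoke the defining relation of the spectral curve, $X(z)=z\exp\!\big(-\tfrac{P}{Q} y(z)\big)$ with $y=D_X\tilde{H}_{0,1}(X)$ (see~\eqref{eq:SpectralCurve-Definition-X-y}), so that the prefactor $X_k^m\exp\!\big(\tfrac{P}{Q}m\,D_{X_k}\tilde{H}_{0,1}(X_k)\big)$ collapses to $z_k^m$ and the $k$-th summand becomes~\eqref{eq:cutjfexpr}. Finally I would resum over $m$ via~\eqref{eq:msum}, rewriting $\sum_{m\ge1}m^{2a+2d+t_2-1}(A^m-A^{-m})z_k^m$ as $\big(z_k\tfrac{d}{dz_k}\big)^{2a+2d+t_2}$ applied to $\sum_{m\ge1}\tfrac{A^m-A^{-m}}{m}z_k^m$, which for $|z_k|<|A|<|A^{-1}|$ converges to $\log\frac{1-A^{-1}z_k}{1-Az_k}$; assembling all $k$-summands then gives exactly~\eqref{eq:cutjoinOVrearr2}.

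The one genuinely substantive point is the passage from an identity of formal power series around $z_i=0$ to an identity of meromorphic functions on $\Sigma^n$, and this is supplied by the preceding lemma: for $(g,n)\neq(0,1)$ its proof shows that each rearranged $k$-th summand is a \emph{finite} linear combination of rational functions of $z_1,\dots,z_n$ with coefficients holomorphic on $\{|z_k|<|A|\}$ (finiteness of the $t_2$-sum coming from $\sum_j g_j+t_2-\ell\le g$ and from the total $u$-power being at most $2d\le 2g$). Hence the right-hand side of~\eqref{eq:cutjoinOVrearr2} is meromorphic on $\Sigma^n$, while the left-hand side $\big(\sum_i D_{X_i}\big)\tilde{H}_{g,n}$ is meromorphic by \cite[Theorem 8.1]{DPSS} (the operators $D_{X_i}$ being meromorphic in the global coordinates $z_i$); since $\Sigma^n$ is connected and the two sides agree as power series on a polydisk around $z=0$, they agree everywhere. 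For $(g,n)=(0,1)$ the claimed identity degenerates to the tautology $D_{X_1}\tilde{H}_{0,1}=y$, which is the defining relation of the curve itself.

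The step requiring the most care is the upgrade just described: one must keep track of which manipulations are purely formal — the $t_1$-resummation into an exponential, the substitution $X_k^m\exp(\cdots)=z_k^m$ — and which require genuine finiteness or convergence input, namely the truncation of the $t_2$-sum, the summability of $\sum_{m\ge1}\tfrac{A^m-A^{-m}}{m}z_k^m$ on $\{|z_k|<|A|\}$, and the identification of both sides with restrictions of honest meromorphic functions. All of these were already established in the preceding lemma, so once it is in hand the corollary follows by direct substitution, which is why it is an immediate consequence.
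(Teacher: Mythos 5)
Your proposal is correct and follows essentially the same route as the paper: the paper presents this corollary as an immediate consequence of the preceding lemma, whose proof already contains the exact chain of rewritings \eqref{eq:cutjoinOV} $\to$ \eqref{eq:cutjoinOVrearr11} $\to$ \eqref{eq:cutjoinOVrearr1} $\to$ \eqref{eq:cutjfexpr} and the resummation over $m$ via \eqref{eq:msum}--\eqref{eq:logs} that you reproduce. Your explicit justification of the upgrade from formal power series to meromorphic functions (finiteness of the $t_2$-sum, convergence on $\{|z_k|<|A|\}$, and \cite[Theorem 8.1]{DPSS}) and your handling of the $(g,n)=(0,1)$ case match the lemma's proof and the surrounding remarks in the paper.
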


\subsection{Analytic continuation of the cut-and-join equation} \label{sec:AnalyticCont}

For the purposes of topological recursion we are interested of the behavior of the expression \eqref{eq:cutjoinOVrearr2}
on the whole spectral curve (specifically, near the branching points of $X(z)$).  
We have to analytically continue the functions from the previous subsection beyond the disks $\{|z|<|A|\}$.

Recall that we assume $|A| < |A^{-1}|$, and there are three rings in which we would like to consider the re-expansion of the right hand side of~\eqref{eq:logs}:
\begin{align}\label{eq:R1def}
	R_1 &\coloneqq \{0<|z|<|A|\}, \\
	R_2 &\coloneqq \{|A|<|z|<|A|^{-1}\}, \\ \label{eq:R3def}
	R_3 &\coloneqq \{|A|^{-1}<|z|<\infty\}
\end{align}
($R_1$ is just the original disk $\{|z|<|A|\}$, but from now on we exclude $z=0$ and call it a ring to unify the terminology).

We have the following re-expansions of  \eqref{eq:logs} in these rings:
\begin{align}  \label{eq:LogsReexpansions}
& z\in R_1: & & \sum_{m=1}^\infty \dfrac{A^m-A^{-m}}{m}z_k^m, \\ \notag
& z\in R_2: & & -\log z +  \sum_{m=1}^\infty \dfrac{A^{-m}}{m} z^{-m} - \sum_{m=1}^\infty \dfrac{A^{-m}}{m} z^m, \\ \notag
& z\in R_3: & &\sum_{m=1}^\infty \dfrac{A^{-m}-A^{m}}{m} z^{-m} . 
\end{align}

We have
\begin{lemma} Consider the point $(q_1,\dots,q_n)$, where $q_i\in R_{t_i}$, $i=1,\dots,n$, $t_i\in \{1,2,3\}$. Near this point, the analytic extension of Equation~\eqref{eq:cutjoinOVrearr2} is still the sum of $n$ summands for $k=1,\dots,n$, where the $k$-th summand is 
\begin{align} \label{eq:RHD-CJ-Ring1}
\sum_{\substack{a\geq 0 \\ d \geq 0}}   \sum_{t_2=0}^\infty B_a\left(\dfrac{P}{Q}\right)^{2d+t_2}[u^{2d}] \dfrac{u}{\zeta(u)} \dfrac{\zeta(u
	D_{X_k}
	)}{u
	D_{X_k}
} \left(\widecheck{\mathcal{D}\mathcal{H}}^{k,t_2}_{g-d-a,n} 
\sum_{m=1}^\infty m^{2a+2d+t_2-1} (A^m-A^{-m})z_k^m \right),  \\ \notag
q_k\in R_1; \\ \label{eq:RHD-CJ-Ring2}
\sum_{\substack{a\geq 0 \\ d \geq 0}}   \sum_{t_2=0}^\infty B_a\left(\dfrac{P}{Q}\right)^{2d+t_2}[u^{2d}] \dfrac{u}{\zeta(u)} \dfrac{\zeta(u
	D_{X_k}
	)}{u
	D_{X_k}
} \Bigg(\widecheck{\mathcal{D}\mathcal{H}}^{k,t_2}_{g-d-a,n} \Bigg(-\delta_{2a+2d+t_2,0} \log z_k -\delta_{2a+2d+t_2,1}
\\ \notag 
-\sum_{m=1}^\infty (-m)^{2a+2d+t_2-1} A^{-m} z_k^{-m}
-\sum_{m=1}^\infty m^{2a+2d+t_2-1} A^{-m}z_k^m \Bigg)\Bigg),  
\\ \notag
q_k\in R_2; \\ \label{eq:RHD-CJ-Ring3}
\sum_{\substack{a\geq 0 \\ d \geq 0}}   \sum_{t_2=0}^\infty B_a\left(\dfrac{P}{Q}\right)^{2d+t_2}[u^{2d}] \dfrac{u}{\zeta(u)} \dfrac{\zeta(u
	D_{X_k}
	)}{u
	D_{X_k}
} \left(\widecheck{\mathcal{D}\mathcal{H}}^{k,t_2}_{g-d-a,n} 
\sum_{m=1}^\infty (-m)^{2a+2d+t_2-1} (A^m-A^{-m})z_k^{-m} \right),  \\ \notag
q_k\in R_3 
\end{align}
\end{lemma}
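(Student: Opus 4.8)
The plan is to prove the lemma by straightforward analytic continuation of the identity from Corollary (Equation~\eqref{eq:cutjoinOVrearr2}), exploiting the fact that its right-hand side is already written, for each $k$, as a finite sum of differential operators in $D_{X_k}$ (equivalently $z_k\,d/dz_k$, modulo the known correspondence between the two) acting on products of the $\tilde H_{g_j,n_j}$ and on the single transcendental building block $\log\frac{1-A^{-1}z_k}{1-Az_k}$. By~\cite[Theorem 8.1]{DPSS} all the $\tilde H_{g,n}$ with $(g,n)\neq(0,1)$ are genuine rational functions on the whole spectral curve $\Sigma$, so their analytic continuation from $R_1^n$ to a neighbourhood of $(q_1,\dots,q_n)$ is unambiguous; the operators $\frac{u}{\zeta(u)}$, $\frac{\zeta(uD_{X_k})}{uD_{X_k}}$ and the $D_{\xi_i}$'s appearing inside $\widecheck{\mathcal{D}\mathcal{H}}$ are all differential (or local restriction) operators with rational/constant coefficients and hence commute with analytic continuation. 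Therefore the only thing that genuinely needs to be tracked across the walls $|z_k|=|A|$ and $|z_k|=|A^{-1}|$ is the scalar function \eqref{eq:logs}, and the whole content of the lemma reduces to recording which of the three expansions in \eqref{eq:LogsReexpansions} is the relevant one in each ring, and then feeding that expansion through the (finitely many) differential operators.

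Concretely, I would carry out the argument in the following steps. First, observe that since all the operators act on the $k$-th variable only and leave the dependence on the other $z_i$ ($i\neq k$) as the rational functions it already is, it suffices to fix all $q_i$ with $i\neq k$ generically and analytically continue in $z_k$ alone; thus the problem is genuinely one-dimensional in $z_k$. Second, recall from the proof of the previous lemma that the operator acting on the scalar block is, up to the rational prefactor $X_k^m$ which was absorbed into $z_k^m$, the operator $(z_k\,d/dz_k)^{2a+2d+t_2}$ (with the $\delta$-terms recording the $2a+2d+t_2=0$ and, after the $u^{-1}\zeta(uD_{\xi_s})-D_{\xi_s}$ correction inside $\widecheck{\mathcal{D}\mathcal{H}}$, the linear-in-$u$ behaviour that produces the constant contribution at order $1$). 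Third, apply $(z_k\,d/dz_k)^{N}$, $N:=2a+2d+t_2$, term by term to each of the three series in \eqref{eq:LogsReexpansions}: in $R_1$ one gets $\sum_m m^{N-1}(A^m-A^{-m})z_k^m$, giving \eqref{eq:RHD-CJ-Ring1}; in $R_2$ one differentiates $-\log z_k+\sum_m\frac{A^{-m}}{m}z_k^{-m}-\sum_m\frac{A^{-m}}{m}z_k^m$, where $(z_k\,d/dz_k)(-\log z_k)=-1$ accounts for the $\delta_{N,1}$ term, $(z_k\,d/dz_k)^N(-\log z_k)=0$ for $N\geq 2$, and differentiating the two power series produces $-\sum_m(-m)^{N-1}A^{-m}z_k^{-m}-\sum_m m^{N-1}A^{-m}z_k^{m}$, yielding \eqref{eq:RHD-CJ-Ring2} once the $N=0$ case $-\log z_k$ is reinstated via $\delta_{2a+2d+t_2,0}$; and in $R_3$ one differentiates $\sum_m\frac{A^{-m}-A^m}{m}z_k^{-m}$ to get $\sum_m(-m)^{N-1}(A^m-A^{-m})z_k^{-m}$, giving \eqref{eq:RHD-CJ-Ring3}. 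Fourth, note that the $\delta$-contributions only occur when $N=0$ or $N=1$, which by the parity/finiteness discussion in the previous lemma happens only in controlled low $(g,n)$ situations (in particular $N=0$ only for $(g,n)=(0,1)$, excluded here), so for $(g,n)\neq(0,1)$ the continuation is the plain substitution of the re-expanded series, and the finiteness of all the sums (already established) is preserved.

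The step I expect to require the most care — though it is bookkeeping rather than a deep obstacle — is making sure that the replacement of the transcendental block by its re-expansion is done \emph{before} the operators are applied and that this is legitimate: i.e. that $\frac{u}{\zeta(u)}\frac{\zeta(uD_{X_k})}{uD_{X_k}}$ and the $D_{\xi_i}$'s genuinely commute with passing from \eqref{eq:logs} to its Laurent expansion in a given ring, uniformly in the (finitely many) relevant orders in $u$. This is true because each such operator, expanded in $u$, is a finite-order differential operator in $z_k$ with coefficients holomorphic on the ring in question (the only mildly singular pieces being powers of $z_k\,d/dz_k$, whose coefficients are monomials in $z_k$), so term-by-term differentiation of the locally uniformly convergent Laurent series is valid; and since we are continuing rational functions (the $\tilde H$'s) plus the explicit elementary function \eqref{eq:logs}, there is no monodromy to worry about beyond the branch of the logarithm, whose three determinations on $R_1,R_2,R_3$ are exactly the three lines of \eqref{eq:LogsReexpansions}. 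A minor additional point is to check that the $D_{X_k}\rightsquigarrow z_k\,d/dz_k$ conversion used silently above — which relies on $X_k^m e^{\frac PQ m D_{X_k}\tilde H_{0,1}(X_k)}=z_k^m$ and hence on $y=D_X\tilde H_{0,1}(X)$ — continues to hold on all of $\Sigma$, but this is immediate since $X(z)=z\exp(-\frac PQ y(z))$ is the defining relation of the curve and holds globally. Assembling these observations gives the three displayed expressions and completes the proof.
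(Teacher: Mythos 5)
Your proposal is correct and follows essentially the same route as the paper: the paper's proof is just the observation that only the $k$-th summand is affected by the ring containing $q_k$ (all other summands depend on $z_k$ through globally defined rational functions), so one simply replaces $\log(1-A^{-1}z_k)-\log(1-Az_k)$ by the appropriate re-expansion from~\eqref{eq:LogsReexpansions}. Your additional bookkeeping — checking that the operators commute with term-by-term re-expansion and verifying the $\delta$-terms arising from $(z_k\,d/dz_k)^N(-\log z_k)$ in $R_2$ — is correct and merely spells out what the paper leaves implicit.
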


\begin{proof} Observe that the choice of the ring where the point $q_k$ belongs affects only the $k$-th summand on the right hand side of~\eqref{eq:cutjoinOVrearr2} (all other summands depend on $z_k$ through the globally defined rational functions). To this end, we have to replace the function $ \log\left(1-A^{-1}z_k\right)-\log\left(1-Az_k\right)$ in the $k$-th summand on the right hand side of Equation~\eqref{eq:cutjoinOVrearr2} by one of its expansions given by~\eqref{eq:LogsReexpansions}.
\end{proof}

\begin{remark}
	The functions $z_k$ in the expansions of $\log\left(1-A^{-1}z_k\right)-\log\left(1-Az_k\right)$ in~\eqref{eq:RHD-CJ-Ring1}-\eqref{eq:RHD-CJ-Ring3} can now be replaced back by $X_k \exp\Big(\frac PQ D_{X_k} \tilde H_{0,1}(X_k)\Big)$, $k=1,\dots,n$. 
\end{remark}

\begin{remark}
Note that for $(g,n)=(0,1)$ (which is the only case when $2a+2d+t_2=0$ gives a nonzero contribution) the cut-and-join equation just turns into the equation of the spectral curve.
\end{remark} 


%
%

\section{Quadratic loop equations} \label{sec:QLE}

In this Section we use the analytic extension of the cut-and-join equation, obtained in the previous section in order to prove Theorem~\ref{thm:QLE}, that is, to prove the quadratic loop equation for the $n$-point functions. The latter is the only missing ingredient in the proof of the topological recursion outlined in Section~\ref{sec:ProofOfTR}. Essentially, we follow the same scheme of proof as in~\cite{KPS}, and it is based on the system of formal corollaries of quadratic loop equations obtained in~\cite{DKPS}.

\subsection{Notation and conventions} Let $p$ be a critical point of $X$. We denote by $S_zf(z)$ (respectively, $\Delta_zf(z)$) the symmetrization $f(z)+f(\sigma(z))$ (respectively, the antisymmetrization $f(z)-f(\sigma(z))$) of a function $f(z)$ with respect to the deck transformation $\sigma$ of $X(z)$ near $p$.

One of the properties of the operators $S_z$ and $\Delta_z$ that we use below is the identity
\begin{equation}\label{eq:SonDiagonal}
S_zf(z_1,\dots,z_n)\Big|_{z_i = z} = 2^{1-n} \sum_{\substack{I \sqcup J = \llbracket n \rrbracket \\ |J| \in 2\mathbb{Z}}} \Big( \prod_{i \in I} \cS_{z_i} \Big) \Big( \prod_{j \in J} \Delta_{z_j} \Big)f(z_1,\dotsc, z_n) \Big|_{z_i = z}
\end{equation}
that is valid for any $n\geq 1$~\cite{BKLPS,DKPS,KPS}.

 Furthermore, introduce the notation 
\begin{align}\label{eq:defW}
& W_{g,t+n} (\xi_{\llbracket t \rrbracket} , X_{\llbracket n \rrbracket})\coloneqq D_{\xi_1}\cdots D_{\xi_m}D_{X_1}\cdots D_{X_n} \tilde{H}_{g,t+n}(\xi_{\llbracket t \rrbracket},X_{\llbracket n \rrbracket}); \\ \label{eq:defcW}
& \cW_{g,t,n}(w_{\llbracket t \rrbracket} \mid z_{\llbracket n \rrbracket}) \coloneqq 
\sum_{\ell =1}^t 
\frac{1}{\ell!}
\!\!
\sum_{\substack{
		\bigsqcup_{j=1}^\ell K_j = \llbracket n \rrbracket \\ 
		\bigsqcup_{j=1}^\ell T_j =  \llbracket t \rrbracket\\ 
		T_j \neq \emptyset \\
		g= \sum_{j=1}^\ell g_j + t - \ell \\
		g_1,\ldots,g_{\ell} \geq 0 
}}  \!\!
\prod_{j = 1}^{\ell} W_{g_j,|M_j|+|K_j|}(\xi_{M_j},x_{K_j}),
\end{align}
where we assume that $\xi_i\coloneqq X(w_i)$, $i=1,\dots,m$, and $X_j\coloneqq X(z_j)$, $j=1,\dots,n$. Denote also~(cf. \eqref{Qrdef})
\begin{align} 
\sum_{d \geq 0} \tilde Q_{d,m,t}(z_0)z^{2d} & = \frac{z}{\zeta(z)} \frac{\zeta(zD_{X(z_0)})}{zD_{X(z_0)}} \circ X(z_0)^m\circ
\prod_{j = 1}^t \bigg( \left[\vert_{w_j = z_0}\right] \circ \frac{\zeta(zD_{X(w_j)})}{zD_{X(w_j)}}\bigg),
\end{align}
where
\begin{align}
\left[\vert_{w_j = z_0}\right] F(w) &\coloneq \Res_{w=z} F(w)\frac{dX(w)}{X(w)-X(z)}
\,.
\end{align}
\begin{remark}\label{rem:QQtilde}
	Note that the definition of the $\tilde Q_{d,m,t}$ differs from the definition of $Q_{d,m,t}$ given by  Equation~\eqref{Qrdef} by an extra factor of $D_{X(w_j)}$ in the denominator in the fraction in the brackets. This change is related to the fact that in this section we prefer to work with $\cW$ rather than $\mathcal{D} \tilde{\mathcal{H}}$.
\end{remark}

Introduce one more notation:
\begin{equation}
QLE_{g,n+1}=QLE_{g,n+1}(z_0;z_{\llbracket n \rrbracket})\coloneqq \left[\vert_{w_1 = z_0}\right] \left[\vert_{w_2 = z_0}\right] \Delta_{w_1} \Delta_{w_2} \cW_{g,2,n}(w_1,w_2 \mid z_{\llbracket n \rrbracket}).
\end{equation}
Since $W_{g,n}$ satisfy the linear loop equations for any $g\geq 0$, $n\geq 1$ \cite[Theorem 8.1 and Remark 8.2]{DPSS}, Theorem~\ref{thm:QLE} can be equivalently reformulated as follows (cf.~\cite[Section 2.4]{BS} and~\cite[Lemma 20]{BKLPS}, and~\cite[Section 3.2]{DKPS}):

\begin{theorem} \label{thm:QLE-equiv} For any $g\geq 0$ and $n\geq 0$ the expression $QLE_{g,n+1}$ is holomorphic in $z_0$ at $z_0\to p$.
\end{theorem}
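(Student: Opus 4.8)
The plan is to prove that $QLE_{g,n+1}(z_0;z_{\llbracket n\rrbracket})$ is holomorphic in $z_0$ at the critical points $p$ of $X$ by bootstrapping on the cut-and-join equation of Corollary~\ref{eq:cutjoinOVrearr2}, in its analytically continued form given by Equations~\eqref{eq:RHD-CJ-Ring1}--\eqref{eq:RHD-CJ-Ring3}. First I would set up an induction on $2g-2+n$: assume that $QLE_{g',n'+1}$ is holomorphic at $p$ for all $(g',n')$ with $2g'-2+n'<2g-2+n$, i.e. assume the quadratic loop equations (hence, by \cite{BS} together with the quasipolynomiality from \cite{DPSS}, the full topological recursion, and in particular the projection/pole structure of $\omega_{g',n'}$) for all lower-order data. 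The cut-and-join equation expresses $\bigl(\sum_i D_{X_i}\bigr)\tilde H_{g,n}$ (and, after applying the extra derivatives $D_{X(w_1)}D_{X(w_2)}$ and the residue operators $[\vert_{w_1=z_0}][\vert_{w_2=z_0}]$, the relevant object $\cW_{g,2,n}$-type combination) as a sum over $k$ of contributions, each of which is a rational function in $z_k$ times operators built from $\zeta(uD_{X_k})/uD_{X_k}$ acting on $\widecheck{\mathcal{D}\mathcal{H}}^{k,t_2}_{g-d-a,n}$; crucially the only non-rational ingredient, the logarithm in~\eqref{eq:logs}, has already been differentiated at least once and so contributes only rational functions of $z_k$.

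The core of the argument is then to take $\Delta_{w_1}\Delta_{w_2}$ of the $\cW_{g,2,n}$ built out of these pieces, restrict to the diagonal via the residue operators $[\vert_{w_i=z_0}]$, and show holomorphicity at $z_0\to p$. Here I would exploit the identity~\eqref{eq:SonDiagonal} relating symmetrized-on-diagonal objects to products of $\cS$'s and $\Delta$'s, together with the formal corollaries of the quadratic loop equations developed in \cite{DKPS} (the scheme of \cite{KPS}): the point is that the operators $D_X$ and the residue substitutions commute appropriately with $\Delta_z/\cS_z$, and that a product $W_{g_1,\dots}W_{g_2,\dots}$ in which at least one factor carries an odd number of $\Delta$'s is controlled by the linear loop equations (which hold here by \cite[Theorem~8.1, Remark~8.2]{DPSS}), while the genuinely quadratic piece $W_{g,1}(w_1)W_{0,1}(w_2)+\dots$ is exactly what we are solving for and can be isolated on one side. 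The cut-and-join equation has the feature that on its left-hand side $z_0$ (the distinguished variable after the substitutions $w_1,w_2\mapsto z_0$) appears only through the globally rational/meromorphic structure plus the single first-order operator $\sum_i D_{X_i}$, whereas on the right-hand side the $k=0$ summand (the one carrying the distinguished variable) contains the top term $W_{g,1}\cdot W_{0,1}$; matching the two sides and using that all other terms are either manifestly holomorphic at $p$ by the induction hypothesis, or are rational in $z_0$ with poles only away from $p$ (because the zeros of $d\Lambda$ are simple and disjoint from $0,\infty,\pm1$ by Remark~\ref{rem:Agen}), forces the quadratic combination $QLE_{g,n+1}$ to be holomorphic at $p$.

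More concretely, the steps in order would be: (i) rewrite $\cW_{g,2,n}$, $\Delta_{w_1}\Delta_{w_2}$-antisymmetrized and diagonalized, using~\eqref{eq:SonDiagonal} so that the desired $QLE_{g,n+1}$ is separated from lower-complexity terms that satisfy linear loop equations; (ii) substitute the analytically continued cut-and-join equation~\eqref{eq:RHD-CJ-Ring1}--\eqref{eq:RHD-CJ-Ring3} for $\tilde H_{g,n}$ and its derivatives, organizing the right-hand side by the index $k$ and noting that near a critical point $p$ all summands with $k\neq 0$ are holomorphic in $z_0$ (they are rational with controlled poles) while the $k=0$ summand produces, to leading order in $u$, exactly the quadratic term together with manifestly holomorphic corrections; (iii) check that the operator $\zeta(uD_{X_0})/(uD_{X_0})\circ X_0^m \circ \prod_j(\vert_{w_j=z_0}\,\zeta(uD_{X(w_j)})/(uD_{X(w_j)}))$, being a power series in $D_{X}$ with the first nontrivial term $\tfrac1{24}u^2 D_X^3$-type, preserves holomorphicity at $p$ when applied to functions holomorphic at $p$, so it does not spoil the induction; (iv) feed in the induction hypothesis together with \cite[Theorem~2.2]{BS} to replace lower $QLE$'s by full topological recursion, hence by the known pole structure (poles only at the critical points, principal part determined by $\omega_{0,1},\omega_{0,2}$), to conclude that the remaining potentially-singular contribution at $z_0\to p$ is precisely $-QLE_{g,n+1}$ up to a nonzero holomorphic factor; (v) conclude. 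I expect the main obstacle to be step (ii)--(iii): carefully tracking, through the operators $\zeta(uD_X)/uD_X$ and the residue/diagonal-restriction operators, which terms in the expansion of $\widecheck{\mathcal{D}\mathcal{H}}^{k,t_2}_{g-d-a,n}$ carry an odd total number of $\Delta$-antisymmetrizations (so are killed by linear loop equations) versus an even number, and isolating cleanly the single genuinely quadratic contribution, while simultaneously verifying that nothing in the $D_X$-expansions introduces spurious poles at $p$ — this bookkeeping, following \cite{KPS,DKPS} but adapted to the $\hbar^2$-deformed operators $\zeta(uD_X)/uD_X$ and the logarithmic $y(z)=\log\bigl((1-A^{-1}z)/(1-Az)\bigr)$ of the Brini--Eynard--Mari\~no curve, is where the real work lies.
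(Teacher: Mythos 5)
Your overall strategy coincides with the paper's: induction on $2g-2+(n+1)$, symmetrization of the analytically continued cut-and-join equation at the branch point, the linear loop equations of \cite{DPSS} to dispose of the $k\neq 0$ summands and of the terms carrying an odd number of $\Delta$'s, and the formal corollaries of the lower-order quadratic loop equations (Lemma~\ref{lem:secondRSpinLemma}, imported from \cite{DKPS,KPS}) to reduce the $k=0$ summand to $QLE_{g,n+1}$ times a prefactor. So the architecture of your steps (i)--(v) matches the paper's proof.

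The genuine gap is in your step (iv), where you write that the remaining singular contribution is ``precisely $-QLE_{g,n+1}$ up to a nonzero holomorphic factor.'' That this factor is holomorphic and, crucially, \emph{nonvanishing} at $p$ is not automatic, and establishing it is a substantial part of the actual proof. After the reduction, the prefactor of $QLE_{g,n+1}$ is an infinite triple series of the schematic form
\begin{equation*}
\sum_{l\geq 0,\; N\geq 1}\;\sum_{m\geq 1}\frac{A^{\pm m}}{m}\,X^{\pm m}\Big(\pm\tfrac{P}{Q}m\Big)^{2N+l}\frac{(Sy)^l}{2^l\, l!}\frac{(\Delta y)^{2N-2}}{2^{2N-2}(2N)!}\,N ,
\end{equation*}
with a different sign/exponent pattern for each of the three rings $R_1,R_2,R_3$ containing $p$ (Expressions~\eqref{eq:knownholo}--\eqref{eq:knownholo-R3}). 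One must (a) check that this series of holomorphic functions actually converges near $p$, and (b) resum it in closed form --- Lemma~\ref{lem:coeffsimplify} reduces it to the explicit expressions~\eqref{eq:FactorR1}--\eqref{eq:FactorR3} --- and then evaluate its limit as $z_0\to p$, i.e.\ as $\Delta y\to 0$, which equals $\tfrac12\tfrac{P^2}{Q^2}\,p(p^2-1)(A^{-1}-A)\big/\big((p-A)^2(p-A^{-1})^2\big)$. This is nonzero only because of the genericity assumption $p\neq 0,\pm 1, A, A^{-1}$ of Remark~\ref{rem:Agen}; if the prefactor vanished at $p$, holomorphicity of the product with $QLE_{g,n+1}$ would tell you nothing about $QLE_{g,n+1}$ itself, and the induction step would not close. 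Your proposal asserts this nonvanishing without the computation, so this is the missing piece. (A minor further remark: invoking \cite[Theorem~2.2]{BS} to upgrade the lower-order quadratic loop equations to full topological recursion is unnecessary --- only the purely formal consequences recorded in Lemma~\ref{lem:secondRSpinLemma} are needed --- though it does no harm.)
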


\begin{remark} Note that in order to prove the equivalence of the statements of Theorems~\ref{thm:QLE-equiv} and~\ref{thm:QLE} one has to adjust a little bit the definition of $\Delta_{w_1} \Delta_{w_2} W_{0,2}(w_1,w_2)$ to treat properly the pole at the anti-diagonal. We refer for the details to~\cite[Section 3.1]{DKPS}. This adjustment doesn't affect the argument below.
\end{remark}

\subsection{Guide to the proof} Let us summarize the main steps of the proof of Theorem~\ref{thm:QLE-equiv} presented below. Firstly, we use the linear loop equations proved in~\cite{DPSS} and the analytic continuation of the cut-and-join equation obtained in Section~\ref{sec:AnalyticCont} in order to obtain some expressions that depend on $g$ and $n$ and are holomorphic in $z_0$ near the point $p$. These expressions depend on the ring \eqref{eq:R1def}-\eqref{eq:R3def} where $p$ belongs and are collected in Lemma~\ref{lem:holomorphicexpressions}.

Secondly, we assume that the quadratic loop equation is satisfied for any $(g',n')$ such that $2g'-2+(n'+1) < 2g-2+(n+1)$ and use this assumption along with the linear loop equations in order to get simpler expressions than in Lemma~\ref{lem:holomorphicexpressions} that are still holomorphic in $z_0$ near the point $p$. The results of these computations are collected in Corollary~\ref{cor:holomorphic}.

The expressions in Corollary~\ref{cor:holomorphic} are represented as $QLE_{g,n+1}$ times some prefactors defined locally in a neighborhood of the point $p$ (the form of the prefactor depends on the ring where the point $p$ belongs). As a third step of the proof, we prove in Lemma~\ref{lem:coeffinvert} that these prefactors are invertible holomorphic functions. 
This observation allows us to prove Theorem~\ref{thm:QLE-equiv} by induction, see Section~\ref{sec:ProofQLE} .

\subsection{Holomorphic expressions} The goal of this Section is to obtain some expressions in terms of the functions $\cW$ and operators $\tilde Q$ that are holomorphic in a distinguished variable $z_0$ at $z_0\to p$.

\begin{lemma} \label{lem:holomorphicexpressions}
Depending on the ring \eqref{eq:R1def}-\eqref{eq:R3def} where $p$ belongs, one of the following expressions is holomorphic in $z_0$ near $p$:
\begin{align} \label{eq:BasicSymmetricExpression-R1}	
\cS_{z_0} \left[\sum_{\substack{m\geq 1\\a\geq 0}} 
\sum_{\substack{t \geq 1 \\ d\geq 0 }} (A^m-A^{-m})B_a 
\frac{(\frac PQ)^{t+2d} m^{t-1+2d+2a}}{t!} \tilde Q_{d,m,t}(z_0) \cW_{g-d-a,t,n} (w_{\llbracket t \rrbracket} \mid z_{\llbracket n \rrbracket}) \right],  \\ \notag
\text{if } p\in R_1; \\ \label{eq:BasicSymmetricExpression-R2}
\cS_{z_0} \left[\sum_{\substack{m\geq 1\\a\geq 0}} 
\sum_{\substack{t \geq 1 \\ d\geq 0 }} A^{-m}B_a 
\frac{(\frac PQ)^{t+2d} m^{t-1+2d+2a}}{t!} \left((-1)^t \tilde Q_{d,-m,t}(z_0) - \tilde Q_{d,m,t}(z_0)\right) 
\cW_{g-d-a,t,n} (w_{\llbracket t \rrbracket} \mid z_{\llbracket n \rrbracket}) \right]  , \\ \notag
\text{if } p\in R_2; \\ \label{eq:BasicSymmetricExpression-R3}
S_{z_0} \left[\sum_{\substack{m\geq 1\\a\geq 0}} 
\sum_{\substack{t \geq 1 \\ d\geq 0 }} (A^{-m}-A^m)B_a 
\frac{(\frac PQ)^{t+2d} m^{t-1+2d+2a}(-1)^t }{t!} \tilde Q_{d,-m,t}(z_0) \cW_{g-d-a,t,n} (w_{\llbracket t \rrbracket} \mid z_{\llbracket n \rrbracket}) \right]  , \\ \notag
\text{if } p\in R_3.
\end{align}
\end{lemma}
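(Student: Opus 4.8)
The plan is to recognise each of the three expressions as the ``distinguished-variable summand'' of the analytically continued cut-and-join equation with one extra variable, and to prove its holomorphicity by transposing all the remaining terms of that equation and invoking the linear loop equations of \cite[Theorem~8.1 and Remark~8.2]{DPSS}.

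Concretely, I would apply the cut-and-join equation in the form \eqref{eq:RHD-CJ-Ring1}--\eqref{eq:RHD-CJ-Ring3} to $\tilde H_{g,n+1}$, with the $n+1$ variables labelled $z_0,z_1,\dots,z_n$, choosing for $z_0$ (and, for definiteness, for all the variables) the expansion of $\log\frac{1-A^{-1}z_0}{1-Az_0}$ from \eqref{eq:LogsReexpansions} that belongs to the ring $R_i$ containing the critical point $p$; this writes $\bigl(\sum_{i=0}^{n}D_{X_i}\bigr)\tilde H_{g,n+1}$ as a sum of $n+1$ summands, one for each $k\in\{0,\dots,n\}$. Applying $D_{X_1}\cdots D_{X_n}$ to the whole equation (the identity when $n=0$) differentiates every spectator slot of every correlator; by Remark~\ref{rem:QQtilde} this turns the $k=0$ building block $\widecheck{\mathcal D\mathcal H}^{0,t}_{g-d-a,n+1}$ into the function $\cW_{g-d-a,t,n}(w_{\llbracket t \rrbracket}\mid z_{\llbracket n \rrbracket})$ and the accompanying operator into $\tilde Q_{d,m,t}(z_0)$ --- the additional factors $D_{X(w_j)}$ in the denominators being exactly those created when the $\xi$-slots are differentiated, and the substitutions $\vert_{\xi_i=X_0}$ becoming the residue collapses $[\vert_{w_j=z_0}]$. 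Reabsorbing, via $X(z_0)=z_0\exp(-\tfrac{P}{Q}y(z_0))$ with $y=D_X\tilde H_{0,1}$, the $W_{0,1}$-factors that $\cW$ retains but $\widecheck{\mathcal D\mathcal H}$ discards, one checks that the $k=0$ summand is precisely the expression inside the $\cS_{z_0}$ in \eqref{eq:BasicSymmetricExpression-R1}, \eqref{eq:BasicSymmetricExpression-R2} or \eqref{eq:BasicSymmetricExpression-R3}, according to which ring contains $p$. (For $n\geq 1$ the condition $T_j\neq\emptyset$ forces $t\geq 1$, matching the ranges in the statement; the sole $t=0$ contribution occurs at $n=0$, where it is manifestly rational in $z_0$ and regular at $p$ --- essentially $y(z_0)$ and its $D_{X_0}$-derivatives --- and may be set aside.)

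Next I would apply $\cS_{z_0}$ to the equation and solve for the $k=0$ summand: $\cS_{z_0}$ of the $k=0$ summand equals $\cS_{z_0}\bigl[D_{X_1}\cdots D_{X_n}\bigl(\sum_{i=0}^{n}D_{X_i}\bigr)\tilde H_{g,n+1}\bigr]$ minus $\sum_{k=1}^{n}\cS_{z_0}\bigl[D_{X_1}\cdots D_{X_n}\,(k\text{-th summand})\bigr]$, as an identity of meromorphic functions. It then suffices to show that each of these is holomorphic in $z_0$ at $z_0\to p$. Since the deck transformation $\sigma$ fixes $X$, the operators $D_{X_0}$ and $D_{X_i}$ ($i\geq 1$) all commute with $\cS_{z_0}$, so the first term equals $\bigl(\sum_{i=0}^{n}D_{X_i}\bigr)\cS_{z_0}\bigl[D_{X_1}\cdots D_{X_n}\tilde H_{g,n+1}\bigr]$; by the linear loop equations, $\cS_{z_0}[W_{g,n+1}]=\cS_{z_0}\bigl[D_{X_0}D_{X_1}\cdots D_{X_n}\tilde H_{g,n+1}\bigr]$ is holomorphic at $p$, and since $dX$ has a simple zero at $p$, a meromorphic $f$ with $D_{X_0}f$ holomorphic at $p$ is itself holomorphic there; hence $\cS_{z_0}\bigl[D_{X_1}\cdots D_{X_n}\tilde H_{g,n+1}\bigr]$ is holomorphic at $p$, and so is the result of applying $\sum_{i=0}^{n}D_{X_i}$ to it (the $i\geq 1$ terms obviously, the $i=0$ term being $\cS_{z_0}[W_{g,n+1}]$ again). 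For the $k$-th summand with $k\geq 1$: after the differentiations, $z_0$ enters only through the single correlator factor carrying the index $0$, because the operators attached to that summand --- the $\tilde Q$-type operator, the collapses $\vert_{\xi=X_k}$, and the rational factor built from $\log\frac{1-A^{-1}z_k}{1-Az_k}$ --- involve only $z_k$ and the remaining spectators; therefore $\cS_{z_0}$ acts on that factor alone, giving $\cS_{z_0}$ of a partially differentiated correlator, holomorphic at $p$ by the linear loop equations together with the recovery step above, while the surviving rational functions of $z_0$ have poles only at $z_0\in\{0,\infty,A,A^{-1}\}$, which by Remark~\ref{rem:Agen} avoid the branch point $p$. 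The $(0,2)$-factors require only the standard adjustment of $\tilde H_{0,2}$ at the anti-diagonal, cf.\ the remark after Theorem~\ref{thm:QLE-equiv} and \cite[Section~3.1]{DKPS}, which does not affect the argument.

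It follows that the right-hand side, and hence $\cS_{z_0}$ of the $k=0$ summand, is holomorphic at $z_0\to p$, which is the assertion for the ring containing $p$. I expect the main obstacle to be the bookkeeping in the second paragraph: carefully matching the $k=0$ summand of the cut-and-join equation, in its $\widecheck{\mathcal D\mathcal H}$/substitution/$z_k^{\pm m}$ form, to the $\cW$/$\tilde Q$ form of \eqref{eq:BasicSymmetricExpression-R1}--\eqref{eq:BasicSymmetricExpression-R3} --- keeping straight the signs $(-1)^t$ in the $R_2$ and $R_3$ cases, the identification of the substitutions with the residue operations $[\vert_{w_j=z_0}]$, and the reabsorption of the $\tilde H_{0,1}$-contributions into $X(z_0)^m$. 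The holomorphicity estimates themselves are routine once the linear loop equations of \cite{DPSS} are granted; in particular this lemma uses neither induction on $(g,n)$ nor the quadratic loop equation at lower level --- those enter only in the subsequent steps leading to Corollary~\ref{cor:holomorphic}.
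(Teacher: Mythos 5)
Your proposal is correct and follows essentially the same route as the paper: write the analytically continued cut-and-join equation in $n+1$ variables, symmetrize in $z_0$, dispose of the left-hand side and the $k\geq 1$ summands via the linear loop equations of \cite[Theorem 8.1]{DPSS} (together with the observation that $D_{X_0}f$ holomorphic forces $f$ holomorphic at a simple critical point), and then match the surviving $k=0$ summand with the $\cW$/$\tilde Q$ form after reabsorbing the $W_{0,1}$-factors. The one loose end is in the $R_2$ case: the expansion \eqref{eq:RHD-CJ-Ring2} contains the exceptional terms $-\delta_{2a+2d+t_2,0}\log z_k-\delta_{2a+2d+t_2,1}$, which are not part of the $m$-series you transform into $\tilde Q_{d,\pm m,t}$ and must be shown holomorphic separately (they occur only for $a=d=0$, $t_2\in\{0,1\}$ and are again handled by the linear loop equations, as the paper notes in a short paragraph); also, the leftover $t=0$ contribution is a function of $X_0$ alone rather than ``$y(z_0)$ and its $D_{X_0}$-derivatives'' (the latter would not be regular at $p$), though your conclusion that it may be set aside is right.
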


\begin{proof} Consider Equation~\eqref{eq:cutjoinOVrearr2} assuming that we have $n+1$ variables, $z_0,\dots,z_n$, that is, its left hand side reads $\Big(\sum_{i=0}^n D_{X_i} \Big)\tilde H_{g,n+1}$. The summation over $k$ on the right hand side runs also from $0$ to $n$, and near the point $(q_0,\dots,q_n)\in R_{i_0}\times \cdots \times R_{i_n}$, $1\leq i_0,\dots,i_n\leq 3$, the right hand side turns into a sum of $n+1$ summands given by Formulas \eqref{eq:RHD-CJ-Ring1}, \eqref{eq:RHD-CJ-Ring2}, or \eqref{eq:RHD-CJ-Ring3} (where one has still to change index $n$ in the factor $\widecheck{\mathcal{D}\mathcal{H}}^{k,t_2}_{g-d-a,n}$ to $n+1$ to reflect that we have $n+1$ variables now), depending on the ring $R_{i_j}$ where the point $q_j$ belongs, $j=0,\dots,n$. We have:
	
\begin{equation} \label{eq:Cut-And-Join-In-Rings}
\Big(\sum_{i=0}^n D_{X_i} \Big)\tilde H_{g,n+1} = \sum_{k=0}^n Y_{k,i_k} \qquad \text{in} \qquad R_{i_0}\times R_{i_1}\times \cdots \times R_{i_n}, 
\end{equation}
where $Y_{k,i_k}$, $k=0,\dots,n$, $i_k=1,2,3$, is equal to 
\begin{align} \label{eq:RHD-CJ-Ring1-n+1}
\sum_{\substack{a\geq 0 \\ d \geq 0}}   \sum_{t_2=0}^\infty B_a\left(\dfrac{P}{Q}\right)^{2d+t_2}[u^{2d}] \dfrac{u}{\zeta(u)} \dfrac{\zeta(u
	D_{X_k}
	)}{u
	D_{X_k}
} \left(\widecheck{\mathcal{D}\mathcal{H}}^{k,t_2}_{g-d-a,n+1} 
\sum_{m=1}^\infty m^{2a+2d+t_2-1} (A^m-A^{-m})z_k^m \right)  \\ \notag
\text{for}\ i_k=1; \\ 
\label{eq:RHD-CJ-Ring2-n+1}
\sum_{\substack{a\geq 0 \\ d \geq 0}}   \sum_{t_2=0}^\infty B_a\left(\dfrac{P}{Q}\right)^{2d+t_2}[u^{2d}] \dfrac{u}{\zeta(u)} \dfrac{\zeta(u
	D_{X_k}
	)}{u
	D_{X_k}
} \Bigg(\widecheck{\mathcal{D}\mathcal{H}}^{k,t_2}_{g-d-a,n+1} \Bigg(-\delta_{2a+2d+t_2,0} \log z_k -\delta_{2a+2d+t_2,1}
\\ \notag 
-\sum_{m=1}^\infty (-m)^{2a+2d+t_2-1} A^{-m} z_k^{-m}
-\sum_{m=1}^\infty m^{2a+2d+t_2-1} A^{-m}z_k^m \Bigg)\Bigg)
\\ \notag
\text{for}\ i_k=2; 
\\ \label{eq:RHD-CJ-Ring3-n+1}
\sum_{\substack{a\geq 0 \\ d \geq 0}}   \sum_{t_2=0}^\infty B_a\left(\dfrac{P}{Q}\right)^{2d+t_2}[u^{2d}] \dfrac{u}{\zeta(u)} \dfrac{\zeta(u
	D_{X_k}
	)}{u
	D_{X_k}
} \left(\widecheck{\mathcal{D}\mathcal{H}}^{k,t_2}_{g-d-a,n+1} 
\sum_{m=1}^\infty (-m)^{2a+2d+t_2-1} (A^m-A^{-m})z_k^{-m} \right)  \\ \notag
\text{for}\ i_k=3. 
\end{align}

Let $p$ be a critical point of $X$ and assume that $R_{i_0}\ni p$. Apply $S_{z_0}$ (the symmetrization with respect to the deck transformation near the point $p$ in the coordinate $z_0$) to the both sides of Equation~\eqref{eq:Cut-And-Join-In-Rings}. Note that the linear loop equations proved in~\cite{DPSS} imply that $S_{z_0} \Big(\sum_{i=0}^n D_{X_i} \Big)\tilde H_{g,n+1}$ and $S_{z_0}Y_{k,i_k}$, $k=1,\dots,n$ are holomorphic for $z_0\to p$ (in the latter cases the dependence on $z_0$ is only present in the factor $\widecheck{\mathcal{D}\mathcal{H}}^{k,t_2}_{g-d-a,n+1}$). Therefore, we obtain that $S_{z_0} Y_{0,i_0}$ is holomorphic for $z_0\to p$ if $p\in R_{i_0}$, $i_0=1,2,3$.  

Consider the expression $S_{z_0} Y_{0,i_0}$. If $i_0=2$, there are two exceptional terms, one for $2a+2d+t_2=0$ and one for $2a+2d+t_2=1$. 
These exceptional terms occur only when $a=d=0$ and either $t_2=0$  or $t_2=1$. In the first case, Equation~\eqref{eq:Cut-And-Join-In-Rings} is just the equation of the spectral curve, and $t_1=1$. In the second case, $t_1=0$. Therefore, in both exceptional cases the corresponding exceptional terms in $S_{z_0} Y_{0,i_0}$ are holomorphic by the linear loop equations.

For any $i_0=1,2,3$, consider all non-exceptional terms in $S_{z_0} Y_{0,i_0}$. In each case we have there a factor, which is the sum of two terms of the shape $\pm \sum_{m=1}^\infty (\pm m)^{2a+2d+t_2-1} A^{\pm m} z_0^{\pm m}$. Replace $z_0$ by $X_0 \exp\Big(\frac PQ D_{X_0} \tilde H_{0,1}(X_0)\Big)$. Then the computation in Section~\ref{sec:ChangeFromQW-to-CheckH} run backwards explains how to assemble the expansions of the non-exceptional terms of $S_{z_0} Y_{0,i_0}$ in $D_{X_0} \tilde H_{0,1}(X_0)$ for $i_0=1,2,3$, respectively, into the formulas~\eqref{eq:BasicSymmetricExpression-R1}, \eqref{eq:BasicSymmetricExpression-R2}, \eqref{eq:BasicSymmetricExpression-R3}, respectively. 
More precisely, we apply all the steps from \eqref{eq:cutjoinOVrearr2} back to \eqref{eq:cutjoinOV}-\eqref{eq:Hcurlytilde} to $D_{X_1}\cdots D_{X_n}Y_{0,i_0}$, $i_0=1,2,3$, and take into account the differences between $\widetilde{\mathcal{D}\mathcal{H}}$, $Q$  and $\cW$, $\tilde Q$ mentioned in Remark \ref{rem:QQtilde}. Note that the $t_2=0$ term gets canceled after the application of $S_{z_0}$, and thus the sum over $t$ in~~\eqref{eq:BasicSymmetricExpression-R1}-\eqref{eq:BasicSymmetricExpression-R3} starts with $t=1$.
\end{proof}


\subsection{Obstructions to holomorphicity}
Assume that the quadratic loop equations hold for all $(g',n')$ with $2g'-2+(n'+1)<2g-2+(n+1)$. Then \cite[Lemma 4.2 and Lemma 4.3]{KPS} (itself direct corollaries of \cite[Corollary 3.4 and Remark 3.3]{DKPS}) say:
\begin{lemma}\label{lem:secondRSpinLemma}
For any $r \geq 0$ and any $h, k \geq 0$ such that $ 2h-1+k -r\leq 2g-2+n$, the expression
\begin{equation}\label{eq:corollaryDKPS}
\sum_{t=1}^{r+1}\frac{1}{t!} \sum_{\substack{2\alpha_1 + \dotsb + 2\alpha_t \\ + t = r+1}} \prod_{j=1}^t \bigg( \left[\vert_{w_j = z_0}\right] \frac{D_{X(w_j)}^{2\alpha_j}}{(2\alpha_j+1)!} \bigg) \sum_{\substack{I\sqcup J = \llbracket t \rrbracket \\ |I|\in 2\mathbb{Z} }} \prod_{i\in I} \Delta_{w_i} \prod_{j\in J} \cS_{w_j} \cW_{h+(t-r-1)/2,t,k} (w_{\llbracket t\rrbracket} \mid z_{\llbracket k\rrbracket}) 
\end{equation}
is holomorphic in $z_0$ at $ z_0 \to p$. Moreover, if we apply to it an arbitrary operator $O(X(z_0), D_{X(z_0)})$ polynomial in $X(z_0)$ and $D_{X(z_0)}$ this expression remains holomorphic in $z_0$ at $ z_0 \to p$. Furthermore, for any $r\geq 1$, the expression
	\begin{align}\label{eq:expr-r}
	& \sum_{\substack{k,\alpha_1,\dots,\alpha_{2k}\\ \ell, \beta_1,\dots,\beta_\ell \\ 2k+2\alpha_1+\cdots+2\alpha_{2k} \\ +\ell + 2\beta_1+\cdots+2\beta_{\ell} = r+1}}\!\!\!\!\!\!\! \frac{1}{\ell! (2k)!}\prod_{i=1}^\ell \left[\vert_{w'_i = z_0}\right] \frac{D_{X(w'_i)}^{2\beta_i}}{(2\beta_i+1)!} \cS_{w'_i} 
	\prod_{i=1}^{2k} \left[\vert_{w_i = z_0}\right] \frac{D_{X(w_i)}^{2\alpha_i}}{(2\alpha_i+1)!} \Delta_{w_i} \cW_{g+(2k+\ell-r-1)/2,\ell+2k,n} (w'_{\llbracket \ell\rrbracket},w_{\llbracket 2k\rrbracket} \mid z_{\llbracket n\rrbracket}) 
	\\ \notag &
	-\left(\sum_{2k+\ell = r+1} \frac{k}{\ell! (2k)!} \left(S_{z_0} W_{0,1}(X(z_0))\right)^\ell \left(\Delta_{z_0} W_{0,1}(X(z_0))\right)^{2k-2}\right) QLE_{g,n+1}
	\end{align}
	is holomorphic at $ z_0 \to p$.
\end{lemma}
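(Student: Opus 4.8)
The plan is to derive both assertions as purely formal consequences of the linear loop equations (established in~\cite[Theorems 7.1 and 8.1]{DPSS}) and of the inductive hypothesis that the quadratic loop equations hold for all $(g',n')$ with $2g'-2+(n'+1)<2g-2+(n+1)$, following the scheme of~\cite{DKPS,KPS}. The structural inputs are: the product expansion of $\cW_{g',t,k}$ in~\eqref{eq:defcW} as a finite sum of products $\prod_j W_{g_j,|T_j|+|K_j|}$ with $\sum_j(2g_j-2+|T_j|+|K_j|)=2g'-t+k$; the symmetrization-on-the-diagonal identity~\eqref{eq:SonDiagonal} together with its obvious antisymmetric counterpart; and the fact that $[\,\vert_{w_j=z_0}\,]$ produces a quantity depending on $z_0$ only through $X(z_0)$, so that holomorphicity in $z_0$ at $z_0\to p$ is equivalent to holomorphicity in $X(z_0)$ at $X(z_0)\to X(p)$. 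This last observation already yields the ``$O(X(z_0),D_{X(z_0)})$'' part, since any polynomial in $X(z_0)$ and $D_{X(z_0)}=X(z_0)\frac{d}{dX(z_0)}$ preserves holomorphicity in $X(z_0)$.

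First I would expand each $\cW_{g',t,k}$ in~\eqref{eq:corollaryDKPS} and~\eqref{eq:expr-r} into its defining sum of products of $W$'s and distribute the localization operators $[\,\vert_{w_j=z_0}\,]$, together with the $\Delta_{w_j}$'s and $\cS_{w_j}$'s, over the product, recording for each factor $W_{g_j,|T_j|+|K_j|}$ how many of its $w$-arguments are localized to $z_0$ and, among those, how many carry a $\Delta$ versus an $\cS$. Applying~\eqref{eq:SonDiagonal} (and its $\Delta$-version) factor by factor turns such a factor into a combination of: (a) factors whose localized $w$-slots are all $\cS$-symmetrized, holomorphic at $z_0\to p$ by the linear loop equations; (b) factors carrying exactly two $\Delta$-symmetrized slots, which — using the linear loop equation on the remaining slots and then collapsing the pair — reduce to $QLE_{g_j,|K_j|+1}$ times a monomial in $\cS_{z_0}W_{0,1}$ and $\Delta_{z_0}W_{0,1}$; and (c) factors carrying four or more, or an odd number of, $\Delta$-slots, whose odd contributions cancel in pairs once one sums over the distribution of the (globally even) set of antisymmetrized slots. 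In case (b), for~\eqref{eq:corollaryDKPS} the hypothesis $2h-1+k-r\le 2g-2+n$ forces $2g_j-2+|T_j|+|K_j|\le 2g-2+n$ with $|T_j|\ge 2$, hence $2g_j-1+|K_j|\le 2g-3+n<2g-2+(n+1)$, so every $QLE$ that appears is covered by the inductive hypothesis and the whole expression is holomorphic; the same applies verbatim after acting by $O(X(z_0),D_{X(z_0)})$.

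For~\eqref{eq:expr-r} the bookkeeping is identical, except that the constraint $2k+2\alpha_1+\cdots+2\beta_\ell=r+1$ with $r\ge 1$ now admits one extremal configuration: the two active $\Delta$-slots feed a single factor $W_{g,n}$ of top Euler characteristic, all other factors being $W_{0,1}$'s dressed by the remaining $\cS$- and $\Delta$-slots. This configuration produces precisely $QLE_{g,n+1}$ — the one $QLE$ not covered by the induction — times a spectator prefactor, while every other configuration is holomorphic as before. Collecting the multiplicity of this term ($\ell$ spectator $W_{0,1}$'s carrying $\cS$'s, the leftover $2k-2$ spectator $W_{0,1}$'s carrying $\Delta$'s, and the number of ways of singling out the active $\Delta$-pair among the $2k$ antisymmetrized slots) yields exactly the coefficient $\sum_{2k+\ell=r+1}\frac{k}{\ell!(2k)!}\bigl(S_{z_0}W_{0,1}(X(z_0))\bigr)^\ell\bigl(\Delta_{z_0}W_{0,1}(X(z_0))\bigr)^{2k-2}$; subtracting this single term leaves a holomorphic remainder, which is the claim.

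The main obstacle is exactly this combinatorial accounting: one must verify that, after expanding $\cW$ and resolving every factor through~\eqref{eq:SonDiagonal} and its antisymmetric analogue, all genuinely obstructed pieces assemble — with the correct multinomial coefficients — into either lower-level $QLE$'s (annihilated by the induction) or the single top term $\propto QLE_{g,n+1}$, and that nothing else with a pole survives; this is the content of~\cite[Corollary 3.4 and Remark 3.3]{DKPS}, and of~\cite[Lemmas 4.2 and 4.3]{KPS}. The only adaptation needed in the present setting is to allow the $\hbar^2$-deformation of the spectral data, which does not affect the argument since it enters the $\cW$'s only through $W_{0,1}=y$ and the linear loop equations, both of which are already at our disposal.
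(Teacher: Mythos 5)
The paper does not actually prove this lemma: it imports it wholesale as ``\cite[Lemma 4.2 and Lemma 4.3]{KPS}, themselves direct corollaries of \cite[Corollary 3.4 and Remark 3.3]{DKPS}'', and your proposal takes essentially the same route, sketching the combinatorial argument behind those references and then correctly deferring the hard multiplicity bookkeeping to them. One small imprecision: the claim that the output of $\left[\vert_{w_j=z_0}\right]$ depends on $z_0$ ``only through $X(z_0)$'' is false for a general function (it is just evaluation at $z_0$); the correct justification of the $O(X(z_0),D_{X(z_0)})$ part is that the fully localized expression carries an even number of $\Delta$'s and is therefore invariant under the deck transformation near $p$, hence locally a holomorphic function of $X$ --- but this does not affect the validity of the argument.
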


Lemma~\ref{lem:secondRSpinLemma} and Lemma~\ref{lem:holomorphicexpressions} have the following direct corollary. Assume that the quadratic loop equations hold for all $(g',n')$ with $2g'-2+(n'+1)<2g-2+(n+1)$.

\begin{corollary} \label{cor:holomorphic} Depending on the ring \eqref{eq:R1def}-\eqref{eq:R3def} where $p$ belongs, one of the following expressions is holomorphic in $z_0$ near $p$: 
\begin{align}\label{eq:knownholo}
\left(\sum_{\substack{l\geq 0\\2N\geq 2}} \sum_{m=1}^\infty \dfrac{A^m-A^{-m}}{m}X^m\left(\dfrac{P}{Q}\,m\right)^{2N+l} \dfrac{(Sy)^l}{2^l\; l!}\dfrac{(\Delta y)^{2N-2}}{2^{2N-2}\;(2N)!}\; N\right)\; 	QLE_{g,n+1}, 
\\ \notag
\text{if } p\in R_1; \\ \label{eq:knownholo-R2}
\left(\sum_{\substack{l\geq 0\\2N\geq 2}} \sum_{m=1}^\infty \dfrac{A^{-m}}{m}\left((-1)^{2N+l}X^{-m}-X^m\right)\left(\dfrac{P}{Q}\,m\right)^{2N+l}  \dfrac{(Sy)^l}{2^l\; l!}\dfrac{(\Delta y)^{2N-2}}{2^{2N-2}\;(2N)!}\; N\right)\; 	QLE_{g,n+1}, 
\\ \notag
\text{if } p\in R_2; \\ \label{eq:knownholo-R3}
\left(\sum_{\substack{l\geq 0\\2N\geq 2}} \sum_{m=1}^\infty \dfrac{A^{-m}-A^{m}}{m}X^{-m}\left(-\dfrac{P}{Q}\,m\right)^{2N+l} \dfrac{(Sy)^l}{2^l\; l!}\dfrac{(\Delta y)^{2N-2}}{2^{2N-2}\;(2N)!}\; N\right)\; QLE_{g,n+1}, 
\\ \notag
\text{if } p\in R_3.
\end{align}
In order to shorten the notation we use here $X$ for $X(z_0)$ and $y$ for $y(z_0)$. 
\end{corollary}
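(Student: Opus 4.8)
The plan is to deduce Corollary~\ref{cor:holomorphic} from Lemmas~\ref{lem:holomorphicexpressions} and~\ref{lem:secondRSpinLemma} by expanding the operators $\tilde Q_{d,m,t}(z_0)$ occurring in \eqref{eq:BasicSymmetricExpression-R1}--\eqref{eq:BasicSymmetricExpression-R3}, applying the diagonal symmetrization identity \eqref{eq:SonDiagonal}, and recognizing every resulting term inside one of the two holomorphic expressions \eqref{eq:corollaryDKPS} or \eqref{eq:expr-r} of Lemma~\ref{lem:secondRSpinLemma}. The three rings can be handled in parallel: \eqref{eq:BasicSymmetricExpression-R2} and \eqref{eq:BasicSymmetricExpression-R3} differ from \eqref{eq:BasicSymmetricExpression-R1} only through the $m$-sum ($A^{-m}$, or $A^{-m}-A^m$, and $X^{-m}$ in place of $A^m-A^{-m}$ and $X^m$), through extra signs $(-1)^t$, and through the replacement $\tilde Q_{d,m,t}\rightsquigarrow\tilde Q_{d,-m,t}$, none of which affects the combinatorics below.

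First I would expand each $\tilde Q_{d,m,t}(z_0)$ by extracting $[z^{2d}]$ of its generating series using $\zeta(zD)/(zD)=\sum_{\alpha\ge0}(zD)^{2\alpha}/(2^{2\alpha}(2\alpha+1)!)$ and $z/\zeta(z)=\sum_{a'\ge0}B_{a'}z^{2a'}$; this produces a sum over $(\alpha_0,\dots,\alpha_t,a')$ with $\alpha_0+\dots+\alpha_t+a'=d$ of the operator $X(z_0)^m D_{X(z_0)}^{2\alpha_0}$ composed with the substitutions $[\vert_{w_j=z_0}]\circ D_{X(w_j)}^{2\alpha_j}$, with constants rational in $m$. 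Since $X$ and $D_X$ are invariant under the deck transformation near $p$, the outer factor $X(z_0)^m D_{X(z_0)}^{2\alpha_0}$ commutes with $\cS_{z_0}$, which therefore acts only on the rest, symmetric in $w_1,\dots,w_t$; applying \eqref{eq:SonDiagonal} decomposes $\cS_{z_0}$ into products $\prod_{i\in I}\cS_{w_i}\prod_{j\in J}\Delta_{w_j}$, $|J|$ even, all arguments identified with $z_0$. Expanding $\cW_{g-d-a,t,n}$ via \eqref{eq:defcW} into products of $W$'s and isolating the $W_{0,1}$-factors, a $W_{0,1}$-factor carrying $D_{X(w_j)}^{2\alpha_j}$ and decorated by $\cS_{w_j}$ or $\Delta_{w_j}$ becomes $S_{z_0}D_{X(z_0)}^{2\alpha_j}W_{0,1}$ or $\Delta_{z_0}D_{X(z_0)}^{2\alpha_j}W_{0,1}$, a local holomorphic function of $z_0$ for $\alpha_j\ge1$ and equal to $S_{z_0}y$, respectively $\Delta_{z_0}y$, for $\alpha_j=0$, using $W_{0,1}(X(z_0))=y(z_0)$.

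After this rewriting, each term falls into one of two types. If the non-$W_{0,1}$-factors together with their decoration and $D_X$-powers do not form the extremal configuration $\Delta_{w_1}\Delta_{w_2}\cW_{g,2,n}$ (no $D_X$-powers on the two surviving arguments, $d=a=0$), the term — after multiplication by the local holomorphic prefactors and the $m$-summation, which amounts to applying $D_{X(z_0)}$-powers to the series $\sum_m\frac{A^m-A^{-m}}{m}X(z_0)^{\pm m}$ convergent near $p$ — matches the first sum of \eqref{eq:expr-r} for some weight $r\ge1$, or an instance of \eqref{eq:corollaryDKPS} with a polynomial operator in $(X(z_0),D_{X(z_0)})$ applied; both are holomorphic at $z_0\to p$ by Lemma~\ref{lem:secondRSpinLemma}, the former modulo the explicit term $\big(\sum_{2k+\ell=r+1}\tfrac{k}{\ell!(2k)!}(S_{z_0}y)^\ell(\Delta_{z_0}y)^{2k-2}\big)QLE_{g,n+1}$. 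The extremal configuration, in turn, reproduces $QLE_{g,n+1}$ with the complementary weight. Summing all the $QLE_{g,n+1}$ contributions over $r$, over the excess arguments absorbed into $S_{z_0}y$- and $\Delta_{z_0}y$-factors, over $m$, and over the constants $B_{a'}$, the powers of $P/Q$ and of $2$, and the factorials from the $\zeta$-expansions, one checks they assemble into exactly the prefactor multiplying $QLE_{g,n+1}$ in \eqref{eq:knownholo} (respectively \eqref{eq:knownholo-R2}, \eqref{eq:knownholo-R3}); since the left-hand side is holomorphic at $z_0\to p$ by Lemma~\ref{lem:holomorphicexpressions}, so is this prefactor times $QLE_{g,n+1}$.

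The step I expect to be the main obstacle is precisely this last bookkeeping: checking that $\Delta_{w_1}\Delta_{w_2}\cW_{g,2,n}$ arises from $\cW_{g-d-a,t,n}$ for every $t\ge2$, with $t-2$ arguments absorbed into $y$-factors, and that all multiplicities and constants conspire to produce $\frac{(Sy)^l}{2^l l!}\frac{(\Delta y)^{2N-2}}{2^{2N-2}(2N)!}N$ together with $\sum_m\frac{A^m-A^{-m}}{m}X^m(\frac PQ m)^{2N+l}$ (and the sign- and $A^{\pm m}$/$X^{\pm m}$-variants in the other two rings). One must also track the difference between $[\vert_{w_j=z_0}]$ and naive evaluation, i.e.\ the passage between the tilde- and check-decorated $n$-point functions with its $\tilde H_{0,1}$/$\tilde H_{0,2}$ corrections, consistently on both sides, exactly as in the proof of Lemma~\ref{lem:holomorphicexpressions}; this is routine but is where the bulk of the work lies.
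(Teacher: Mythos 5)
Your proposal is correct and follows essentially the same route as the paper: apply the symmetrization identity \eqref{eq:SonDiagonal} to the expressions of Lemma~\ref{lem:holomorphicexpressions}, recognize the resulting terms as instances of \eqref{eq:corollaryDKPS} and \eqref{eq:expr-r} from Lemma~\ref{lem:secondRSpinLemma} (holomorphic except for the explicit $QLE_{g,n+1}$ correction in the $a=d=0$ case), and assemble the leftover $QLE_{g,n+1}$-coefficients over $m$, $r=2N+l-1$ into the stated prefactor. The paper's write-up is just slightly more compact -- it matches the whole symmetrized sum at once to \eqref{eq:corollaryDKPS}/\eqref{eq:expr-r} rather than classifying term by term -- and the bookkeeping you flag as the main obstacle is exactly the content of the paper's Equation~\eqref{eq:QLECorrection} together with the "up to a factor of $1/2$" remark.
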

\begin{proof} The proof repeats \emph{mutatis mutandis} the proof of the holomorphicity of~\cite[Equation~(42)]{KPS}. Indeed,  using Equation~\eqref{eq:SonDiagonal} and the fact that $S_{z_0} X(z_0) = 2 X(z_0)$ and $\Delta_{z_0} X(z_0) = 0$, we have:
\begin{align} \label{eq:ApplyS}
&  \cS_{z_0} \left[\pm \sum_{\substack{m\geq 1\\a\geq 0}} 
	\sum_{\substack{t \geq 1 \\ d\geq 0 }} A^{\pm m} B_a m^{2a-1} 
	\frac{(\pm m \frac PQ)^{t+2d} }{t!} \tilde Q_{d,\pm m,t}(z_0) \cW_{g-d-a,t,n} (w_{\llbracket t \rrbracket} \mid z_{\llbracket n \rrbracket}) \right] =
\\ \notag
& \pm  \sum_{\substack{m\geq 1\\a\geq 0}} A^{\pm m} B_a m^{2a-1} \sum_{d'\geq 0} \left(\pm m \frac PQ\right)^{2d'} [u^{2d'}]  \frac{u}{\zeta(u)} \frac{\zeta(u D_{X(z_0)})}{u D_{X(z_0)}} X(z_0)^{\pm m} \sum_{r\geq 0} \left(\pm m \frac PQ\right)^{r+1} \frac{1}{2^{r}}\times
\\ \notag
& 
\sum_{t=1}^{r+1}\frac{1}{t!} \sum_{\substack{2\alpha_1 + \dotsb + 2\alpha_t \\ + t = r+1}} \prod_{j=1}^t \bigg( \left[\vert_{w_j = z_0}\right] \frac{D_{X(w_j)}^{2\alpha_j}}{(2\alpha_j+1)!} \bigg) 
\sum_{\substack{I\sqcup J = \llbracket t \rrbracket \\ |I|\in 2\mathbb{Z} }} \prod_{i\in I} \Delta_{w_i} \prod_{j\in J} \cS_{w_j} \cW_{g-a+(t-r-1)/2,t,k} (w_{\llbracket t\rrbracket} \mid z_{\llbracket k\rrbracket}) \,.
\end{align}
According to Lemma~\ref{lem:secondRSpinLemma}, the last line in this expression is always a holomorphic function unless $a=d'=0$. In the latter case, we can subtract and add the term in the second line of Expression~\eqref{eq:expr-r}, that is,
\begin{align} \label{eq:QLECorrection}
&
\pm \sum_{\substack{m\geq 1}} \frac{A^{\pm m}} m X(z_0)^{\pm m} \sum_{r\geq 0} \left(\pm m \frac PQ\right)^{r+1} \frac{1}{2^{r}} \times
\\ \notag
& 
\left(\sum_{2k+\ell = r+1} \frac{k}{\ell! (2k)!} \left(S_{z_0} W_{0,1}(X(z_0))\right)^\ell \left(\Delta_{z_0} W_{0,1}(X(z_0))\right)^{2k-2}\right) QLE_{g,n+1}
\end{align}
We see that the difference of the right hand side of Equation~\eqref{eq:ApplyS} and Expression~\eqref{eq:QLECorrection} is an infinite series of functions holomorphic in $z_0$ at $z_0\to p$. Depending on the ring where $p$ belongs, we choose the $\pm$ signs exactly as in the corresponding two summands in the statement of Lemma~\ref{lem:holomorphicexpressions}. Then this infinite series of functions converges to a holomorphic function in the neighborhood of $p$ (the convergence is easy to see by exactly the same computation as in Sections~\ref{sec:ChangeFromQW-to-CheckH} and~\ref{sec:AnalyticCont}).

Thus, up to a factor of $1/2$,  Expression~\eqref{eq:BasicSymmetricExpression-R1} (respectively,~\eqref{eq:BasicSymmetricExpression-R2},~\eqref{eq:BasicSymmetricExpression-R3}) is equal in a neighborhood of the point $p$  to the sum of a holomorphic function and Expression~\eqref{eq:knownholo} (respectively,~\eqref{eq:knownholo-R2},~\eqref{eq:knownholo-R3}). Then the holomorphicity of Expressions~\eqref{eq:BasicSymmetricExpression-R1}-\eqref{eq:BasicSymmetricExpression-R3} implies the holomorphicity of Expressions~\eqref{eq:knownholo}-\eqref{eq:knownholo-R3}.
\end{proof}

\subsection{Analysis of the coefficients} The expressions in large brackets in \eqref{eq:knownholo}-\eqref{eq:knownholo-R3} can be further simplified. Recall that $X$ denotes $X(z_0)$ and $y$ denotes $y(z_0)$. We have:

\begin{lemma}\label{lem:coeffsimplify} The prefactors of ${QLE}_{g,n+1}$ 
in expressions~\eqref{eq:knownholo}-\eqref{eq:knownholo-R3} are equal to
\begin{align} \label{eq:FactorR1}
\dfrac{1}{2 \Delta y}\dfrac{P}{Q} \; \left(\dfrac{Az_0}{1-Az_0}-\dfrac{Az_0e^{-\frac{P}{Q}\Delta y}}{1-Az_0e^{-\frac{P}{Q}\Delta y}}\right) - \dfrac{1}{2 \Delta y}\dfrac{P}{Q} \; \left(\dfrac{A^{-1}z_0}{1-A^{-1}z_0}-\dfrac{A^{-1}z_0e^{-\frac{P}{Q}\Delta y}}{1-A^{-1}z_0e^{-\frac{P}{Q}\Delta y}}\right),
\\ 
\dfrac{1}{2 \Delta y}\dfrac{P}{Q} \; \left(\dfrac{A^{-1}z_0^{-1}e^{\frac{P}{Q}\Delta y}}{1-A^{-1}z_0^{-1}e^{\frac{P}{Q}\Delta y}}-\dfrac{A^{-1}z_0^{-1}}{1-A^{-1}z_0^{-1}}\right)
-\dfrac{1}{2 \Delta y}\dfrac{P}{Q} \; \left(\dfrac{A^{-1}z_0}{1-A^{-1}z_0}-\dfrac{A^{-1}z_0e^{-\frac{P}{Q}\Delta y}}{1-A^{-1}z_0e^{-\frac{P}{Q}\Delta y}}\right),
\\  \label{eq:FactorR3}
\dfrac{1}{2 \Delta y}\dfrac{P}{Q} \; \left(\dfrac{A^{-1}z_0^{-1}e^{\frac{P}{Q}\Delta y}}{1-A^{-1}z_0^{-1}e^{\frac{P}{Q}\Delta y}}-\dfrac{A^{-1}z_0^{-1}}{1-A^{-1}z_0^{-1}}\right)
-\dfrac{1}{2 \Delta y}\dfrac{P}{Q} \; \left(\dfrac{Az_0^{-1}e^{\frac{P}{Q}\Delta y}}{1-Az_0^{-1}e^{\frac{P}{Q}\Delta y}}-\dfrac{Az_0^{-1}}{1-Az_0^{-1}}\right),
\end{align}
respectively.
\end{lemma}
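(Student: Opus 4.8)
The plan is to compute each of the three bracketed prefactors of $QLE_{g,n+1}$ in \eqref{eq:knownholo}--\eqref{eq:knownholo-R3} in closed form by carrying out, in turn, the sum over $l$, the sum over $N$, and then the sum over $m$. Write $X=X(z_0)$ and $y=y(z_0)$. Near the branch point $p$ the deck transformation $\sigma$ fixes $X$, so $Sy=y(z_0)+y(\sigma z_0)$ and $\Delta y=y(z_0)-y(\sigma z_0)$, while $X(z)=z\exp(-\tfrac PQ y(z))$ together with $X(z_0)=X(\sigma z_0)$ gives $e^{\frac PQ y(z)}=z/X(z)$ and $\sigma z_0=z_0\,e^{-\frac PQ\Delta y}$. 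By Remark~\ref{rem:Agen} the point $p$ lies strictly inside one of the rings \eqref{eq:R1def}--\eqref{eq:R3def}, so a small $\sigma$-invariant neighbourhood of $p$ stays inside that ring and all the geometric series in $z_0$ (and in $\sigma z_0$) appearing below converge on it; the claimed formulas are thus identities of germs at $p$.

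For $p\in R_1$, i.e.\ \eqref{eq:knownholo}, the $l$-sum gives $\sum_{l\ge0}\frac{(Sy)^l}{2^l l!}(\frac PQ m)^l=\exp(\frac PQ m\,\frac{Sy}{2})$, and the $N$-sum, via the elementary identity $\sum_{N\ge1}\frac{N t^{2N-2}}{(2N)!}=\frac{\sinh t}{2t}$ (differentiate $\cosh t=\sum_{N\ge0}\frac{t^{2N}}{(2N)!}$) applied with $t=\frac PQ m\,\frac{\Delta y}{2}$, equals $\sum_{N\ge1}\frac{N(\Delta y)^{2N-2}}{2^{2N-2}(2N)!}(\frac PQ m)^{2N}=\frac{\frac PQ m}{\Delta y}\sinh(\frac PQ m\,\frac{\Delta y}{2})$. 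Multiplying these and using $e^{\frac PQ m\frac{Sy}{2}}\sinh(\frac PQ m\frac{\Delta y}{2})=\tfrac12(e^{\frac PQ m y(z_0)}-e^{\frac PQ m y(\sigma z_0)})=\tfrac12((z_0/X)^m-(\sigma z_0/X)^m)$, the resulting factor $X^{-m}$ cancels the $X^m$ in the prefactor $\frac{A^m-A^{-m}}{m}X^m$ and the extra power of $m$ cancels the $\tfrac1m$, leaving $\frac1{2\Delta y}\frac PQ\sum_{m\ge1}(A^m-A^{-m})(z_0^m-(\sigma z_0)^m)$. Summing the four geometric series $\sum_{m\ge1}(\alpha z_0)^m=\frac{\alpha z_0}{1-\alpha z_0}$ for $\alpha\in\{A,A^{-1}\}$ (and likewise with $z_0\mapsto\sigma z_0=z_0e^{-\frac PQ\Delta y}$) and regrouping reproduces \eqref{eq:FactorR1} verbatim.

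The remaining two rings are handled by the same mechanism with sign and exponent changes. For $p\in R_3$ one has $(-\frac PQ m)^{2N+l}=(-1)^l(\frac PQ m)^{2N+l}$, so the $l$-sum becomes $\exp(-\frac PQ m\frac{Sy}{2})$ (the $N$-sum is unchanged), the prefactor carries $X^{-m}$ and $A^{-m}-A^m$, and the computation proceeds as above with $z_0$ replaced by $z_0^{-1}$ and $(\sigma z_0)^{-1}=z_0^{-1}e^{\frac PQ\Delta y}$, producing \eqref{eq:FactorR3}. For $p\in R_2$ the factor $(-1)^{2N+l}X^{-m}-X^m$ splits the sum: the $(-1)^{2N+l}X^{-m}$ part is evaluated as in the $R_3$ case (but with prefactor only $A^{-m}$) and gives the first parenthesis of the middle formula in Lemma~\ref{lem:coeffsimplify}, while the $-X^m$ part is evaluated as in the $R_1$ case (with prefactor $-A^{-m}$) and gives the second parenthesis.

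I do not anticipate any conceptual obstacle; the whole argument is a finite computation resting on the two identities above and the summation of geometric series. The main point demanding care is keeping signs and exponents consistent simultaneously in the three rings --- $z_0$ versus $z_0^{-1}$, $+\frac{Sy}{2}$ versus $-\frac{Sy}{2}$, and the splitting of the $R_2$ bracket into an $R_3$-type and an $R_1$-type contribution --- and verifying that in each ring the relevant series over $m$ indeed converge near $p$, which is exactly where the hypothesis of Remark~\ref{rem:Agen} (branch points off the circles $|z|=|A|^{\pm1}$) is used.
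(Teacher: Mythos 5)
Your proposal is correct and follows essentially the same route as the paper's proof: exponentiate the $l$-sum, evaluate the $N$-sum as a $\sinh$ (the paper reindexes $N'=N-1$ and uses $1/(2N'+1)!$, which is the same identity), use the spectral-curve relation $Xe^{\frac PQ y(z)}=z$ to identify $Xe^{\frac PQ(\frac{Sy}{2}\pm\frac{\Delta y}{2})}$ with $z_0$ and $\sigma z_0=z_0e^{-\frac PQ\Delta y}$, and finish with geometric series. Your explicit treatment of the $R_2$ and $R_3$ cases (sign flip from $(-1)^l$ turning $e^{+\frac PQ m Sy/2}$ into $e^{-\frac PQ m Sy/2}$, and the splitting of the $R_2$ bracket into an $R_3$-type and an $R_1$-type piece) matches what the paper dismisses as ``close analogs.''
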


\begin{proof} It is convenient to change the summation index from $N$ to $N'=N-1$. Note that
\begin{align}
&\sum_{l,N'=0}^\infty\; \sum_{m=1}^\infty \dfrac{A^mX^m}{m}\left(\dfrac{P}{Q}\,m\right)^{2N'+l+2} \dfrac{(Sy)^l}{2^l\; l!}\dfrac{(\Delta y)^{2N'}}{2^{2N'}\;(2N')!}\; \dfrac{1}{2\,(2N'+1)}
\\ \nonumber
&=\dfrac{1}{2}\left(\dfrac{P}{Q}\right)^2\sum_{m=1}^\infty m^2 \dfrac{A^mX^m}{m}\sum_{l=0}^\infty \dfrac{1}{l!} \left(\dfrac{1}{2}\dfrac{P}{Q}\, m\, Sy\right)^l \sum_{N'=0}^\infty \dfrac{1}{(2N'+1)!}\left(\dfrac{1}{2}\dfrac{P}{Q}\, m\, \Delta y \right)^{2N'} \\ \nonumber
&=\dfrac{1}{2}\left(\dfrac{P}{Q}\right)^2\sum_{m=1}^\infty m^2 \dfrac{A^mX^m}{m} \exp\left(\dfrac{1}{2}\dfrac{P}{Q}\, m\, Sy\right) \dfrac{\exp\left(\dfrac{1}{2}\dfrac{P}{Q}\, m\, \Delta y\right)-\exp\left(-\dfrac{1}{2}\dfrac{P}{Q}\, m\, \Delta y\right)}{\dfrac{P}{Q}\, m\, \Delta y}\\ \nonumber
&=\dfrac{1}{2 \Delta y}\dfrac{P}{Q} \; \sum_{m=1}^\infty\left(\left(AXe^{\frac{1}{2}\frac{P}{Q}Sy}e^{\frac{1}{2}\frac{P}{Q}\Delta y}\right)^m-\left(AXe^{\frac{1}{2}\frac{P}{Q}Sy}e^{-\frac{1}{2}\frac{P}{Q}\Delta y}\right)^m\right)\\ \nonumber
&=\dfrac{1}{2 \Delta y}\dfrac{P}{Q} \; \left(\dfrac{AXe^{\frac{P}{Q}\left(\frac{Sy}{2}+\frac{\Delta y}{2}\right)}}{1-AXe^{\frac{P}{Q}\left(\frac{Sy}{2}+\frac{\Delta y}{2}\right)}}-\dfrac{AXe^{\frac{P}{Q}\left(\frac{Sy}{2}-\frac{\Delta y}{2}\right)}}{1-AXe^{\frac{P}{Q}\left(\frac{Sy}{2}-\frac{\Delta y}{2}\right)}}\right)\\ \nonumber
&=\dfrac{1}{2 \Delta y}\dfrac{P}{Q} \; \left(\dfrac{Az_0}{1-Az_0}-\dfrac{Az_0e^{-\frac{P}{Q}\Delta y}}{1-Az_0e^{-\frac{P}{Q}\Delta y}}\right)\,.
\end{align}
This computation as well as its close analogs for 
\begin{align}
&\sum_{l,N'=0}^\infty\; \sum_{m=1}^\infty \dfrac{A^{\pm m}X^{\pm m}}{m}\left(\pm \dfrac{P}{Q}\,m\right)^{2N'+l+2} \dfrac{(Sy)^l}{2^l\; l!}\dfrac{(\Delta y)^{2N'}}{2^{2N'}\;(2N')!}\; \dfrac{1}{2\,(2N'+1)}
\end{align}
imply the statement of the lemma.
\end{proof}

Assume that $p\not= 0,1,-1, A, A^{-1}$. It is a condition on $A$ that we allow according to Remark~\ref{rem:Agen}. Under this assumption we have:
\begin{lemma}\label{lem:coeffinvert} Expressions \eqref{eq:FactorR1}-\eqref{eq:FactorR3} give invertible holomorphic functions in the variable $z$ near $z=p$.
\end{lemma}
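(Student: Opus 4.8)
The plan is to collapse the three expressions \eqref{eq:FactorR1}--\eqref{eq:FactorR3} to a single simple form and then read off both holomorphicity and non-vanishing at $p$. Write $D_z:=z\,\tfrac{d}{dz}$, so that $D_z y=\tfrac{Az}{1-Az}-\tfrac{A^{-1}z}{1-A^{-1}z}=\tfrac1{1-Az}-\tfrac1{1-A^{-1}z}=-\tfrac{A^{-1}z^{-1}}{1-A^{-1}z^{-1}}+\tfrac{Az^{-1}}{1-Az^{-1}}$. The starting observation is that, since the spectral curve satisfies $X(z)=z\exp(-\tfrac PQ y(z))$ and $X(z_0)=X(\sigma(z_0))$ near $p$, the deck transformation is $\sigma(z_0)=z_0\,e^{-\frac PQ\Delta y}$; equivalently $z_0 e^{-\frac PQ\Delta y}=\sigma(z_0)$ and $z_0^{-1}e^{\frac PQ\Delta y}=\sigma(z_0)^{-1}$. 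Substituting these relations into \eqref{eq:FactorR1}, \eqref{eq:FactorR2} and \eqref{eq:FactorR3} and using the identities above, a short computation turns each of the three expressions into $\tfrac PQ\cdot\tfrac{\Delta(D_z y)}{2\,\Delta y}$, where $\Delta$ denotes antisymmetrization in $z_0$ with respect to $\sigma$. I expect this collapsing step — careful bookkeeping with the symmetries $A\leftrightarrow A^{-1}$ and $z\leftrightarrow z^{-1}$ in the three rings — to be the main obstacle; after it, all three cases become one uniform local computation near $p$.

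For holomorphicity at $p$, note that since $p\neq0,A,A^{-1}$ both $y$ and $D_z y=z\,y'$ are holomorphic near $p$, and none of the denominators $1-A^{\pm1}z_0$, $1-A^{\pm1}z_0^{\pm1}e^{\mp\frac PQ\Delta y}$ appearing in \eqref{eq:FactorR1}--\eqref{eq:FactorR3} vanishes at $p$. Moreover $y'(z)=\tfrac A{1-Az}-\tfrac{A^{-1}}{1-A^{-1}z}$ vanishes only when $A=A^{-1}$, which is excluded since $|A|<|A^{-1}|$; hence $y'(p)\neq0$. Because $\sigma$ is a simple deck transformation, in a local coordinate $t$ at $p$ with $\sigma\colon t\mapsto-t$ one has $\Delta f=2\,\tfrac{df}{dz}(p)\,\tfrac{dz}{dt}(0)\,t+O(t^3)$ for any $f$ holomorphic at $p$ (the even-order Taylor terms cancel). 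Applying this to $f=y$ and to $f=D_z y$ shows that $\Delta y$ has a zero of order exactly one at $p$ and that the quotient $\Delta(D_z y)/\Delta y$ extends holomorphically across $p$, with value $(D_z y)'(p)/y'(p)$ there.

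It then remains to show that the value $\tfrac PQ\cdot\tfrac{(D_z y)'(p)}{2\,y'(p)}$ is non-zero. Differentiating $D_z y=\tfrac{Az}{1-Az}-\tfrac{A^{-1}z}{1-A^{-1}z}$ gives $(D_z y)'(z)=\tfrac A{(1-Az)^2}-\tfrac{A^{-1}}{(1-A^{-1}z)^2}$, and expanding shows $(D_z y)'(z)=0$ is equivalent to $(A-A^{-1})(1-z^2)=0$. Since $A\neq A^{-1}$ and $p\neq\pm1$ we get $(D_z y)'(p)\neq0$; combined with $y'(p)\neq0$ and $\tfrac PQ\neq0$ (otherwise $X=z$ has no critical points and there is nothing to prove), this gives that \eqref{eq:FactorR1}--\eqref{eq:FactorR3} are holomorphic and non-vanishing at $p$, hence invertible in a neighborhood of $p$. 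In this way the whole argument rests exactly on the hypotheses $p\neq0,\pm1,A,A^{-1}$, which are precisely what keeps the relevant denominators and the derivatives $y'$ and $(D_z y)'$ away from zero at $p$.
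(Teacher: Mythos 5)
Your proof is correct, and it goes through. The one step you flagged as a potential obstacle — collapsing all three expressions to $\tfrac{P}{Q}\cdot\tfrac{\Delta(D_z y)}{2\Delta y}$ via $\sigma(z_0)=z_0e^{-\frac PQ\Delta y}$ — does work in all three rings (for the middle, unlabeled expression one uses $\tfrac{A^{-1}w^{-1}}{1-A^{-1}w^{-1}}=-\tfrac{1}{1-Aw}$ and the constant $-1$ drops out under $\Delta$), though note the label \eqref{eq:FactorR1} has no companion ``FactorR2'': the middle expression is unnumbered. The paper's own proof skips this unification: it simply expands $e^{\mp\frac PQ\Delta y}=1\mp\frac PQ\Delta y+O(\Delta y^2)$ in each of the three expressions separately, observes that the three limits exist and coincide, and evaluates them as $\tfrac{P^2}{2Q^2}\bigl(\tfrac{Ap}{(Ap-1)^2}-\tfrac{A^{-1}p}{(A^{-1}p-1)^2}\bigr)=\tfrac{P^2p(p^2-1)(A^{-1}-A)}{2Q^2(p-A)^2(p-A^{-1})^2}\neq 0$. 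Your value $\tfrac{P}{2Q}\cdot\tfrac{(D_zy)'(p)}{y'(p)}$ looks different but agrees with the paper's, because at a critical point of $X=z\exp(-\tfrac PQ y)$ one has $p\,y'(p)=Q/P$; your non-vanishing criterion $(A-A^{-1})(1-p^2)\neq 0$ is then literally the factor $p^2-1$ times $A^{-1}-A$ in the paper's formula (the extra factor $p$ there is harmless since $p=0$ is never a critical point). What your route buys is that the coincidence of the three cases is structural rather than a computation checked three times, and the parity argument in the local coordinate $t$ with $\sigma\colon t\mapsto -t$ makes the removability of the singularity of $\Delta(D_zy)/\Delta y$ explicit, which the paper leaves implicit in the phrase ``the corresponding limits exist.''
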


\begin{proof} At the branch point $z\to p$ we have $\Delta y \rightarrow 0$. The corresponding limits of ~\eqref{eq:FactorR1}-\eqref{eq:FactorR3} exist. Moreover, they all coincide and are equal to 
\begin{align}
&\frac{1}{2}\dfrac{P^2}{Q^2} \dfrac{A p}{(Ap-1)^2} - \frac{1}{2}\dfrac{P^2}{Q^2} \dfrac{A^{-1} p}{(A^{-1}p-1)^2} = \frac{1}{2}\dfrac{P^2 p (p^2-1) (A^{-1}-A) }{Q^2 (p-A)^2(p-A^{-1})^2}\,.
\end{align}
Under the assumption on $p$ that we made these limits are not equal to zero.
\end{proof}

\subsection{Proof of the quadratic loop equation} \label{sec:ProofQLE} In this Section we finally prove Theorem~\ref{thm:QLE-equiv} that the quadratic loop equations hold in our case, that is, we prove that  $QLE_{g,n+1}$  is holomorphic in $z_0$ at $z_0\to p$. The proof follows the same scheme as in~\cite{DKPS} and~\cite{KPS}, we use induction on the negative Euler characteristic $2g-2+(n+1)$.
\begin{proof}[Proof of Theorem~\ref{thm:QLE-equiv}]
The base case $g=n=0$ is evident since the function $y$ is holomorphic near the critical points of $dX$. Under the assumption that the quadratic loop equations are valid for all $2g'-2+(n'+1)<2g-2+(n+1)$, Corollary~\ref{cor:holomorphic} implies that $QLE_{g,n+1}$  is holomorphic in $z_0$ at $z_0\to p$ 
when multiplied 
by one of the prefactors~\eqref{eq:FactorR1}-\eqref{eq:FactorR3}, depending on the ring where the point $p$ belongs. According to Lemmata~\ref{lem:coeffsimplify} and~\ref{lem:coeffinvert} these prefactors are invertible holomorphic functions in $z_0$ at $z_0\to p$. 
Therefore, $QLE_{g,n+1}$ is holomorphic as well. 
\end{proof}


\bibliographystyle{alphaurl}
\bibliography{homfly_recursion}
\end{document}